\newcommand{\subonly}[1]{#1}
\newcommand{\tronly}[1]{}
\newcommand{\out}[1] {}
\newcounter{codeLineCntr}
\newif\ifnotes
\newcommand{\punt}[1]{}
\newcommand{\figref}[1]{Figure~\ref{fig:#1}}
\renewcommand{\eqref}[1]{Equation~(\ref{eq:#1})}
\newcommand{\proc}[1]{\ifmmode\mbox{\textsc{#1}}\else\textsc{#1}\fi}
  \newcommand{\func}[1]{\ifmmode\mathrm{#1}\else\textrm{#1}fi} %
\newcounter{remark}[section]
\newcommand{\im}[1]{\ensuremath{#1}}
\newcommand{\kw}[1]{\im{\mathtt{#1}}}
\newcommand{\ts}{\vdash}
\newcommand{\red}{\Downarrow}
\renewcommand{\hole}{\im{\Box}}
\newcommand{\set}[1]{\im{\{{#1}\}}}
\newcommand{\omitthis}[1]{}
\newenvironment{nop}{}{}
\newenvironment{sdisplaymath}{
\begin{nop}\small\begin{displaymath}}{
\end{displaymath}\end{nop}\ignorespacesafterend}
\newenvironment{smathpar}{
\begin{nop}\small\begin{mathpar}}{
\end{mathpar}\end{nop}\ignorespacesafterend}
\newenvironment{stackAux}[2]{%
\setlength{\arraycolsep}{0pt}
\begin{array}[#1]{#2}}{
\end{array}}
\newtheorem{theorem}{Theorem}[section]
\newtheorem{lemma}[theorem]{Lemma}
\newtheorem{corollary}[theorem]{Corollary}
\newtheorem{definition}[theorem]{Definition}
\newtheorem{proposition}[theorem]{Proposition}
\newcommand{\bnfalt}{{\bf \,\,\mid\,\,}}
\newcommand{\bnfdef}{{\bf ::=}}
\newcommand{\tyint}{\im{\kw{int}}}
\newcommand{\tybool}{\im{\kw{bool}}}
\newcommand{\tyrec}[1]{\kwrec{#1}}
\newcommand{\tyset}[1]{\kwset{#1}}
\newcommand{\semplus}{\im{\,\hat{+}\,}}
\newcommand{\semu}{\uplus}
\newcommand{\bindl}[2]{\im{{#1}.{#2}}}
\newcommand{\lbl}{\im{\ell}\xspace}
\newcommand{\kwtrue}{\im{\kw{true}}}
\newcommand{\kwfalse}{\im{\kw{false}}}
\newcommand{\kwc}{\im{\kw{c}}}
\newcommand{\kwf}{\im{\kw{f}}}
\newcommand{\kwrec}[1]{\langle #1 \rangle}
\newcommand{\kwfield}[2]{#1.#2}
\newcommand{\kwpair}[2]{\im{({#1},{#2})}}
\newcommand{\kwif}[3]{\im{\kw{if}(#1,#2,#3)}}
\newcommand{\kwsete}{\im{\emptyset}}
\newcommand{\kwset}[1]{\im{\set{#1}}}
\newcommand{\kwsets}[1]{\im{\set{#1}}}
\newcommand{\kwsetu}[2]{\im{{#1} \cup {#2}}}
\newcommand{\kwsetise}[1]{\im{\kw{empty}~{#1}}}
\newcommand{\kwlet}[3]{\im{\kw{let}~{#1}\:=\:{#2}~\kw{in}~{#3}}}
\newcommand{\kwsum}[1]{\im{\kw{sum}~{#1}}}
\newcommand{\kwcomp}[3]{\im{\bigcup\set{{#1\!}\mid{#2 \in #3}}}}
\newcommand{\trsub}{\sqsubseteq}
\newcommand{\trbslice}{\searrow}
\newcommand{\tri}{\triangleright}
\newcommand{\tremp}{\im{\hole}}
\newcommand{\mktrace}[1]{\im{{#1}}}
\newcommand{\trc}{\kwc}
\newcommand{\trf}{\kwf}
\newcommand{\trpair}[2]{\mktrace{\kwpair{#1}{#2}}}
\newcommand{\trsetise}[1]{\kw{empty}~{#1}}
\newcommand{\trsets}[1]{\{#1\}}
\newcommand{\trsetu}[2]{\mktrace{{#1}~\cup~{#2}}}
\newcommand{\trcomp}[4]{\kwcomp{#1}{#2}{#3} \tri #4}
\newcommand{\trrec}[1]{\kwrec{#1}}
\newcommand{\trfield}[2]{\kwfield{#1}{#2}}
\newcommand{\trif}[5]{\mktrace{{\kw{if}(#1,#2,#3)}}\tri_{#4} #5}
\newcommand{\trifthen}[4]{\trif{#1}{#2 }{#3}{\kw{true}}{#4}}
\newcommand{\trifelse}[4]{\trif{#1}{#2}{#3}{\kw{false}}{#4}}
\newcommand{\trlet}[3]{\mktrace{{\kw{let}}~{#1} = {#2}~\kw{in}~{#3}}}
\newcommand{\trsetsum}[1]{\mktrace{\kwsum{#1}}}
\newcommand{\trrun}{\im{\curvearrowright}}
\newcommand{\simat}[1]{\eqat{#1}}
\newcommand{\eqat}[1]{\im{\eqsim_{#1}}}
\newcommand{\dom}{\mathrm{dom}}
\newcommand{\sqleq}{\sqsubseteq}
\newcommand{\sqgeq}{\sqsupseteq}
\newenvironment{umutmathfig}{\begin{fixedCodeFrame}\begin{displaymath}}{\end{fixedCodeFrame}\end{displaymath}\nocaptionrule}
\newenvironment{amalmathfig}{\begin{sdisplaymath}}{\end{sdisplaymath}}
\newenvironment{amalsyntaxfig}{\begin{amalmathfig}\begin{array}{@{}l@{\quad}r@{~~}c@{\quad}l}}{\end{array}\end{amalmathfig}}
\newcommand{\vhole}{\Box}
\newcommand{\vany}{\Diamond}
\newcommand{\vrec}[1]{\kwrec{#1}}
\newcommand{\vset}[1]{\set{#1}}
\newcommand{\vc}{\kwc}
\newcommand{\vpair}[2]{(#1,#2)}
\newcommand{\uneval}{\mathrel{{\searrow}\!\!\!\!\!\!\!{\searrow}}}
\newcommand{\udot}{\mathbin{\dot{\cup}}}
\newcommand{\rpat}{\mathbf{rp}}
\newcommand{\spat}{\mathbf{sp}}
\newcommand{\Slicer}{\textsf{Slicer}\xspace}
\newcommand{\NRCSlicer}{\textsf{NRCSlicer}\xspace}
\begin{document}
\title{Database Queries that Explain their Work}

\authorinfo{James Cheney}{University of Edinburgh}{jcheney@inf.ed.ac.uk}
\authorinfo{Amal Ahmed}{Northeastern University}{amal@ccs.neu.edu}
\authorinfo{Umut A. Acar}{Carnegie Mellon University \& INRIA-Rocquencourt}{umut@cs.cmu.edu}

\maketitle

\begin{abstract}
  Provenance for database queries or scientific workflows is often
  motivated as providing \emph{explanation}, increasing understanding
  of the underlying data sources and processes used to compute the
  query, and \emph{reproducibility}, the capability to recompute the
  results on different inputs, possibly specialized to a part
  of the output.  Many provenance systems claim to provide such
  capabilities; however, most lack formal definitions or guarantees of
  these properties, while others provide formal guarantees only for
  relatively limited classes of changes.  Building on recent work on
  provenance traces and slicing for functional programming languages,
  we introduce a detailed tracing model of provenance for multiset-valued Nested
  Relational Calculus, define trace slicing
  algorithms that extract subtraces needed to explain or recompute
  specific parts of the output, and define query slicing and
  differencing techniques that support explanation.  We state and
  prove correctness properties for these techniques and present a
  proof-of-concept implementation in Haskell.
\end{abstract}


\keywords
provenance, database queries, slicing




\section{Introduction}

Over the past decade, the use of complex computer systems in science
has increased dramatically: databases, scientific workflow systems,
clusters, and cloud computing based on frameworks such as
MapReduce~\cite{mapreduce} or PigLatin~\cite{piglatin} are now
routinely used for scientific data analysis.  With this shift to
computational science based on (often) unreliable components and noisy
data comes decreased transparency, and an increased need to understand
the results of complex computations by auditing the underlying
processes.

This need has motivated work on \emph{provenance} in databases,
scientific workflow systems, and many other
settings~\cite{buneman01icdt,moreau10ftws}.  There is now a great deal
of research on extending such systems with rich provenance-tracking
features.  Generally, these systems aim to provide
high-level explanations intended to aid the user in understanding how
a computation was performed, by recording and presenting additional
``trace'' information.

Over time, two distinct
approaches to provenance have emerged: (1) the use of \emph{annotations} propagated
through database queries to illustrate  \emph{where-provenance}
linking results to source data~\cite{buneman01icdt}, \emph{lineage} or
\emph{why-provenance} linking result records to sets of witnessing
input records~\cite{DBLP:journals/tods/CuiWW00}, or
\emph{how-provenance} describing how results were produced via
algebraic expressions~\cite{DBLP:conf/pods/2007/GreenKT07}, 
and (2) the use of graphical \emph{provenance traces} to illustrate
how workflow computations construct final results from inputs and
configuration
parameters~\cite{bose05cs,DBLP:conf/dils/HiddersKSTB07,DBLP:journals/sigmod/SimmhanPG05}.
However, to date few systems formally specify the semantics of
provenance or give formal guarantees characterizing how provenance
``explains'' results.

For example, scientists often conduct parameter sweeps to search for
interesting results. The provenance trace of such a computation may be
large and difficult to navigate.  Once the most promising
results have been identified, a scientist may want to extract just
that information that is needed to explain the result, without showing
all of the intermediate search steps or uninteresting results.
Conversely, if the results are counterintuitive, the scientist may
want to identify the underlying data that contributed to the anomalous
result.  Missier et al.~\cite{missier11ijdc} introduced the idea of a
``golden trail'', or a subset of the provenance trace that explains,
or allows reproduction of, a high-value part of the output.  They
proposed techniques for extracting ``golden trails'' using recursive
Datalog queries over provenance graphs; however, they did not propose
definitions of correctness or reproducibility.

It is a natural question to ask how we know when a proposed solution,
such as Missier et al.'s ``golden trail'' queries, correctly
explains or can be used to correctly reproduce the behavior of the
original computation.  It seems to have been taken for granted that simple
graph traversals suffice to at least overapproximate the desired
subset of the graph.  As far as we know, it is still an open question
how to define and prove such correctness properties for most
provenance techniques.  In fact, these properties might be defined and
formalized in a number of ways, reflecting different modeling choices
or requirements.  In any case, in the absence of clear statements and
proofs of correctness, claims that different forms of provenance
``explain'' or allow ``reproducibility'' are difficult to evaluate objectively.

The main contribution of this paper is to formalize and prove the
correctness of an approach to fine-grained provenance for database
queries.  We build on our approach developed in prior work, which we
briefly recapitulate.  Our approach is based on analogies between the
goals of provenance tracking for databases and workflows, and those of
classical techniques for program comprehension and analysis,
particularly \emph{program slicing}~\cite{weiser81icse} and
\emph{information flow}~\cite{sabelfeld03sac}.
Both program slicing and information flow rely critically on notions
of \emph{dependence}, such as the familiar control-flow and data-flow
dependences in programming languages.

We previously introduced a provenance model for NRC (including
difference and aggregation operations) called \emph{dependency
  provenance}~\cite{cheney11mscs}, and showed how it can be used to
compute \emph{data slices}, that is, subsets of the input to the query
that include all of the information relevant to a selected part of the
output.  Some other forms of provenance for database query languages,
such as how-provenance~\cite{DBLP:conf/pods/2007/GreenKT07}, satisfy similar formal guarantees that can be
used to predict how the output would change under certain classes of
input changes, specifically those expressible by semiring
homomorphisms.  For example, Amsterdamer et al.'s
system~\cite{amsterdamer11pvldb} is based on the semiring provenance
model, so the effects of deletions on parts of the output can be
predicted by inspecting their
provenance, but other kinds of changes are not supported.

More recently, we proposed an approach to provenance called
\emph{self-explaining computation}~\cite{cheney13pbf} and explored it
in the context of a general-purpose functional programming
language~\cite{acar13jcs,perera12icfp}.  In this approach, detailed
execution traces are used as a form of provenance. Traces explain
results in the sense that they can be \emph{replayed} to recompute the
results, and they can be \emph{sliced} to obtain smaller traces that
provide more concise explanations of parts of the output.  Trace slicing
also produces a slice of the input showing what was needed by the
trace to compute the output.  Moreover,
other forms of provenance can be extracted from traces (or slices),
and we also showed that traces can be used to compute program slices
efficiently through lazy evaluation.
Finally, we
showed how traces support \emph{differential slicing} techniques that
can highlight the differences between program runs in order to explain
and precisely localize bugs in the program or errors in the input
data.

Our long-term vision is to develop self-explaining computation
techniques covering all components used in day-to-day scientific
practice.  Databases are probably the single most important such
component.  Since our previous work already applies to a
general-purpose programming language, one way to proceed would be to
simply implement an interpreter for NRC in this language, and inherit
the slicing behavior from that.  However, without some further
inlining or optimization, this naive strategy would yield traces that
record both the behavior of the NRC query and its interpreter, along
with internal data structures and representation choices whose details
are (intuitively) irrelevant to understanding the high-level behavior
of the query.

\subsection{Technical overview}

In this paper, we develop a tracing semantics and trace slicing
techniques tailored to NRC (over a multiset semantics).  This
semantics evaluates a query $Q$ over an input database
(i.e. environment $\gamma$ mapping relation names to table values),
yielding the usual result value $v$  as well as a \emph{trace} $T$.
Traces are typically large and difficult to decipher, so we consider a
scenario where a user has run $Q$, inspected the results $v$, and requests
an explanation for a part of the result, such as a field value of a
single record.  As in our previous work for functional programs, we
use partial values with ``holes'' $\vhole$ to describe parts of the
output that are to be explained.  For example, if the result of a
program is just a pair $(1,2)$ then the pattern $(1,\vhole)$ can be
used to request an explanation for just the first component.  Given a
partial value $p$ matching the output, our approach computes a
``slice'' consisting of a partial trace and
a partial input environment, where components not necessary for
recomputing the explained output part $p$ have been deleted.

The main technical contribution of this paper over our previous
work~\cite{acar13jcs,perera12icfp} is its treatment of tracing and
slicing for collections.  There are two underlying technical
challenges; we illustrate both (and our solutions) via a simple
example query $Q = \sigma_{A < B}(R) \cup \rho_{A \mapsto B,B \mapsto
  A}(\sigma_{A\geq B}(R))$ over a table $R$ with attributes $A,B$.
Here, $\sigma_\phi$ is relational selection of all tuples satisfying a
predicate $\phi$ and $\rho_{A \mapsto B,B \mapsto A}$ is renaming.
Thus, $Q$ simply swaps the fields of records where $A \geq B$, and
leaves other records alone.

The first challenge is how to address elements of multisets reliably
across different executions and support propagation of addresses in
the output backwards towards the input.  Our solution is to use a
mildly enriched semantics in which multiset elements carry explicit
labels; that is, we view multisets of elements from $X$ as functions
$I \to X$ from some index set to $X$.  
For example, if $R$ is labeled as follows and we use
this enriched semantics to evaluate the above query $Q$ on $R$, we get
a result:
\[R = \begin{array}{c|ccc}
  id & A & B & C\\
\hline
 {}[r_1] & 1 & 2 & 7\\
 {}[r_2] & 2 & 3 & 8\\
 {}[r_3] & 4 & 3 & 9
\end{array}\quad
Q(R) = \begin{array}{c|ccc}
  id & A & B & C\\
\hline
{}[1,r_1] & 1 & 2 & 7\\
{}[1,r_2] & 2 & 3 & 8\\
{}[2,r_3] & 3 & 4 & 9
\end{array}
\]
where in each case the $id$ column contains a distinct index $r_i$.
In $Q(R)$, the first `1' or `2' in each index indicates whether the
row was generated by the left or right subexpression in the union
('$\cup$').  
In general, we use sequences
of natural numbers $[i_1,\ldots,i_n] \in \mathbb{N}^*$ as indices, and
we maintain a stronger invariant: the set of indexes used in a
multiset must form a \emph{prefix code}.  We define a semantics for
NRC expressions over such collections that is fully deterministic and
does not resort to generation of fresh intermediate labels; in
particular, the union operation adjusts the labels to maintain
distinctness.  

The second technical challenge involves extending \emph{patterns} for
partial collections.  In our previous work, any subexpression of a
value can be replaced by a hole.
This works well in a
conventional functional language, where typical values (such as lists
and trees) are essentially initial algebras built up by structural
induction.  However, when we consider unordered collections such as bags, our previous
approach becomes awkward.  

For example, if we want to use a pattern to
focus on only the $B$ field of the second record
in query result $Q(R)$, we can only do this by deleting the other
record values, and the smallest such pattern is
$\{[1,r_1].\vhole,[2,r_2].\kwrec{A{:}\vhole,B{:}3,C{:}\vhole},[2,r_3].\vhole\}$.
This is tolerable if there are only a few elements, but if there are hundreds or
millions of elements and we are only interested in one, this is a
significant overhead.  Therefore, we introduce \emph{enriched
  patterns} that allow us to replace entire subsets with holes, for
example, $p' = \{[2,r_2]. \kwrec{B{:}3;\vhole}\} \udot \vhole$.  Enriched patterns
are not just a convenience; we show experimentally that they allow
traces and slices to be smaller (and computed faster) by an order of
magnitude or more.

\subsection{Outline}

The rest of this paper is structured as follows. Section~\ref{sec:nrc}
reviews the (multiset-valued) Nested Relational Calculus and presents
our tracing semantics and deterministic labeling scheme.
Section~\ref{sec:slicing} presents trace slicing, including a simple
form of patterns.  Section~\ref{sec:patterns} shows how to enrich our
pattern language to allow for partial record and partial set patterns,
which considerably increase the expressiveness of the pattern
language, leading to smaller slices.  Section~\ref{sec:qdslicing}
presents the query slicing algorithm and shows how to compute
differential slices.  Section~\ref{sec:examples} presents additional
examples and discussion.  Section~\ref{sec:implementation} presents
our implementation demonstrating the
benefits of laziness and enriched patterns for trace slicing.
Section~\ref{sec:related} discusses related and future work and
Section~\ref{sec:concl} concludes.

Due to space limitations, and in order to make room for examples and
high-level discussion, some (mostly routine) formal details and proofs
are relegated to 
\begin{insubonly}
  a companion technical report~\cite{tr}.
\end{insubonly}
\begin{intronly}
  the appendix.
\end{intronly}

\section{Traced Evaluation for NRC}
\label{sec:nrc}

The nested relational calculus~\cite{buneman95tcs} is a simply-typed
core language with collection types that can express queries on nested
data similar to those of SQL on flat relations, but has simpler syntax and cleaner
semantics.  In this section, we show how to extend the ideas and
machinery developed in our previous work on traces and slicing for
functional languages~\cite{perera12icfp} to NRC.  Developing formal
foundations for tracking provenance in the presence of unordered
collections presents a number of challenges not encountered in the
functional programming setting, as we will explain.


\begin{figure}
\[\small
\begin{array}{@{}l@{~~}r@{~}c@{~~}l}
\mbox{Operations} & \kwf & \bnfdef & 
+ \bnfalt 
- \bnfalt 
* \bnfalt 
/ \bnfalt 
= \bnfalt
< \bnfalt
\leq \bnfalt \cdots
\\[1mm]
 \mbox{Expressions} & 
e & \bnfdef & 
\kwc \bnfalt
\kwf(e_1,\ldots,e_n) \bnfalt
x \bnfalt
\kwlet{e}{x}{e'}
\\[1mm]
&&\bnfalt &
\kwrec{A_1 : e_1,\ldots,A_n:e_n} \bnfalt
\kwfield{e}{A}
\bnfalt \kwif{e}{e'}{e''}
\\[1mm]
&&\bnfalt&
\kwsete \bnfalt
\kwsets{e} \bnfalt 
\kwsetu{e_1}{e_2} \bnfalt
\kwcomp{e'}{x}{e} 
\\[1mm]
&&\bnfalt&
\kwsetise{e} \bnfalt
\kwsum{e} \bnfalt \cdots
\\[1mm]
\mbox{Labels} &
\lbl & \bnfdef &  \lbl.\lbl' \mid \epsilon \mid n
\\[1mm]
\mbox{Values} &
v & \bnfdef &  
\kwc \bnfalt \kwrec{A_1:v_1,\ldots A_n:v_n} \\[1mm]
&&\bnfalt&
\kwsete \bnfalt 
\kwset{\bindl{\lbl_1}{v_1}, \ldots, \bindl{\lbl_n}{v_n}}
\\[1mm]
\mbox{Environments}
& \gamma & \bnfdef & 
[x_1 \mapsto v_1, \ldots, x_n \mapsto v_n]
\\[1mm]

\mbox{Traces} & 
T & \bnfdef & \cdots \bnfalt
\trifthen{T}{e'}{e''}{T} \bnfalt
\trifelse{T}{e'}{e''}{T}
\\[1mm]
&&\bnfalt&
\trcomp{e}{x}{T}{\Theta} 
\\[1mm]
\mbox{Trace Sets} & 
\Theta & \bnfdef & \{\lbl_1.T_1,\ldots,\lbl_n.T_n\}\\[1mm]
\mbox{Types} & 
\tau & \bnfdef & 
\tyint \mid \tybool \mid \tyrec{A_1:\tau_1,\ldots,A_n:\tau_n} \mid \tyset{\tau}\\[1mm]
\mbox{Type Contexts} &
\Gamma & \bnfdef &
x_1 : \tau_1,\ldots,x_n:\tau_n
\end{array}
\]
\caption{NRC expressions, values, traces, and types.}
\label{fig:nrc-syntax}
\end{figure}

\subsection{Syntax and Dynamic Semantics}
\label{sec:nrc::syn}
\begin{figure*}[tb]
\fbox{$\gamma,e \red v,T$}
\vspace{-2ex}
\begin{smathpar}
\inferrule*{\strut
}{
\gamma, \kwc \red \kwc,\trc
}
\and
\inferrule*{\gamma,e_1 \red \kwc_1,T_1\\
\cdots\\
\gamma,e_n \red \kwc_n,T_n
}{
\gamma,\kwf(e_1,\ldots,e_n) \red \hat{\kwf}(\kwc_1,\ldots,\kwc_n),\trf(T_1,\ldots,T_n)
}
\and
\inferrule*{\strut
}{
\gamma,x \red \gamma(x),x
}
\and
\inferrule*{
\gamma,e_1 \red v_1,T_1\\
\gamma[x\mapsto v_1],e_2 \red v_2,T_2
}{
\gamma,\kwlet{x}{e_1}{e_2} \red v_2,\trlet{x}{T_1}{T_2}
}
\and
\inferrule*{
\gamma,e_1 \red v_1,T_1\\
\cdots\\
\gamma,e_n \red v_n,T_n
}{
\gamma,\kwrec{A_1{:}e_1,\ldots,A_n{:}e_n} \red \kwrec{A_1{:}v_1,\ldots,A_n{:}v_n},\trrec{A_1{:}T_1,\ldots,A_n{:}T_n}
}
\and
\inferrule*{
\gamma,e \red \kwrec{A_1{:}v_1,\ldots,A_n{:}v_n},T
}{
\gamma,\kwfield{e}{A_i} \red v_i, \trfield{T}{A_i}
}
\and
\inferrule*{
\gamma,e \red \kwtrue,T \\
\gamma,e_1 \red v_1,T_1
}{
\gamma,\kwif{e}{e_1}{e_2} \red v_1, \trifthen{T}{e_1}{e_2}{T_1}
}
\and
\inferrule*{
\gamma,e \red \kwfalse,T \\
\gamma,e_2\red v_2,T_2
}{
\gamma,\kwif{e}{e_1}{e_2} \red v_2, \trifelse{T}{e_1}{e_2}{T_2}
}
\and
\inferrule*
{
\strut
}
{
\gamma,\emptyset \red \emptyset, \emptyset
}
\and
\inferrule*
{
\gamma, e \red v, T 
}
{\gamma,\kwsets{e} \red \kwsets{\epsilon.v}, \trsets{T}}
\and
\inferrule*
{
\gamma,e_1 \red v_1, T_1 \\
\gamma,e_2 \red v_2, T_2
}
{
\gamma,\kwsetu{e_1}{e_2} 
\red 1 \cdot v_1 \semu 2\cdot v_2, \trsetu{T_1}{T_2}
}
\and
%
%
%
\inferrule*{
\gamma,e \red v,T \\
\gamma,x\in v, e' \red^* v',\Theta
}{
\gamma,\kwcomp{e'}{x}{e} \red v', \trcomp{e'}{x}{T}{\Theta}
}
\and
\inferrule*
{
\gamma,e \red \kwset{\bindl{\lbl_1}{v_1}, \ldots, \bindl{\lbl_n}{v_n}}, T 
}
{
\gamma,\kwsum{e} \red v_1 \hat{+} \ldots \hat{+} v_n, \trsetsum{T}
}
\\
\inferrule*
{
\gamma,e \red v, T \\
v  = \kwsete
}
{
\gamma,
\kwsetise{e} \red \kwtrue, \trsetise{T}
}
\and
\inferrule*
{
\gamma,e \red v, T \\
v  \not= \kwsete
}
{
\gamma,\kwsetise{e} \red \kwfalse, \trsetise{T}
}
%
\end{smathpar}
\fbox{$\gamma,x\in v, e \red^* v,\Theta$}
\vspace{-2ex}
\begin{smathpar}
\inferrule*{\strut}
{
\gamma,x\in \emptyset, e \red^* \emptyset, \emptyset
}
\and
\inferrule*{
\gamma,x \in v_1, e \red^* v_1',\Theta_1\\
\gamma,x \in v_2, e \red^* v_2', \Theta_2
}{
\gamma,x\in v_1 \semu v_2, e \red^* v_1' \semu v_2', \Theta_1 \semu
\Theta_2
}
\and 
\inferrule*{
\gamma[x\mapsto v], e \red v', T
}{
\gamma,x\in \{\lbl.v\}, e \red^* \lbl \cdot v', \{\lbl.T\}
}
\end{smathpar}
\caption{Traced evaluation.}
\label{fig:nrc-dynamic}
\end{figure*}

\figref{nrc-syntax} presents the abstract syntax of NRC expressions,
values, and traces.  The expression $\kwsete$ denotes the empty
collection, $\kwsets{e}$ constructs a singleton collection, and
$\kwsetu{e_1}{e_2}$ takes the (multiset) union of two collections.
The operation $\kwsum{e}$ computes the sum of a collection of
integers, while the predicate $\kwsetise{e}$ tests whether the
collection denoted by $e$ is empty.  Additional aggregation operations
such as count, maximum and average can easily be accommodated.
Finally, the comprehension operation $\kwcomp{e'}{x}{e}$ iterates over
the collection obtained by evaluating $e$, evaluating $e'(x)$ with $x$
bound to each element of the collection in turn, and returning a
collection containing the union of all of the results.  We sometimes
consider pairs $(e_1,e_2)$, a special case of records
$\vrec{\#_1:e_1,\#_2:e_2}$ using two designated field names $\#_1$ and
$\#_2$.  Many trace forms are similar to those for expressions; only
the differences are shown.

Labels are sequences $ \lbl = [i_i,\ldots,i_n] \in \mathbb{N}^*$,
possibly empty.  The empty sequence is written $\epsilon$, and labels
can be concatenated $\lbl \cdot \lbl'$; concatenation is associative.
Record field names are written $A,B,A_1,A_2,\ldots$.

Values in NRC include constants $\vc$, which we assume include at
least booleans and integers.  Record values are essentially partial
functions from field names to values, written
$\kwrec{A_1:v_1,\ldots,A_n:v_n}$. Collection values are essentially
partial, finite-domain functions from labels in $\mathbb{N}^*$ to
values, which we write $\{\lbl_1.v_1,\ldots,\lbl_n.v_n\}$.  Since they
denote functions, collections and records are identified up to
reordering of their elements, and their field names or labels are
always distinct.  We write $\lbl \cdot v$ for the operation
that prepends $\lbl$ to each of the labels in a set $v$, that is,
\[\lbl \cdot \{\lbl_1.v_1,\ldots,\lbl_n.v_n\} = \{\lbl \cdot
\lbl_1.v_1,\ldots,\lbl \cdot \lbl_n.v_n\}\;.\]
Other operations on labels and labeled collections will be introduced in due course.

The labels on the elements of a collection provide us with a
persistent address for a particular element of the collection.  This
capability is essential when asking and answering provenance queries
about parts of the source or output data, and when tracking
fine-grained dependencies.

Both expressions and traces are subject to a type system.  NRC types
include collection types $\set{\tau}$ which are often taken to be
sets, bags (multisets), or lists, though in this paper, we consider
multiset collections only.  However, types do not play a significant
role in this paper so the typing rules are omitted.  For expressions,
the typing judgment $\Gamma \vdash e : \tau$ is standard and
the typing rules for trace well-formedness $\Gamma \vdash T : \tau$
are presented in
\begin{insubonly}
  the companion technical report~\cite{tr}.
\end{insubonly}
\begin{intronly}
  Appendix~\ref{app:auxiliary}.
\end{intronly}


\paragraph*{Traced evaluation}

NRC traces include a trace form corresponding to each of the
expressions described above.  The structure of the traces is best
understood by
inspecting 
\begin{intronly}
  the typing rules (\figref{nrc-trace-types}) and 
\end{intronly}
the traced evaluation rules (\figref{nrc-dynamic}), which define a
judgment $\gamma,e \red v,T$ indicating that evaluating an expression
$e$ in environment $\gamma$ yields a value $v$ and a trace $T$.  We
assume an environment $\Sigma$ associating constants and function
symbols with their types, and write $\hat{\kwf}$ or $\hat{+}$ for the
semantic operations corresponding to $f$ or $+$, and so on. In most
cases, the trace form is similar to the expression form; for example
the trace of a constant or variable is a constant trace $\trc$, the
trace of a primitive operation $\kwf(e_1,\ldots,e_n)$ is a primitive
operation trace $\trf(T_1,\ldots,T_n)$ applied to the traces $T_i$ of
the arguments $e_i$, the trace of a record expression is a trace
record constructor $\trrec{A_1:T_1,\ldots,A_n:T_n}$, and the trace of
a field projection $\kwfield{e}{A}$ is a trace $\trfield{T}{A}$.
Also, the trace of a let-binding is a let-binding trace
$\trlet{x}{T_1}{T_2}$, where $x$ is bound in $T_2$.  In these cases,
the traces mimic the expression structure.

The traced evaluation rules for conditionals illustrate that traces
differ from expressions in recording control flow decisions.  The
trace of a conditional is a conditional trace
$\trif{T}{e_1}{e_2}{b}{T'}$ where $T$ is the trace of the conditional
test, $b$ is the Boolean value of the test $e_1$,
and $T'$ is the trace of the taken branch.  The expressions $e_1$ and
$e_2$ are not strictly necessary but retained to preserve structural
similarity to the original expression.
 
  The trace of $\emptyset$ is a constant trace $\emptyset$.  To
  evaluate a singleton-collection constructor $\set{e}$, we evaluate
  $e$ to obtain a value $v$ and return the singleton
  $\set{\epsilon.v}$ with empty label $\epsilon$.  We return the
  singleton trace $\set{{T}}$ recording the trace for the evaluation
  of the element.  To evaluate the union of two expressions, we
  evaluate each one and take the semantic union (written $\semu$)
  of the resulting collections, with a `1' or `2' concatenated onto the
  beginning of each label to reflect whether each element came from
  the first or second part of the union; the union trace $T_1 \cup
  T_2$ records the traces for the evaluation of the two
  subexpressions.  For $\kwsum{e}$, evaluating $e$ yields a collection of
  numbers whose sum we return, together with a sum trace
  $\trsetsum{T}$ recording the trace for evaluation of $e$.
  Evaluation of emptiness tests $\kwsetise{e}$ is analogous, yielding a trace
  $\trsetise{T}$.

  To evaluate a comprehension $\kwcomp{e'}{x}{e}$, we first evaluate
  $e$, which yields a collection $v$ and trace $T$, and then (using
  auxiliary judgment $\gamma,x\in v, e' \red^* v', \Theta$) evaluate
  $e'$ repeatedly with $x$ bound to each element $v_i$ of the
  collection $v$ to get resulting values $v_i'$ and corresponding
  traces $T_i'$.  We return a new collection $v' = \{\lbl_1 \cdot
  v_1', \ldots,\lbl_n \cdot v_n'\}$; similarly we return a labeled set
  of traces $\Theta = \{\lbl_1.T_1,\ldots,\lbl_n.T_n\}$.  (Analogously
  to values, trace sets are essentially finite partial functions from
  labels to traces).  For each of these collections, we prepend the
  appropriate label $\lbl_i$ of the corresponding input element.

  A technical point of note is that the resulting trace $T_i$ may
  contain free occurrences of $x$.  As in our trace semantics for
  functional programs, these variables serve as markers in $T_i$ that
  will be critical for the trace replay semantics.  The comprehension
  trace records, using the notation
  $\trcomp{e'}{x}{T}{\{\lbl_1.T_1,\ldots,\lbl_n.T_n\}}$, that the
  trace $T$ was used to compute a multiset $v$, and $x$ was bound to each
  element $\lbl_i.v_i$ in $v$ in turn, with trace $T_i$ showing how
  the corresponding subset of the result was computed. The
  comprehension trace also records the expression $e'$ and bound
  variable $x$, which are again not strictly necessary but preserve
  the structural similarity to the original expression.

At this point it is useful to provide some informal motivation for the
labeling semantics, compared for example to other semantics that use
annotations or labels as a form of provenance.  We do not view the
labels themselves as
provenance; instead, they provide a useful infrastructure for traces, which do
capture a form of provenance.  Moreover, by calculating the label of
each part of an intermediate or final result deterministically (given
the labels on the input), we provide a way to reliably refer to parts
of the output, which otherwise may be unaddressable in a
multiset-valued semantics.  This is essential for supporting
compositional slicing for operations such as let-binding or
comprehension, where the output of one subexpression becomes a part of
the input for another.

A central point of our semantics is that evaluation preserves the
property that labels uniquely identify the elements of each multiset.
This naturally assumes that the labels on the input collections are
distinct.  In fact, a stronger property is required: evaluation preserves
the property that set labels form a \emph{prefix code}.  In the
following, we write $x \leq y$ to indicate that sequence $x$ is a
prefix of sequence $y$.

\begin{definition}
  A \emph{prefix code} over $\Sigma$ is a set of sequences $L
  \subseteq \Sigma^*$ such that for every $x,y \in L$, if $x \leq y$
  then $x = y$. A \emph{sub-prefix code} of a prefix code $L$ is a
  prefix code $L'$ such that for all $x \in L$ there exists $y \in L'$
  such that $y \leq x$. We write $L' \leq L$ to indicate that $L'$ is
  a sub-prefix code of $L$.  We say that $L$ and $L'$ are
  \emph{prefix-disjoint} when no element of $L$ is a prefix of an
  element of $L'$ and vice versa.
\end{definition}

Let $v$ be a collection $v = \{\lbl_1.v_1,\ldots,\lbl_n.v_n\}$.  We
define the domain of $v$ to be $\dom(v) = \{\lbl_1,\ldots,\lbl_n\}$.
.  We say that a value or value environment is \emph{prefix-labeled}
if for every collection $v$ occurring in it, the labels
$\lbl_1,\ldots,\lbl_n$ are distinct and $\dom(v)$ is a prefix
code. Similarly, we say that a trace is prefix-labeled if every
labeled trace set $\Theta = \{\lbl_1.T_1,\ldots,\lbl_n.T_n\}$ is
prefix-labeled.
\begin{theorem}\label{thm:prefix-labeled}
  If $\gamma$ is prefix-labeled and $\gamma,e \red v,T$ then $v$
  and $T$ are both prefix-labeled.  Moreover, if $\gamma$ and $v$
  are prefix-labeled and $\gamma,x\in v, e \red^* v',\Theta$ then
  $v'$ and $\Theta$ are prefix-labeled, and in addition
  $\dom(\Theta) = \dom(v) \leq \dom(v')$.
\end{theorem}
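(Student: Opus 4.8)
The plan is to prove both halves of the statement simultaneously by a single mutual structural induction on the derivations of $\gamma,e \red v,T$ and $\gamma,x\in v, e \red^* v',\Theta$. Treating the two together is unavoidable: the comprehension rule invokes the $\red^*$ judgment from inside $\red$, and the singleton-iteration rule invokes $\red$ from inside $\red^*$, so each induction hypothesis is needed by the other. Before the induction I would isolate a few elementary facts about labels: (i) for a fixed sequence $\ell$, if $L$ is a prefix code then so is $\{\ell\cdot\ell' \mid \ell'\in L\}$, and prepending $\ell$ to the top-level labels of a collection leaves all nested collections unchanged, so $\ell\cdot v$ is prefix-labeled whenever $v$ is; (ii) if neither of $\ell_1,\ell_2$ is a prefix of the other---in particular for the length-one labels $1$ and $2$---then $\ell_1\cdot L_1$ and $\ell_2\cdot L_2$ are prefix-disjoint for all $L_1,L_2$; (iii) any subset of a prefix code is a prefix code; and (iv) extending a prefix-labeled environment with a prefix-labeled value gives a prefix-labeled environment.

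Given these, most cases are immediate. For $\kwc$, $x$, $\kwf(e_1,\ldots,e_n)$, $\kwsum{e}$ and $\kwsetise{e}$ the result is a constant and contains no collection; for records, $\kwfield{e}{A}$, conditionals and $\kwlet{x}{e_1}{e_2}$, the result and trace are assembled from sub-results that are prefix-labeled by the induction hypothesis (using (iv) for the environment extension in the let-rule), and no new labels appear. The empty-collection and singleton rules yield domains $\emptyset$ and $\{\epsilon\}$, trivially prefix codes. The one interesting $\red$-rule is union, where the result is $1\cdot v_1 \semu 2\cdot v_2$: by the induction hypothesis $v_1,v_2$ are prefix-labeled, by (i) so are $1\cdot v_1$ and $2\cdot v_2$, by (ii) their domains are prefix-disjoint hence their disjoint union is a prefix code, and the nested collections are just those of $v_1,v_2$. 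The comprehension rule chains the two halves: the first half on $e$ gives $v,T$ prefix-labeled, and then (since $\gamma$ and $v$ are now prefix-labeled) the second half on $\gamma,x\in v, e' \red^* v',\Theta$ gives $v',\Theta$ prefix-labeled and $\dom(\Theta)=\dom(v)\le\dom(v')$, so the comprehension trace is prefix-labeled and $v'$ is as required.

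For the $\red^*$ judgment, the empty rule is trivial ($v'=\Theta=\emptyset$, and $\dom(\Theta)=\emptyset=\dom(v)\le\dom(v')$ vacuously), and the singleton rule $v=\{\ell.w\}$ is easy: $w$ is prefix-labeled, so $\gamma[x\mapsto w]$ is (by (iv)), so by the first half the inner evaluation yields a prefix-labeled $u$ and $T$, and then $\ell\cdot u$ and $\{\ell.T\}$ are prefix-labeled by (i), with $\dom(\{\ell.T\})=\{\ell\}=\dom(v)$ and every label of $\dom(\ell\cdot u)$ extending $\ell$. The real obstacle is the union rule for $\red^*$, where $v=v_1\semu v_2$. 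Here $v_1,v_2$ are prefix-labeled by (iii), so the induction hypothesis gives $v_i',\Theta_i$ prefix-labeled with $\dom(\Theta_i)=\dom(v_i)\le\dom(v_i')$; the work is to show $\dom(v_1')$ and $\dom(v_2')$ are still prefix-disjoint. If $x\in\dom(v_1')$, $y\in\dom(v_2')$ and (say) $x\le y$, then $x$ extends some $a\in\dom(v_1)$ and $y$ extends some $b\in\dom(v_2)$, so $a$ and $b$ are both prefixes of $y$, hence prefix-comparable, hence equal since $\dom(v)=\dom(v_1)\semu\dom(v_2)$ is a prefix code---contradicting disjointness of $\dom(v_1),\dom(v_2)$. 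With prefix-disjointness in hand, $\dom(v')$ is a prefix code, $\dom(\Theta)=\dom(\Theta_1)\semu\dom(\Theta_2)=\dom(v_1)\semu\dom(v_2)=\dom(v)$, and $\dom(v)\le\dom(v')$ because every label of $\dom(v')$ lies in some $\dom(v_i')$ and therefore extends a label of $\dom(v_i)\subseteq\dom(v)$. This preservation-of-prefix-disjointness argument, together with the bookkeeping that each iteration step prepends the iterated element's label, is essentially the whole content of the proof; everything else is routine.
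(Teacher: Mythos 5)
Your proof is correct and follows essentially the same route as the paper, which proves the theorem by induction on derivations with the second ($\red^*$) statement strengthened exactly as you do for the comprehension case, and identifies union and comprehension as the key cases. The paper leaves the details (in particular the preservation of prefix-disjointness in the $\red^*$ union case) as straightforward; your write-up simply fills them in.
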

  \begin{proof}
    By induction on derivations. The key cases are those for union,
    which is straightforward, and for comprehensions.  For the latter
    case we need the second part, to show that whenever $\gamma$ and
    $v$ are prefix-labeled, if $\gamma,x\in v, e \red^* v',\Theta$
    then $v'$ and $\Theta$ are prefix-labeled, and in addition
    $\dom(\Theta) = \dom(v)$ and $\dom(v)$ is a sub-prefix code of
    $\dom(v')$.
  \end{proof}

The prefix code property is needed later in the slicing algorithms, when
we will need it to match elements of collections produced by
comprehensions with corresponding elements of the trace set $\Theta$.
From now on, we assume that all values, environments, and traces are
prefix-labeled, so any labeled set is
assumed to have the prefix code property.

\begin{figure}[tb]
\fbox{$\gamma, T \trrun v$}
\vspace{-2ex}
\begin{smathpar}
\inferrule*{\strut
}{
\gamma, \kwc \trrun \kwc
}
\and
\inferrule*{\gamma,T_1 \trrun \kwc_1\\
\cdots\\
\gamma,T_n \trrun \kwc_n
}{
\gamma,\kwf(T_1,\ldots,T_n) \trrun \hat{\kwf}(\kwc_1,\ldots,\kwc_n)
}
\and
\inferrule*{\strut
}{
\gamma,x \trrun \gamma(x)
}
\and
\inferrule*{
\gamma,T_1 \trrun v_1\\
\gamma[x\mapsto v_1],T_2 \trrun v_2
}{
\gamma,\kwlet{x}{T_1}{T_2} \trrun v_2
}
\and
\inferrule*{
\gamma,T_1 \trrun v_1\\
\cdots\\
\gamma,T_n \trrun v_n
}{
\gamma,\kwrec{A_1{:}T_1,\ldots,A_n{:}T_n} \trrun \kwrec{A_1{:}v_1,\ldots,A_n{:}v_n}
}
\and
\inferrule*{
\gamma,T\trrun \kwrec{A_1{:}v_1,\ldots,A_n{:}v_n}
}{
\gamma,\kwfield{T}{A_i} \trrun v_i
}
\and
\inferrule*
{
\gamma,T \trrun \kwtrue \\
\gamma,T_1 \trrun v_1
}{
\gamma,\trifthen{T}{e_1}{e_2}{T_1} \trrun v_1
}
%
%
%
\and
\inferrule*
{
\gamma,T \trrun \kwfalse \\
\gamma,T_2\trrun v_2
}{
\gamma,\trifelse{T}{e_1}{e_2}{T_2} \trrun v_2
}%
\and
\inferrule*
{\strut}
{\gamma,\emptyset \trrun \emptyset}
\and
\inferrule*
{
\gamma,  T \trrun v
}
{
\gamma, \trsets{T} \trrun \kwsets{\epsilon.v}
}
\and
\inferrule*
{
\gamma, T_1 \trrun v_1
\\
\gamma, T_2 \trrun v_2
\\
}
{
\gamma,\trsetu{T_1}{T_2} \trrun 1 \cdot v_1 \semu 2 \cdot v_2
}
\and
%
%
\and
\inferrule*
{
\gamma,T \trrun \kwset{\bindl{\lbl_1}{v_1},\ldots,\bindl{\lbl_n}{v_n}} 
}
{
\gamma, \trsetsum{T} \trrun v_1\semplus \ldots \semplus v_n
}
\\
\inferrule*
{
\gamma,T \trrun v \\ v = \emptyset
}
{
\gamma, \trsetise{T} \trrun \kwtrue
}
\and
\inferrule*
{
\gamma,T \trrun v \\ v \not= \emptyset
}
{
\gamma, \trsetise{T} \trrun \kwfalse
}
\and
\inferrule*
{
\gamma, T \trrun v \\
 \gamma,x\in v, \Theta \trrun^* v'
}
{
\gamma, \trcomp{e}{x}{T}{\Theta}
\trrun
v'
}
\end{smathpar}
\fbox{$\gamma,x\in v, \Theta \trrun^* v'$}
\vspace{-1ex}
\begin{smathpar}
\inferrule*
{
\strut
}
{
 \gamma,x\in \emptyset, \Theta \trrun^* \emptyset
}
\and
\inferrule*
{
\gamma,x\in v_1,\Theta \trrun^* v_1'\\
\gamma,x\in v_2,\Theta \trrun^* v_2'\\
}
{
\gamma,x\in v_1\semu v_2,\Theta \trrun^* v_1'\semu v_2'\\
}
\and%
\inferrule*
{
 \lbl_i \in \dom(\Theta) \\
 \gamma[x\mapsto v_i],\Theta(\lbl_i) \trrun v_i' 
}
{
 \gamma,x\in \set{\lbl_i.v_i}, \Theta \trrun^* \lbl_i \cdot v_i'
}
%
\end{smathpar}
\caption{Trace replay.}
\label{fig:nrc-trace-run}
\end{figure}

We will use the following query as a running example.
\[Q =
\kwcomp{\kwif{x.B=3}{\{\kwrec{A{:}x.A,B{:}x.C}\}}{\{\}}}{ x}{R}\] 
This is a simple selection query; it identifies records in $R$ that
have $B$-value of 3, and returns record $\kwrec{A{:}x.A,B{:}x.C}$
containing $x$'s $A$ value and its $C$ value renamed to $B$.
The result of $Q$ on the input $R$ in the introduction is $Q(R) =
\{[r_2].\kwrec{A{:}2,B{:}8},[r_3].\kwrec{A{:}4,B{:}9}\}$, and the
trace is:
\[
\begin{array}{ll}
T =   \trcomp{\_}{x}{R}{\{&[r_1].\trifelse{x.B=3}{\_}{\_}{\{\}},\\
& [r_2].\trifthen{x.B=3}{\_}{\_}{\{\_\}},\\
&[r_3].\trifthen{x.B=3}{\_}{\_}{\{\_\}}~~ \}}\\
\end{array}
\]
where $\_$ indicates omitted (easily inferrable) subexpressions. 

\paragraph*{Replay} We introduce a judgment $\gamma,T \trrun v$ for
\emph{replaying} a trace on a (possibly different) environment
$\gamma$.  The rules for replaying NRC traces are presented in
\figref{nrc-trace-run}.  Many of the rules are straightforward or
analogous to the corresponding evaluation rules.  Here we only discuss
the replay rules for conditional and comprehension traces.

For the conditional rules, the basic idea is as follows.  If replaying
the trace $T$ of the test yields the same boolean value $b$ as
recorded in the trace, we replay the trace of the taken branch.  If
the test yields a different value, then replay fails.

To replay a comprehension trace $\trcomp{e}{x}{T}{\Theta}$, rule
$\textsc{RComp}$ first replays trace $T$ to update the set of elements
$v$ over which we will iterate.  We define a separate judgment
$\gamma,x\in v, \Theta \trrun^* v'$ to iterate over set $v$ and replay
traces on the corresponding elements. For elements $\lbl_i \in \dom(\Theta)$, we replay the
corresponding trace $\Theta(\lbl_i)$.  Replay fails if $v$ contains
any labels not present in $\Theta$.  

Replaying a trace can fail if either the branch taken in a conditional
test differs from that recorded in the trace, or the intermediate set
obtained from rerunning a comprehension trace includes values whose
labels are not present in the trace.  This means, in particular, that
changes to the input can be replayed if they only change base values
or delete set elements, but changes leading to additions of new labels
to sets involved in comprehensions typically cannot be replayed.


Returning to the running example, suppose we change field $B$ of row $r_1$ of $R$
from 2 to 5.  This change has no effect on the control flow choice
taken in $Q$, and replaying the trace $T$ succeeds.  Likewise,
changing the $A$ or $C$ field of any column of $R$ has no effect,
since these values do not affect the control flow choices in $T$. However, changing
the $B$ field of a row to $3$ (or changing it from $3$ to something
else) means that replay will fail.

\subsection{Key properties}
\label{sec:nrc::properties}

Before moving on to consider trace slicing, 
we identify some properties that formalize the intuition that traces are
consistent with and faithfully record execution.

Evaluation and traced evaluation are deterministic:
\begin{proposition}[Determinacy]
  If $\gamma,e \red v,T$ and $\gamma,e \red v',T'$ then $v = v'$ and
  $T = T'$. If $\gamma,T \trrun v$ and $\gamma,T \trrun
  v'$ then $v = v'$.
\end{proposition}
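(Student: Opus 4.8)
The plan is to prove both halves of the Determinacy proposition by a straightforward simultaneous induction on the structure of the derivations. For the traced evaluation claim, I would induct on the derivation of $\gamma,e \red v,T$ (equivalently on the structure of $e$, since the rules are syntax-directed), and show that the second derivation $\gamma,e \red v',T'$ must use the same rule and, by the induction hypothesis applied to the subderivations, must produce the same intermediate values and traces, hence the same final $v'=v$ and $T'=T$. Because the semantic operations $\hat{\kwf}$, $\hat{+}$, field projection, the label-prepending operation $\lbl\cdot(-)$, and the semantic union $\semu$ are all genuine (deterministic) functions on values, once the inputs agree the outputs agree. The mutually-recursive comprehension judgment $\gamma,x\in v,e\red^* v',\Theta$ is handled in the same induction (or a nested one on the structure of $v$): the three rules for it are syntax-directed on the shape of $v$ (empty, union, singleton), so given a fixed $v$ the rule is forced, and the IH on the sub-iterations plus the IH on the single-element evaluation $\gamma[x\mapsto v],e\red v',T$ pins down $v'$ and $\Theta$.

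Two rule pairs deserve explicit attention because they are the only places where the rule is not determined purely by the top-level expression constructor: the two conditional rules ($\trifthen{}{}{}{}$ vs.\ $\trifelse{}{}{}{}$) and, for the replay judgment, likewise the two if-trace rules. In both cases the selection is governed by the boolean value obtained from evaluating (resp.\ replaying) the test subexpression $e$ (resp.\ subtrace $T$); by the induction hypothesis that boolean is unique, so exactly one of the two rules can fire in each derivation, and then the IH on the chosen branch finishes the case. The emptiness-test rules ($\kwsetise{e}$ true vs.\ false) are analogous: the IH gives a unique $v$, and whether $v=\kwsete$ is a decidable, mutually exclusive condition.

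For the replay claim I would likewise induct on the derivation of $\gamma,T\trrun v$, with the auxiliary judgment $\gamma,x\in v,\Theta\trrun^* v'$ treated inside the same induction. The replay rules are syntax-directed on the structure of the trace $T$; the only branching is again at if-traces and emptiness-test traces, dispatched exactly as above by the uniqueness of the replayed test value. For $\textsc{RComp}$, the IH gives a unique intermediate collection $v$ from replaying $T$, and then the auxiliary judgment is syntax-directed on the structure of $v$ (empty / union / singleton); in the singleton case the side condition $\lbl_i\in\dom(\Theta)$ together with the functionality of $\Theta$ (recall trace sets are finite partial functions from labels to traces, so $\Theta(\lbl_i)$ is well-defined) makes the replayed subtrace unique, and the IH then fixes $v_i'$, hence $\lbl_i\cdot v_i'$.

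I do not anticipate a genuinely hard step here; the proposition is a routine structural-determinacy result of the kind that always accompanies a syntax-directed operational semantics. The mildly delicate point — really the only thing worth stating carefully — is that the collection-manipulating judgments ($\red^*$ and $\trrun^*$) recurse on the \emph{structure of the collection value} rather than on a piece of syntax, so one must observe that their three rules are mutually exclusive and exhaustive on the three syntactic shapes of a value collection ($\kwsete$, $v_1\semu v_2$, $\set{\lbl.v}$) and that the decomposition into $v_1\semu v_2$ used in the recursion is uniquely determined up to the identifications already in force (collections identified up to reordering). Given that observation, the whole argument is a mechanical case analysis, and I would simply say ``By simultaneous induction on derivations; the only nontrivial cases are the conditional and emptiness-test rules, where determinacy of the test value (by the induction hypothesis) forces a unique rule choice.''
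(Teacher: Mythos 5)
The paper itself states Determinacy without proof, treating it as a routine consequence of the syntax-directed rules, so there is no official proof to compare against; your structural-induction outline is exactly the argument the paper is implicitly relying on, and the way you dispatch the conditional and emptiness-test rules (uniqueness of the test value forces the rule choice) is right.

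The one place where your justification is not correct as stated is the point you yourself single out as delicate: the auxiliary judgments $\gamma,x\in v,e\red^* v',\Theta$ and $\gamma,x\in v,\Theta\trrun^* v'$. Their three rules are \emph{not} mutually exclusive, and the decomposition $v = v_1\semu v_2$ is \emph{not} unique, not even up to reordering: a three-element collection can be split $1{+}2$ or $2{+}1$, a singleton also matches the union rule via $v\semu\emptyset$, and so on. Hence two derivations of the same judgment can have genuinely different shapes, and you cannot argue that "the rule is forced." What saves determinacy is not uniqueness of the derivation but invariance of the \emph{result} under the choice of decomposition: by induction (on the derivation, or on $|v|$) every derivation of $\gamma,x\in v,e\red^* v',\Theta$ yields the canonical pointwise result $v'=\biguplus_{\lbl.v_i\in v}\lbl\cdot w_i$ and $\Theta=\{\lbl.T_i \mid \lbl.v_i \in v\}$, where $w_i,T_i$ are the (by the main IH unique) results of evaluating $e$ with $x\mapsto v_i$; since $\semu$ is associative and commutative on these label-disjoint collections, the assembled value and trace set do not depend on how $v$ was split. (The paper makes the analogous remark for the slicing judgment, noting it is deterministic "even though the rules can be used to decompose the trace sets in many different ways, because $\semu$ is associative.") The same repair applies verbatim to $\trrun^*$. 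With that substitution your proof goes through; the rest is the mechanical case analysis you describe.
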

When the trace is irrelevant, we write $\gamma,e \red v$ to indicate
that $\gamma,e \red v,T$ for some $T$.  

\begin{intronly}
  Traced evaluation is type-safe and produces well-typed traces, and
  trace replay is also type-safe.
  \begin{theorem}
    If $\gamma :\Gamma$ and $\Gamma \ts e : \tau$ and $\gamma,e \red
    v,T$ then $v : \tau$ and $\Gamma \ts T : \tau$.  If
    $\gamma:\Gamma$ and $\Gamma\ts T:\tau$ and $\gamma,T\trrun v$
    then $v:\tau$.
  \end{theorem}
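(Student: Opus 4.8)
The plan is to prove both statements by straightforward induction on derivations, each paired with a strengthened companion claim for the corresponding iteration judgment. For the first statement I would induct on the derivation of $\gamma,e \red v,T$, simultaneously establishing: if $\gamma:\Gamma$, $\Gamma,x{:}\sigma \ts e' : \tau$, the collection $v$ has $v : \tyset{\sigma}$, and $\gamma,x\in v,e' \red^* v'',\Theta$, then $v'' : \tyset{\tau}$ and $\Gamma,x{:}\sigma \ts \Theta : \tau$, where $\Gamma \ts \Theta : \tau$ abbreviates the pointwise lifting of trace well-formedness to trace sets (defined alongside the trace typing rules in the appendix). Most cases are routine bureaucracy: for constants, variables, records, field projection, let-binding and primitive operations the trace mirrors the expression structure, so one inverts $\Gamma \ts e : \tau$, applies the induction hypothesis to the premises, and reassembles to obtain both $v : \tau$ and $\Gamma \ts T : \tau$. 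The primitive-operation case additionally uses the assumption (packaged in $\Sigma$) that each semantic operation $\hat{\kwf}$ respects its declared type, and the let-case uses the standard environment-extension fact that $\gamma:\Gamma$ and $v_1:\tau_1$ imply $\gamma[x\mapsto v_1] : \Gamma,x{:}\tau_1$. The two conditional cases invert the typing of $\kwif{e}{e_1}{e_2}$, apply the hypothesis to the test and to the taken branch, and observe that the \emph{untaken} branch expression retained in the trace is still well-typed by assumption --- exactly what the trace well-formedness rules for $\trifthen{T}{e_1}{e_2}{T_1}$ and $\trifelse{T}{e_1}{e_2}{T_2}$ demand.

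The interesting cases are the collection operations. For $\emptyset$ and $\kwsets{e}$ the argument is immediate (the singleton case invoking the hypothesis for its element). For union, inverting $\Gamma \ts \kwsetu{e_1}{e_2} : \tyset{\tau}$ gives $\Gamma \ts e_i : \tyset{\tau}$; the hypothesis yields $v_i : \tyset{\tau}$ and $\Gamma \ts T_i : \tyset{\tau}$, and one uses the elementary fact that prepending a fixed label to every element of a well-typed collection preserves its type, so $1\cdot v_1 \semu 2\cdot v_2 : \tyset{\tau}$, with the trace rule for $\trsetu{T_1}{T_2}$ assembling the trace. For a comprehension $\kwcomp{e'}{x}{e}$, inversion gives $\Gamma \ts e : \tyset{\sigma}$ and $\Gamma,x{:}\sigma \ts e' : \tyset{\tau}$; the hypothesis on $e$ gives $v : \tyset{\sigma}$ and $\Gamma \ts T : \tyset{\sigma}$, and the companion claim applied to $\gamma,x\in v,e' \red^* v',\Theta$ delivers $v' : \tyset{\tau}$ and $\Gamma,x{:}\sigma \ts \Theta : \tau$, which the trace rule for $\trcomp{e'}{x}{T}{\Theta}$ combines. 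The companion claim is discharged within the same mutual induction: its empty and union subcases reuse the label-prepending and union-typing facts, and its singleton subcase $\gamma,x\in\{\ell.v_0\},e' \red^* \ell\cdot v_0',\{\ell.T_0\}$ appeals to the induction hypothesis at the subderivation $\gamma[x\mapsto v_0],e' \red v_0',T_0$ --- legal because $v : \tyset{\sigma}$ forces $v_0:\sigma$, hence $\gamma[x\mapsto v_0] : \Gamma,x{:}\sigma$.

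The second statement, type-safety of replay, is proved by the same recipe: induction on $\gamma,T \trrun v$ together with a companion claim for $\gamma,x\in v,\Theta \trrun^* v'$, but now inverting trace well-formedness $\Gamma \ts T : \tau$ rather than expression typing. Each replay rule is matched against the corresponding trace well-formedness rule and dispatched with the induction hypothesis; the conditional rules fire only when the replayed test produces the boolean recorded in the trace, and in either case the taken branch's subtrace is well-typed by inversion; the comprehension rule replays $T$ to a collection $v : \tyset{\sigma}$ (hypothesis) and then the companion claim iterates over $\Theta$ exactly as before, using $\Gamma,x{:}\sigma \ts \Theta(\ell_i) : \tyset{\tau}$ and $v_i:\sigma$ to license the inner hypothesis and the label-prepending and union facts to conclude $v' : \tyset{\tau}$.

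I expect the only real friction to be bookkeeping: phrasing the mutual-induction statement for the iteration judgments so that the hypothesis $v : \tyset{\sigma}$ travels along (that is what licenses extending the environment with a well-typed binding for $x$), and lining up the pointwise trace-set judgment with the comprehension trace rule. Since the trace typing rules themselves are relegated to the appendix, the one point to verify is that they are stated so the untaken-branch expressions of conditional traces and the expression/binder pair of comprehension traces are required to be well-typed --- the natural choice --- so that the reassembly steps actually typecheck. No estimates, combinatorics, or appeal to the prefix-code invariant of Theorem~\ref{thm:prefix-labeled} are needed here, since well-typedness is insensitive to the actual labels.
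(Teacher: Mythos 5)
Your proposal is correct and takes essentially the approach the paper intends: the paper states this theorem without proof (only the trace and trace-set typing rules appear in the appendix, and they do require the untaken branch expressions and the comprehension's expression/binder to be well-typed, as you hoped), and the expected argument is exactly your mutual induction on the evaluation/iteration and replay/iteration judgments with inversion, environment extension, and the fact that label-prepending and union preserve collection types. The only repair needed is notational, in the statement of your companion claim: it should assume $\Gamma,x{:}\sigma \ts e' : \tyset{\tau}$ and conclude $v'' : \tyset{\tau}$ and $\Gamma,x\in\tyset{\sigma} \vdash \Theta : \tyset{\tau}$ (matching the appendix judgment $\Gamma,x\in\set{\tau}\vdash\Theta:\set{\tau'}$), which is precisely how your comprehension case already instantiates it.
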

\end{intronly}

Traces can be represented using pointers to share common
subexpressions; using this DAG representation, traces can be stored in
space polynomial in the input.  (This sharing happens automatically in
our implementation in Haskell.)
\begin{proposition}
  For a fixed $e$, if $\gamma,e \red v,T$ then the sizes of $v$ and of
  the DAG representation of $T$ are at most polynomial in $|\gamma|$.
\end{proposition}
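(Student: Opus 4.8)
The plan is to prove, by induction on the derivation of $\gamma,e\red v,T$ (with the analogous statement for the auxiliary judgment $\gamma,x\in w,e\red^* v',\Theta$ handled in the same induction), the following strengthening: for each subexpression $e_0$ of the fixed query $e$ there is a polynomial $p_{e_0}$ such that $\gamma_0,e_0\red v_0,T_0$ with $\gamma_0$ prefix-labeled implies that $|v_0|$ and the number of trace-node constructors of $T_0$ are both at most $p_{e_0}(|\gamma_0|)$, and $\gamma_0,x\in w,e_0\red^* v_0',\Theta_0$ implies that $|v_0'|$ and the node count of $\Theta_0$ are both at most $p_{e_0}(|\gamma_0|+|w|)$. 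Because $e$ has only finitely many subexpressions, the pointwise maximum of the finitely many $p_{e_0}$ is a single polynomial that bounds everything; its degree grows with the comprehension-nesting depth of $e$, which is why one cannot fix the degree in advance. Here $|v|$ counts the entire representation of $v$, including all label symbols, so in particular every label occurring in $v$ has length at most $|v|$.

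Most cases are routine. For $\kwc$ and $x$ the trace has one node and $|v_0|\le|\gamma_0|+O(1)$. For the structural forms ($\kwf(e_1,\ldots,e_n)$, records, field projection, $\kwlet{x}{e_1}{e_2}$, $\kwif{e_1}{e_2}{e_3}$, singletons, $\kwsum{e_1}$, $\kwsetise{e_1}$, union) the trace is one node plus the traces of the immediate subderivations, and $|v_0|$ is a constant plus the sizes of the sub-values; for $\kwlet{x}{e_1}{e_2}$ the second premise runs in $\gamma_0[x\mapsto v_1]$, whose size is $\le|\gamma_0|+|v_1|$ and hence still polynomial in $|\gamma_0|$ by the induction hypothesis for $e_1$; for union, prepending a single digit to every label of $v_1$ and $v_2$ adds only $O(|v_1|+|v_2|)$. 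In each such case the required bound is obtained by adding and composing polynomials, which again yields a polynomial.

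The one interesting case is the comprehension $e_0=\kwcomp{e'}{x}{e_1}$, with trace $\trcomp{e'}{x}{T}{\Theta}$ where $T$ is the trace of $e_1$ and $\Theta=\{\lbl_1.T_1,\ldots,\lbl_k.T_k\}$. Let $v_1=\{\lbl_1.w_1,\ldots,\lbl_k.w_k\}$ be the value of $e_1$; by the induction hypothesis on $e_1$ we have $k\le|v_1|\le p_{e_1}(|\gamma_0|)$, each $|w_i|\le|v_1|$, and each $|\lbl_i|\le|v_1|$. Each $T_i$ comes from $\gamma_0[x\mapsto w_i],e'\red v_i',T_i$, so by the induction hypothesis on $e'$ both $|v_i'|$ and the node count of $T_i$ are at most $p_{e'}(|\gamma_0|+|w_i|)\le p_{e'}(|\gamma_0|+p_{e_1}(|\gamma_0|))$, a polynomial in $|\gamma_0|$. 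The result is $v_0=\lbl_1\cdot v_1'\semu\cdots\semu\lbl_k\cdot v_k'$, and prepending $\lbl_i$ to the labels of $v_i'$ adds at most $|v_i'|\cdot|\lbl_i|$ symbols, so $|v_0|\le\sum_{i=1}^{k}|v_i'|(1+|\lbl_i|)$, a product of two polynomials in $|\gamma_0|$, hence polynomial. Likewise the node count of $T_0$ is $1+(\text{nodes of }T)+\sum_{i=1}^{k}(\text{nodes of }T_i)$, and $\Theta$ has exactly $k$ entries by \thmref{prefix-labeled}, so this is at most $1+p_{e_1}(|\gamma_0|)+p_{e_1}(|\gamma_0|)\cdot p_{e'}(|\gamma_0|+p_{e_1}(|\gamma_0|))$, again polynomial. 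The auxiliary judgment (rules for $x\in\emptyset$, $x\in w_1\semu w_2$, and $x\in\{\lbl.w\}$) is handled by a straightforward sub-induction that accumulates these per-element bounds.

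Finally, to pass from "polynomially many trace nodes" to "polynomial-size DAG representation": every expression embedded inside a trace — the residual branch expressions in conditional traces and the residual body expression in comprehension traces — is literally a subexpression of the fixed $e$, of which there are $O(|e|)=O(1)$; storing these once in a shared pool and letting trace nodes point into it gives a DAG of size $O(\text{node count})+O(|e|)$, polynomial in $|\gamma|$. (A plain tree representation would pay an extra $|e|$ factor per node, still polynomial for fixed $e$ but wasteful; the DAG also transparently shares repeated subtraces, which is what happens automatically under Haskell's sharing.) The main obstacle is the comprehension case: one must verify that the polynomial bounds compose correctly through the nested inductions and keep the bound uniform over the finitely many subexpressions, and attend to the minor bookkeeping of label-length growth so that $|v_0|$ — and not merely its element count — is seen to be polynomial.
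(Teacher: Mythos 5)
Your proposal is correct and follows essentially the same route as the paper: the paper's (very terse) proof likewise proceeds by induction on derivations, with the only noted subtlety being the comprehension case, which it handles by strengthening the induction hypothesis to cover the auxiliary judgment $\gamma,x\in v,e\red^* v',\Theta$ --- exactly the strengthening you formulate. Your per-subexpression polynomials, the composition through growing environments, the label-length bookkeeping, and the remark that embedded expressions are shared subexpressions of the fixed $e$ are just the details the paper leaves implicit.
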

\begin{proof}
  Most cases are straightforward.  The only non-trivial case is for
  comprehensions, where we need a stronger induction hypothesis: if
  $\gamma, x \in v, e \red^* v',\Theta$ then the sizes of $v'$ and of the DAG
  representation of $\Theta$ are at most polynomial in $|\gamma|$.  
\end{proof}

Furthermore, traced evaluation produces a trace that replays to the
same value as the original expression run on the original
environment. We call this property \emph{consistency}.
\begin{proposition}[Consistency]\label{prop:consistency}
  If $\gamma,e \red v,T$ then $\gamma,T \trrun v$.
\end{proposition}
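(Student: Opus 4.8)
The plan is to prove Consistency by induction on the derivation of $\gamma,e \red v,T$, with a strengthened induction hypothesis that also covers the comprehension-iteration judgment $\gamma,x\in v, e' \red^* v',\Theta$: namely, that whenever this holds, we also have $\gamma,x\in v, \Theta \trrun^* v'$. Most expression cases are immediate, since the trace form mirrors the expression form and the replay rule has exactly the same shape as the evaluation rule: for constants, variables, primitive operations, records, field projections, let-bindings, singletons, unions, emptiness tests, and sums, one simply applies the induction hypothesis to each premise and then the corresponding replay rule, using Determinacy (the preceding Proposition) where needed to know the subtrace replays to the unique value it produced under evaluation.

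The two cases that require slightly more care are conditionals and comprehensions. For a conditional $\kwif{e}{e_1}{e_2}$ that evaluated via the true-branch, evaluation gives $\gamma,e\red\kwtrue,T$ and $\gamma,e_1\red v_1,T_1$, and the trace is $\trifthen{T}{e_1}{e_2}{T_1}$; by the induction hypothesis $\gamma,T\trrun\kwtrue$ and $\gamma,T_1\trrun v_1$, and the replay rule for $\trifthen{\cdot}{\cdot}{\cdot}{\cdot}$ fires precisely because the recorded boolean matches, yielding $\gamma,\trifthen{T}{e_1}{e_2}{T_1}\trrun v_1$. The false-branch case is symmetric. For a comprehension $\kwcomp{e'}{x}{e}$, evaluation gives $\gamma,e\red v,T$ and $\gamma,x\in v, e'\red^* v',\Theta$, with trace $\trcomp{e'}{x}{T}{\Theta}$; the induction hypothesis on the first premise gives $\gamma,T\trrun v$, and the strengthened induction hypothesis on the second premise gives $\gamma,x\in v,\Theta\trrun^* v'$, so rule $\textsc{RComp}$ yields $\gamma,\trcomp{e'}{x}{T}{\Theta}\trrun v'$.

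It remains to handle the three rules of the iteration judgment, establishing the strengthened hypothesis $\gamma,x\in v, e'\red^* v',\Theta \implies \gamma,x\in v,\Theta\trrun^* v'$ by induction on that derivation. The empty case is trivial ($\emptyset\trrun^*\emptyset$). The union case follows from the two sub-induction hypotheses and the corresponding replay rule, noting that both sub-iterations use the \emph{same} $\Theta$ in the replay judgment, matching the structure of the replay rule. The singleton case is the crucial one: evaluation has $\gamma,x\in\set{\lbl.v},e'\red^*\lbl\cdot v',\set{\lbl.T}$ because $\gamma[x\mapsto v],e'\red v',T$; the (main) induction hypothesis applied to this sub-derivation of traced evaluation gives $\gamma[x\mapsto v],T\trrun v'$, and since $\lbl\in\dom(\set{\lbl.T})$ with $\set{\lbl.T}(\lbl)=T$, the singleton replay rule fires to give $\gamma,x\in\set{\lbl.v},\set{\lbl.T}\trrun^*\lbl\cdot v'$, as required.

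The main obstacle is purely bookkeeping: getting the mutual induction set up correctly so that the iteration judgment's replay clause sees exactly the $\Theta$ that was produced, and recognizing that in the singleton iteration case we must invoke the induction hypothesis for the \emph{expression} evaluation judgment (not the iteration judgment) — i.e. the induction is on the combined size/structure of derivations in both judgments simultaneously. One should also note in passing that the prefix-code invariant (Theorem~\ref{thm:prefix-labeled}) guarantees $\dom(\Theta)=\dom(v)$, so that in the singleton replay step the side condition $\lbl_i\in\dom(\Theta)$ is automatically satisfied and replay does not fail; this is the only place the labeling discipline is quietly used.
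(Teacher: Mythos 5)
Your proof is correct and is exactly the routine mutual induction on the two evaluation judgments that the paper has in mind (the paper states Consistency without giving a proof, treating it as routine; your appeal to Determinacy is unnecessary but harmless). One small point you gloss over: in the union case of the iteration judgment the induction hypotheses yield $\gamma,x\in v_1,\Theta_1 \trrun^* v_1'$ and $\gamma,x\in v_2,\Theta_2 \trrun^* v_2'$, whereas the replay rule needs both premises against the combined set $\Theta_1 \semu \Theta_2$, so you must also invoke the (trivial) weakening fact that enlarging the trace set by a disjoint union preserves replay --- this, rather than the singleton case, is where the distinctness/prefix-disjointness of labels is quietly used, and ``both sub-iterations use the same $\Theta$'' does not by itself bridge that step.
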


Finally, trace replay is faithful to ordinary evaluation in the
following sense: if $T$ is generated by running $e$ in $\gamma$
and we successfully replay $T$ on $\gamma'$ then
we obtain the same value (and same trace) as if we had rerun $e$ from
scratch in $\gamma'$, and vice versa:

\begin{proposition}[Fidelity]\label{prop:fidelity}
  If $\gamma,e \red v,T$, then for any $\gamma'$, $v'$ we have
  $\gamma',e \red v',T$ if and only if $\gamma',T \trrun v'$.
\end{proposition}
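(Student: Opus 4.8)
The statement has two halves, and only one of them requires work. The forward direction, that $\gamma',e \red v',T$ implies $\gamma',T \trrun v'$, is immediate from Consistency (Proposition~\ref{prop:consistency}): that result, instantiated at $\gamma'$ and $v'$, says exactly that $\gamma',e \red v',T$ entails $\gamma',T \trrun v'$. So the plan concentrates on the converse: assuming $\gamma,e\red v,T$ and $\gamma',T\trrun v'$, show $\gamma',e\red v',T$. The subtlety here is that re-running $e$ in $\gamma'$ must not only yield $v'$ but rebuild the \emph{same} trace $T$.

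I would prove the converse by induction on the derivation of the original run $\gamma,e\red v,T$, together with a companion statement for the iteration judgment $\gamma,x\in v_0,e'\red^* v,\Theta$ (stated below). Casing on the last rule pins down the shapes of both $e$ and $T$; one then inverts $\gamma',T\trrun v'$, which is forced to use the matching replay rule, hands its premises to the induction hypothesis to recover evaluations of the subexpressions under $\gamma'$, and reassembles them with the corresponding evaluation rule. The reason the reconstructed trace comes out equal to $T$ is that the replay rules commit to the recorded control-flow shape: for example if $T=\trifthen{T_0}{e_1}{e_2}{T_1}$ then the only replay rule applicable to this trace form demands $\gamma',T_0\trrun\kwtrue$, so by the induction hypothesis $\gamma',e_0\red\kwtrue,T_0$, re-evaluation deterministically takes the true branch, and the reconstructed trace is again $\trifthen{T_0}{e_1}{e_2}{T_1}$ (had the test come out false in $\gamma'$, replay would be stuck on this trace form while re-evaluation would emit an if-false trace, so both sides of the biconditional fail together). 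The constant, variable, primitive-operation, let, record, projection, union, singleton, empty-collection, sum and emptiness-test cases all follow this routine pattern; for the sum and emptiness-test cases, the fact that the replayed subtrace returns a collection is handed to us by inverting the replay rule, so no separate type-safety argument is needed here.

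The real content is the comprehension case, $e=\kwcomp{e'}{x}{e_0}$ with $T=\trcomp{e'}{x}{T_0}{\Theta}$. Inverting the replay rule for comprehension traces yields $\gamma',T_0\trrun v_0'$ and a secondary replay $\gamma',x\in v_0',\Theta\trrun^* v'$; the induction hypothesis on $T_0$ gives $\gamma',e_0\red v_0',T_0$, and it remains to turn the secondary replay into a secondary evaluation that rebuilds the \emph{same} labelled trace set $\Theta$. For this I would prove, as the companion to the main induction, the statement: whenever $\gamma,x\in v_0,e'\red^* v,\Theta$, then for all $\gamma'$, $v_0'$ and $w$, $\gamma',x\in v_0',\Theta\trrun^* w$ if and only if $\gamma',x\in v_0',e'\red^* w,\Theta$. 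This goes by induction on the structure of $v_0'$ along the $\emptyset$ / $\semu$ / labelled-singleton decomposition that both $\red^*$ and $\trrun^*$ follow. The labelled-singleton case $v_0'=\{\lbl_i.v_i\}$ appeals to the per-element induction hypothesis for $e'$: the replay rule requires $\lbl_i\in\dom(\Theta)$ and replays $\Theta(\lbl_i)$, and since $\dom(\Theta)=\dom(v_0)$ by Theorem~\ref{thm:prefix-labeled}, $\Theta(\lbl_i)$ is exactly the element trace recorded for that label, which re-evaluation of $e'$ on $v_i$ reproduces. The union case uses the prefix-code property of Theorem~\ref{thm:prefix-labeled} to see that the split of $v_0'$ into labelled singletons is unambiguous, so the two judgments assemble the same $\semu$-combination of element values and the same $\semu$-combination of element traces, and prepend the same labels.

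The step I expect to be the main obstacle is exactly this: keeping the \emph{traces}, not merely the values, in lockstep between the two judgments in the comprehension case --- arguing that the trace set rebuilt by re-evaluation coincides with the recorded $\Theta$, not just that it has the same domain or the same per-element values. This is where the deterministic labelling scheme and the invariant $\dom(\Theta)=\dom(v_0)\leq\dom(v)$ from Theorem~\ref{thm:prefix-labeled} do the essential work, letting us identify each replayed element trace $\Theta(\lbl)$ with the trace that re-evaluating $e'$ at label $\lbl$ produces. Determinacy is used throughout to glue the reconstructed sub-derivations together without ambiguity.
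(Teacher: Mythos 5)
Your forward direction is fine --- it is literally Consistency (Proposition~\ref{prop:consistency}) instantiated at $\gamma'$ --- and your overall plan (induction on the original evaluation derivation, a companion statement for the iteration judgment, inversion of replay to pin down the recorded control flow) is the natural route; the first-order cases you list do go through as you describe. The gap is in the comprehension case, exactly at the point you yourself flag as the crux. Your companion lemma is false as stated: replay of a trace set, $\gamma',x\in v_0',\Theta\trrun^* w$, only requires each label of $v_0'$ to occur in $\dom(\Theta)$ --- this is precisely what lets the paper replay after deletions (see the discussion following the replay rules, and Lemma~\ref{lem:subset-replay}) --- whereas re-evaluation $\gamma',x\in v_0',e'\red^* w,\Theta'$ always produces a trace set with $\dom(\Theta')=\dom(v_0')$ (Theorem~\ref{thm:prefix-labeled} together with determinacy). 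So whenever $\dom(v_0')\subsetneq\dom(\Theta)$ and replay succeeds, the left-hand side of your equivalence holds while the right-hand side cannot. Concretely, take $e'=\kwsets{x}$ and $v_0=\{[1].1,\,[2].2\}$, so $\Theta=\{[1].\{x\},\,[2].\{x\}\}$; with $v_0'=\{[1].1\}$, replay succeeds with $w=\{[1].1\}$, but re-evaluation over $v_0'$ yields the trace set $\{[1].\{x\}\}\neq\Theta$.

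This failure propagates into the main induction: the invariant $\dom(\Theta)=\dom(v_0)$ that you invoke constrains the \emph{original} run, not the collection $v_0'$ obtained by replaying $T_0$ in $\gamma'$, and the inductive hypothesis $\gamma',e_0\red v_0',T_0$ does not force $\dom(v_0')=\dom(\Theta)$: with $e_0$ a relation variable $R$, $T_0=R$, and $\gamma'(R)$ a proper sub-collection of $\gamma(R)$, replay of the whole comprehension trace succeeds, while re-running the comprehension from scratch in $\gamma'$ builds a strictly smaller trace set and hence a different trace. So the step where you assert that the two judgments ``assemble the same $\semu$-combination of element traces'' is exactly where the argument breaks, and neither determinacy nor the prefix-code discipline repairs it. To push the right-to-left direction through you would have to establish that replay of a comprehension trace succeeds only when the replayed source collection covers all of $\dom(\Theta)$ --- which the rules as given deliberately do not require --- or else confront the deletion scenario explicitly (and, as stated, that scenario is a genuine edge case for the right-to-left implication itself, so it cannot simply be glossed by appeal to deterministic labelling). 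As written, your proof asserts rather than proves the lockstep of trace sets, and the asserted lemma is false for the system as defined.
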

Observe that consistency is a special case of fidelity (with $\gamma'
= \gamma, v'=v$.  Moreover, the ``if'' direction of fidelity holds
even though replay can fail, because we require that $\gamma',e \red
v',T$ holds for the \emph{same trace $T$}.  If $\gamma',e \red v',T'$
is derivable but only for a different trace $T'$, then replay fails.


\section{Trace Slicing}
\label{sec:slicing}
The goal of the trace slicing algorithm we consider is to remove
information from a trace and input that is not needed to recompute a
part of the output. To accommodate these requirements, we introduce
traces and values with \emph{holes} and more generally, we consider
\emph{patterns} that represent relations on values capturing possible
changes.  In this section, we limit attention to pairs, and consider
slicing for a class of \emph{simple} patterns.  We consider records
and more expressive \emph{enriched} patterns in the next section.

We extend traces with holes $\tremp$
\begin{intronly}
  \begin{eqnarray*}
    T & ::= & \cdots \mid \tremp
  \end{eqnarray*}
\end{intronly}
and define a subtrace relation $\sqleq$ that is essentially a
syntactic precongruence on traces and trace sets such that $\vhole
\sqleq T$ holds and $\Theta \subseteq \Theta'$ implies $\Theta \sqleq
\Theta'$.  (The definition of $\sqleq$ is shown in full in the
companion technical report.) Intuitively, holes denote parts of traces we do not care
about, and we can think of a trace $T$ with holes as standing for a
set of possible complete traces $\{T' \mid T \sqleq T'\}$ representing
different ways of filling in the holes.


The syntax of \emph{simple patterns} $p$, \emph{set
patterns}  $\spat$, and \emph{pattern environments} $\rho$ is:
\begin{eqnarray*}
  p &::=& \vhole \mid \vany \mid \vc \mid \vpair{p_1}{p_2} \mid
 \spat\\
\spat &::=& \emptyset \mid \vset{\lbl_1.p_1,\ldots,\lbl_n.p_n}\\
\rho &::=& [x_1 \mapsto p_1,\ldots,x_n \mapsto p_n]
\end{eqnarray*}
Essentially, a pattern is a value with holes in some positions.  The
meaning of each pattern is defined through a relation $\simat{p}$
that says when two values are equivalent with respect to a pattern.
This relation is defined in Figure~\ref{fig:nrc-sim}.
A hole $\vhole$ indicates that the part of the value is unimportant,
that is, for slicing purposes we don't care about that part of the
result.  Its associated relation $\simat{\vhole}$ relates any two values.  An
\emph{identity pattern} $\vany$ is similar to a hole: it says that the
value is important but its exact value is not specified, and its associated
relation $\simat{\vany}$ is the identity relation on values.  Complete set patterns
$\{\lbl_1.p_1,\ldots,\lbl_n.p_n\}$ specify the labels and patterns for
all of the elements of a set; that is, such a pattern relates only
sets that have exactly the labeled elements specified and whose
corresponding values match according to the corresponding patterns.


\begin{figure}[tb]
\[
  \begin{array}{rcl}
\vhole \sqcup p =  p \sqcup \vhole &=& p\\
\vany \sqcup p =  p \sqcup \vany &=& p[\vany/\vhole]\\
\vc \sqcup \vc &=& 
\vc\\
\vpair{p_1}{p_2} \sqcup \vpair{p_1'}{p_2'} &=& \vpair{p_1\sqcup
  p_1'}{p_2\sqcup p_2'}\\
   \vset{\overline{\lbl_i.p_i}}\sqcup    \vset{\overline{\lbl_i.p_i'}} &=&    \vset{\overline{\lbl_i.p_i\sqcup p_i'}}
  \end{array}
\]\text{where:}\[
\begin{array}{rcl}
  \vany[\vany/\vhole]  = \vhole[\vany/\vhole] &=& \vany\\
 \vc[\vany/\vhole] &=& \vc\\
\vpair{p_1}{p_2}[\vany/\vhole] &=& \vpair{p_1[\vany/\vhole]}{p_2[\vany/\vhole]}\\
   \vset{\overline{\lbl_i.p_i}}[\vany/\vhole] &=& \vset{\overline{\lbl_i.p_i[\vany/\vhole]}}
\end{array}
\]
  \caption{Least upper bound for simple patterns}
  \label{fig:nrc-lub}
\end{figure}
\begin{figure}
\fbox{$v \simat{p} v'$}
\vspace{-1ex}
\begin{smathpar}
\inferrule*
{\strut}
{v \simat{\vhole} v'}
\and
\inferrule*
{\strut}
{v \simat{\vany} v}
\and
\inferrule*{
\strut
}{
\vc \simat{\vc} \vc
}
\and
\inferrule*
{v_1 \simat{p_1} v_1'\\
v_1 \simat{p_2} v_2'
}
{\kwpair{v_1}{v_2} \simat{\kwpair{p_1}{p_2}} \kwpair{v_1'}{v_2'}}
\and
\inferrule*{
v_i \simat{p_i} v_i' \quad (i \in \{1,\ldots,n\})
}{
\{\lbl_1.v_1,\ldots,\lbl_n.v_n\}
\simat{\{\lbl_1.p_1,\ldots,\lbl_n.p_n\}} 
\{\lbl_1.v_1',\ldots,\lbl_n.v_n'\}
}
\end{smathpar}
\[\gamma \eqat{\rho} \gamma' \iff \forall x \in dom(\rho).~\gamma(x)
\simat{\rho(x)} \gamma'(x)\]
\caption{Simple pattern equivalence.}
\label{fig:nrc-sim}
\end{figure}

We define the union of two set patterns as
$\vset{\overline{\lbl_i.p_i}} \semu \vset{\overline{\lbl_i'.p_i'}} =
\vset{\overline{\lbl_i.p_i}, \overline{\lbl_i'.p_i'}}$ provided their
domains $\vec{\lbl_i}$ and $\vec{\lbl_i'}$ are prefix-disjoint.  We
define a (partial) least upper bound operation on patterns $p \sqcup
p'$ such that for any $v,v'$ we have $v\eqat{p \sqcup p'}v' $ if and
only if $ v\eqat{p}v' $ and $v\eqat{p'}v'$; the full definition is
shown in \figref{nrc-lub}.  We define the partial ordering $p \sqleq
p'$ as $p \sqcup p' = p'$.  We say that a value $v$ \emph{matches}
pattern $p$ if $p \sqleq v$.  Observe that this implies $v \simat{p}
v$.  We extend the $\sqcup$ and $\sqleq$ operations to pattern
environments $\rho$ pointwise, that is, $(\rho\sqcup \rho')(x) =
\rho(x) \sqcup \rho(x')$.

We define several additional operations on patterns that are needed
for the slicing algorithm. Consider the following
\emph{singleton extraction} operation $p.\epsilon$ and \emph{label
  projection} operation $p[\lbl]$:
\[\begin{array}{rcll}
(\{\epsilon.p\}).\epsilon &=& p&\vhole.\epsilon = \vhole\quad \vany.\epsilon =\vany\\
\spat[\lbl] &=&  \{\lbl'.v \mid \lbl.\lbl'.v \in \spat\}& \vhole[\lbl] = \vhole \quad\vany[\lbl]=\vany
\end{array}
\]
These operations are only used for the above cases; they have no
effect on constant or pair patterns.  For sets,
$p[\lbl]$ extracts the subset of $p$ whose labels start with $\lbl$,
truncating the initial prefix $\lbl$, while if $p$ is $\vhole$ or
$\vany$ then again $p[\lbl]$ returns the same kind of hole.  Moreover,
$\dom(p[\lbl])$ is a prefix code if $\dom(p)$ is.

Suppose $L$ is a prefix code.  We define \emph{restriction} of a
set pattern $p$ to $L$ as follows:
\[\spat|_{L} = \{\lbl.\lbl'.p \in \spat \mid \lbl \in L\} \qquad \vhole|_L = \vhole \qquad \vany |_L = \vany\]
It is easy to see that $\dom(p|_L) \subseteq \dom(p) $ so $\dom(p|_L)$
is a prefix code if $\dom(p)$ is, so this operation is well-defined on
collections:
\begin{lemma}
  If $\spat$ is prefix-labeled and $L$ is a prefix code then
  $\spat[\lbl]$ and $\spat|_L$ are prefix-labeled.
\end{lemma}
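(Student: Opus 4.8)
The plan is to prove both claims by a straightforward case analysis on the structure of the set pattern $\spat$, using the three-way grammar $\spat ::= \emptyset \mid \vset{\lbl_1.p_1,\ldots,\lbl_n.p_n}$ together with the hole cases $\vhole$ and $\vany$ that the projection and restriction operations are defined on. In each case I would unfold the relevant definition of $\spat[\lbl]$ or $\spat|_L$ from the displays immediately preceding the lemma, and check that the resulting domain is again a prefix code, which is exactly what ``prefix-labeled'' means for a set value. The hole cases are immediate: $\vhole[\lbl] = \vhole$, $\vany[\lbl] = \vany$, $\vhole|_L = \vhole$, $\vany|_L = \vany$, and holes carry no labels, so there is nothing to check. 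Likewise $\emptyset$ is vacuously prefix-labeled.

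For the interesting case $\spat = \vset{\lbl_1.p_1,\ldots,\lbl_n.p_n}$ with $\dom(\spat) = \{\lbl_1,\ldots,\lbl_n\}$ a prefix code, I would handle the two operations separately. For restriction, the text already observes $\dom(\spat|_L) \subseteq \dom(\spat)$: indeed $\spat|_L$ keeps exactly those labeled elements $\lbl_i.p_i$ whose label $\lbl_i$ has a prefix in $L$, without modifying the labels themselves. Since any subset of a prefix code is again a prefix code (if $x \leq y$ for $x,y$ in the subset then $x=y$ already holds in the ambient code), $\dom(\spat|_L)$ is a prefix code, and moreover the sub-patterns $p_i$ are unchanged so the recursive prefix-labeling hypothesis is inherited unchanged. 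For label projection, $\spat[\lbl]$ collects all $\lbl'.v$ such that $\lbl.\lbl'.v \in \spat$; thus $\dom(\spat[\lbl]) = \{\lbl' \mid \lbl\cdot\lbl' \in \dom(\spat)\}$. I would argue that prepending a fixed prefix $\lbl$ is injective on sequences, so distinct $\lbl'$ in $\dom(\spat[\lbl])$ come from distinct $\lbl\cdot\lbl'$ in $\dom(\spat)$; and if $\lbl' \leq \lbl''$ then $\lbl\cdot\lbl' \leq \lbl\cdot\lbl''$, so the prefix-code property of $\dom(\spat)$ forces $\lbl\cdot\lbl' = \lbl\cdot\lbl''$, hence $\lbl' = \lbl''$ by injectivity. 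Therefore $\dom(\spat[\lbl])$ is a prefix code. This last concatenation/truncation bookkeeping is the only place where a small lemma about prefixes and concatenation is needed, and it is the mildly fiddly part — but it is genuinely routine, amounting to the observation that left-concatenation by a fixed word is an order-embedding of $\Sigma^*$ under the prefix order.

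One subtlety worth flagging is that the lemma as stated talks only about the top-level domain being a prefix code, but ``prefix-labeled'' is a deeper structural property requiring every collection occurring inside to satisfy the same invariant. For restriction this is automatic because the inner patterns are copied verbatim. For $\spat[\lbl]$, the inner values $v$ appearing in the truncated elements are also copied verbatim from $\spat$, so their internal collections were already prefix-labeled by hypothesis; only the outermost label layer is rewritten, and that is what the argument above covers. Hence no induction on pattern depth is actually needed — a single case split at the top level suffices, with the inner prefix-labeling simply carried along. I do not anticipate any real obstacle; the proof is short, and the only care required is to state the ``concatenation preserves and reflects the prefix order'' fact precisely enough to conclude that $\dom(\spat[\lbl])$ inherits the prefix-code property.
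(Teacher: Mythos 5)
Your proof is correct and follows essentially the same route the paper takes: restriction is handled by observing that $\dom(\spat|_L)\subseteq\dom(\spat)$ and a subset of a prefix code is a prefix code, and projection by observing that truncating the fixed prefix $\lbl$ preserves the prefix-code property because left-concatenation preserves and reflects the prefix order. Your extra remark that the inner collections are copied verbatim (so the deeper prefix-labeling invariant is inherited without induction) is a useful explicit note, but it does not change the argument.
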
 
\begin{intronly}
 We show other properties
of patterns in  Appendix~\ref{app:pattern-proofs}.
\end{intronly}

\paragraph*{Backward Slicing}

The rules for backward slicing are given in \figref{nrc-backslice}.
The judgment $p, T \trbslice\rho, T'$ slices trace $T$ with respect to
a pattern $p$ to yield the slice $T'$ and sliced input environment
$\rho$.  The sliced input environment records what parts of the input
are needed to produce $p$; this is needed for slicing operations such
as let-binding or comprehensions.  The main new ideas are in the rules
for collection operations, particularly comprehensions. The slicing
rules $\textsc{SConst}$, $\textsc{SPrim}$, $\textsc{SVar}$,
$\textsc{SLet}$, $\textsc{SPair}$, $\textsc{SProj}_i$, and
$\textsc{SIf}$ follow essentially the same idea as in our previous
work~\cite{acar13jcs,perera12icfp}.  We focus discussion on the new
cases, but we review the key ideas for these operations here in order
to make the presentation self-contained.

\begin{figure*}[tb!]
\fbox{$p, T \trbslice \rho, S$}
\vspace{-2ex}
\begin{smathpar}
\inferrule*[right=SHole]
{\strut}
{
\vhole,T \trbslice [],\vhole
}
\and
\inferrule*[right=SConst]
{
\strut
}
{
p,\trc \trbslice [],\trc
}
\and
\inferrule*[right=SPrim]
{
\vany, T_1 \trbslice \rho_1,S_1\\
\cdots\\
\vany,T_n \trbslice\rho_n, S_n
}
{
p,\kwf(T_1,\ldots,T_n) \trbslice \rho_1 \sqcup \cdots \sqcup
\rho_n, \trf(S_1,\ldots,S_n)
}
\and
\inferrule*[right=SVar]
{\strut
}
{
p,x \trbslice[x\mapsto p], x
}
\and
\inferrule*[right=SLet]
{
p_2,T_2 \trbslice \rho_2[x\mapsto p_1],S_2\\
p_1,T_1 \trbslice \rho_1, S_1
}
{
p_2,\trlet{x}{T_1}{T_2} \trbslice \rho_1 \sqcup \rho_2, \trlet{x}{S_1}{S_2}
}
\and
\inferrule*[right=SPair]
{
p_1,T_1\trbslice \rho_1,S_1\\
p_2,T_2\trbslice \rho_2,S_2
}
{
(p_1,p_2),\trpair{T_1}{T_2} \trbslice \rho_1 \sqcup \rho_2, \trpair{S_1}{S_2}
}
\and
\inferrule*[right=SProj$_1$]{
\vpair{p}{\vhole},T \trbslice \rho,S
}{
p,\trfield{T}{\#_1} \trbslice \rho,\trfield{S}{\#_1}
}
\and
\inferrule*[right=SProj$_2$]{
\vpair{\vhole}{p},T \trbslice \rho,S
}{
p,\trfield{T}{\#_2} \trbslice \rho,\trfield{S}{\#_2}
}
\and
%
%
%
\inferrule*[right=SIf]
{
p,T' \trbslice \rho',S' \\
b,T \trbslice \rho,S
}
{
p,\trif{T}{e_1}{e_2}{b}{T'} \trbslice \rho'\sqcup \rho, \trif{S}{e_1}{e_2}{b}{S'}
}
\and
\inferrule*[right=SEmpty]
{
\strut
}
{
p, \emptyset \trbslice [],\emptyset
}
\and
\inferrule*[right=SSng]
{
p.\epsilon, T \trbslice \rho,S
}
{
p, \trsets{T} \trbslice \rho,\trsets{S}
}
%
\and
\inferrule*[right=SUnion]
{
p[1], T_1 \trbslice \rho_1, S_1 \\
p[2], T_2 \trbslice \rho_2, S_2 
}
{
 p, \trsetu{T_1}{T_2} \trbslice \rho_1 \sqcup \rho_2 , \trsetu{S_1}{S_2}
}
\and
\inferrule*[right=SComp]{
 p, x.\Theta \trbslice^* \rho', \Theta', p'\\
p', T \trbslice \rho,S
}{
p,\trcomp{e}{x}{T}{\Theta} \trbslice \rho \sqcup \rho', \trcomp{e}{x}{S}{\Theta'}
}
\and
\inferrule*[right=SEmptyP]
{
 \vany, T \trbslice\rho, S
}
{
 p, \trsetsum{T} \trbslice\rho, \trsetsum{S}
}
\and
\inferrule*[right=SSum]
{
 \vany, T \trbslice \rho,S
}
{
 p, \trsetise{T} \trbslice \rho,\trsetise{S}
}
\and
\inferrule*[right=SDiamond]{
\vany,T_1 \trbslice \rho_1,S_1\\
\vany,T_2 \trbslice \rho_2,S_2
}{
\vany, (T_1,T_2)\trbslice \rho_1 \sqcup \rho_2, (S_1,S_2)
}
\end{smathpar}
\fbox{$ p, x.\Theta \trbslice^* \rho, \Theta_0,p_0$}
\vspace{-5ex}
\begin{smathpar}
%
%
%
\inferrule*[right=SEmpty$^*$]
{
\strut
}
{
\emptyset, x.\emptyset \trbslice^* [], \emptyset, \emptyset
}
\and
\inferrule*[right=SSng$^*$]
{
p[\lbl], T \trbslice \rho[x\mapsto p_0],S
}
{
p, x.\{\lbl.T\} \trbslice^* \rho, \{\lbl.S\}, \{\lbl.p_0\}
}
\and
\inferrule*[right=SUnion$^*$]
{
p|_{\dom(\Theta_1)}, x.\Theta_1 \trbslice^* \rho_1,\Theta_1',p_1\\
p|_{\dom(\Theta_2)}, x.\Theta_2 \trbslice^* \rho_2,\Theta_2',p_2
}
{
p, x.\Theta_1 \semu \Theta_2 \trbslice^* \rho_1 \sqcup \rho_2,
\Theta_1' \semu \Theta_2', p_1 \semu p_2
}
\end{smathpar}
\caption{Backward trace slicing.}
\label{fig:nrc-backslice}
\end{figure*}

The rules for collections use labels and set pattern operations to
effectively undo the evaluation of the set pattern.  The rule $\textsc{SEmpty}$
is essentially the same as the constant rule.  The rule $\textsc{SSng}$
uses the singleton extraction operation $p.\epsilon$ to obtain a pattern
describing the single element of a singleton set value matching $p$.
The rule $\textsc{SUnion}$ uses the two projections $p[1]$ and $p[2]$ to obtain
the patterns describing the subsets obtained from the first and second
subtraces in the union pattern, respectively.

The rule $\textsc{SComp}$ uses a similar idea to let-binding.  We
slice the trace set $\Theta$ using an auxiliary judgment $p,x.\Theta
\trbslice^* \rho,\Theta_0,p_0$, obtaining a sliced input environment,
sliced trace set $\Theta_0$, and pattern $p_0$ describing the set of
values to which $x$ was bound.  We then use $p_0$ to slice backwards
through the subtrace $T$ that constructed the set.  The auxiliary
slicing judgment for trace sets has three rules: a trivial rule $\textsc{SEmpty}^*$ when
the set is empty, rule $\textsc{SSng}^*$ that uses label projection $p[\lbl]$ 
to handle a singleton trace set, and a rule $\textsc{SUnion}^*$ handling larger trace sets
by decomposing them into subsets.  This is essentially a structural
recursion over the trace set, and is deterministic even though the
rules can be used to decompose the trace sets in many different ways,
because $\semu$ is associative.
Rule $\textsc{SUnion}^*$ also requires that we restrict the set pattern to match the
domains of the corresponding trace patterns.

The slicing rules $\textsc{SSum}$ and $\textsc{SEmptyP}$ follow the
same idea as for primitive operations at base type: we require that
the whole set value be preserved exactly.  As discussed by Perera et
al.~\cite{perera12icfp} with primitive operations, this is a potential
source of overapproximation, since (for example) for an emptiness
test, all we really need is to preserve the number of elements in the
set, not their values.  The last rule shows how to slice pair patterns
when the pattern is $\vany$: we slice both of the subtraces by $\vany$
and combine the results.





It is straightforward to show that the slicing algorithm is
well-defined for consistent traces.  That is, if $\gamma,T \trrun v$
and $p\sqleq v$ then there exists $\rho,S$ such that $p,T \trbslice
\rho,S$ holds, where $\rho \sqleq \gamma$ and $S \sqleq T$.  We defer
the correctness theorem for slicing using simple patterns to the end of the next section,
since it is a special case
of correctness for slicing using enriched patterns.



Continuing our running example, consider the pattern $p =
\{[r_2].\kwrec{A{:}\vhole,B{:}8},[r_3].\vhole\}$.  The
slice of $T$ with respect to this pattern is of the form:
\[
\begin{array}{ll}
T' =   \trcomp{\_}{x}{R}{\{&[r_1].\vhole,\\
& [r_2].\trifthen{x.B=3}{\_}{\_}{\{\_\}},\\
&[r_3].\vhole~~ \}}\\
\end{array}
\]
and the slice of $R$ is $R' =
\{[r_1].\vhole,[r_2].\kwrec{A{:}\vhole,B{:}3,C{:}8},[r_3].\vhole\}$.
(That is, $R'$ is the value of $\rho(R)$, where $\rho$ is the pattern
environment produced by slicing $T$ with respect to $p$.)
Observe that the value of $A$ is not needed but the value of $B$ must
remain $3$ in order to preserve the control flow behavior of the trace
on $r_2$.  The holes indicate that changes to $r_1$ and $r_3$ in the
input cannot affect $r_2$.  However, a trace
matching $T'$ cannot be replayed if any of $r_1,r_2,r_3$ are deleted
from the input or if the $A$ field is removed from an input record,
because the replay rules require all of the labels mentioned in
collections or records in $T'$ to be present.
We now turn our attention to enriched patterns, which mitigate these
drawbacks.



\section{Enriched Patterns}
\label{sec:patterns}

So far we have considered only complete set patterns of the form
$\set{\lbl_1.p_1,\ldots,\lbl_n.p_n}$.  These patterns relate pairs of values
that have exactly $n$ elements labeled $\lbl_1,\ldots,\lbl_n$, each of
which  matches $p_1,\ldots,p_n$ respectively.  

This is awkward, as we already can observe in our running example
above: to obtain a slice describing how one record was computed, we
need to use a set pattern that lists all of the indexes in the
output.  Moreover,
the slice with respect to such a pattern may also include labeled
subtraces explaining why the other elements exist in the
output (and no others).
This information seems intuitively irrelevant to the value at
$\lbl_1$, and can be a major overhead if the collection is large.
We also have considered only binary pairs;  
it would be more convenient to
support record patterns directly.

In this section we sketch how to  enrich the language of patterns to allow for
\emph{partial set} and \emph{partial record patterns}, as follows:
\begin{eqnarray*}
  p &::=& \vhole \mid \vany \mid \vc \mid
 \spat \mid \rpat\\
\rpat &::=& \vrec{} \mid \vrec{\overline{A_i:p_i}} \mid
\vrec{\overline{A_i:p_i};\vhole} \mid
\vrec{\overline{A_i:p_i};\vany}\\
\spat &::=& \emptyset \mid \vset{\overline{\lbl_i.p_i}} \mid \vset{\overline{\lbl_i.p_i}} \udot \vhole
\mid \vset{\overline{\lbl_i.p_i}} \udot \vany
\end{eqnarray*}
Record patterns are of the form $\vrec{\overline{A_i:p_i}}$, listing
the fields and the patterns they must match, possibly followed by
$\vhole$ or $\vany$, which the remainder of the record must match.
The pattern $\{\overline{\lbl_i.p_i}\} \udot \vhole$ stands for a set
with labeled elements matching patterns $p_1,\ldots,p_n$, plus some
additional elements whose values we don't care about.  For example, we
can use the pattern $\{\lbl_1.p_1\} \udot \vhole$ to express interest
in why element $\lbl_1$ matches $p_1$, when we don't care about the
rest of the set.  The second partial pattern, $
\{\overline{\lbl_i.p_i}\} \udot \vany$, has similar behavior, but it
says that the rest of the sets being considered must be equal.  For
example, $\{\lbl.\vhole\}\udot \vany$ says that the two sets are equal
except possibly at label $\lbl$.
This pattern is needed mainly in order to ensure that we can define a
least upper bound on enriched patterns, since we cannot express
$(\{\lbl_1.v_1,\ldots,\lbl_n.p_n\} \udot \vhole) \sqcup \vany$
otherwise.

We define $\dom(\{\overline{\lbl_i{.}p_i}\}\udot \vhole) =
\dom(\{\overline{\lbl_i{.}p_i}\}\udot \vany) =
\{\lbl_1,\ldots,\lbl_n\}$. Disjoint union of enriched patterns
$\spat \semu \spat'$ is defined only if the domains are
prefix-disjoint, so that the labels of the result still form a prefix
code; this operation is defined in 
\figref{nrc-pattern-union}.  
\begin{figure}[t]
  \[\small
  \begin{array}{rcl}
    \vany \semu \vhole = \vhole \semu \vany =  \vhole \semu \vhole &=&
    \vhole\\
    \vany \semu \vany &=& \vany\\
    \spat\semu \emptyset = \emptyset \semu \spat &=& \spat\\
    \{\overline{\lbl_i.p_i}\} \semu \vhole = \vhole \semu \{\overline{\lbl_i.p_i}\} &=& \{\overline{\lbl_i.p_i}\} \udot \vhole\\
    \{\overline{\lbl_i.p_i}\} \semu \vany = \vany \semu \{\overline{\lbl_i.p_i}\} &=& \{\overline{\lbl_i.p_i}\} \udot \vany\\
    \{\overline{\lbl_i.p_i}\} \semu (\{\overline{\lbl_j'.p_j'}\}\udot \vhole) &=& (\{\overline{\lbl_i.p_i}\} \semu \{\overline{\lbl_j'.p_j'}\}) \udot \vhole\\
    \{\overline{\lbl_i.p_i}\} \semu (\{\overline{\lbl_j'.p_j'}\} \udot \vany) &=&  (\{\overline{\lbl_i.p_i}\} \semu \{\overline{\lbl_j'.p_j'}\}) \udot \vany\\
    (\{\overline{\lbl_i.p_i}\}\udot \vhole) \semu (\{\overline{\lbl_j'.p_j'}\}\udot \vhole) &=& (\{\overline{\lbl_i.p_i}\} \semu \{\overline{\lbl_j'.p_j'}\}) \udot \vhole\\
    (\{\overline{\lbl_i.p_i}\}\udot \vany) \semu (\{\overline{\lbl_j'.p_j'}\}\udot \vhole) & =&  (\{\overline{\lbl_i.p_i}\} \semu \{\overline{\lbl_j'.p_j'}\}) \udot \vhole\\
    (\{\overline{\lbl_i.p_i}\}\udot \vhole) \semu (\{\overline{\lbl_j'.p_j'}\}\udot \vany) &=& (\{\overline{\lbl_i.p_i}\} \semu \{\overline{\lbl_j'.p_j'}\}) \udot \vhole\\
    (\{\overline{\lbl_i.p_i}\}\udot \vany) \semu
    (\{\overline{\lbl_j'.p_j'}\}\udot \vany) &=&
    (\{\overline{\lbl_i.p_i}\} \semu \{\overline{\lbl_j'.p_j'}\}) \udot
    \vany
  \end{array}\]
  
\caption{Enriched pattern union}\label{fig:nrc-pattern-union}
\end{figure}



We now extend the definitions of $p[\vany/\vhole]$ and
$\sqcup$ to account for extended patterns.  We extend the $[\vany/\vhole]$
substitution operation as follows:
\[\small\begin{array}{rcl}
  \vrec{\overline{A_i:p_i}}[\vany/\vhole] &=&
  \vrec{\overline{A_i:p_i[\vany/\vhole]}}\\
  \vrec{\overline{A_i:p_i};\vhole}[\vany/\vhole] &=&
  \vrec{\overline{A_i:p_i[\vany/\vhole];}\vany}\\
   \vrec{\overline{A_i:p_i};\vany}[\vany/\vhole] &=&
  \vrec{\overline{A_i:p_i[\vany/\vhole]};\vany}\\
 (\{\overline{\lbl_i.p_i}\} \udot \vhole)[\vany/\vhole] &=&
  \{\overline{\lbl_i.p_i[\vany/\vhole]}\} \udot \vany\\
  (\{\overline{\lbl_i.p_i}\} \udot \vany)[\vany/\vhole] &=&
  \{\overline{\lbl_i.p_i[\vany/\vhole]}\} \udot \vany
\end{array}\]

\begin{figure*}[tb]
\small
  \begin{eqnarray*}
\vset{\overline{\lbl_i.p_i},\overline{\lbl_j'.q_j}} \sqcup  (\vset{\overline{\lbl_i.p_i'}}\udot \vhole) &=&  \vset{\overline{\lbl_i.p_i\sqcup p_i'},\overline{\lbl_j'.q_j}}\\
\vset{\overline{\lbl_i.p_i},\overline{\lbl_j'.q_j}} \sqcup  (\vset{\overline{\lbl_i.p_i'}}\udot \vany) &=&  \vset{\overline{\lbl_i.p_i\sqcup p_i'},\overline{\lbl_j'.q_j[\vany/\vhole]}}\\
(\vset{\overline{\lbl_i.p_i},\overline{\lbl_j'.q_j}} \udot \vhole) \sqcup
(\vset{\overline{\lbl_i.p_i'},\overline{\lbl_k''.r_k}}\udot \vhole) &=&
\vset{\overline{\lbl_i.p_i\sqcup p_i'},\overline{\lbl_j'.q_j},\overline{\lbl_k''.r_k} } \udot \vhole\\
(\vset{\overline{\lbl_i.p_i},\overline{\lbl_j'.q_j}} \udot \vhole) \sqcup
(\vset{\overline{\lbl_i.p_i'},\overline{\lbl_k''.r_k}}\udot \vany) &=&
\vset{\overline{\lbl_i.p_i\sqcup
    p_i'},\overline{\lbl_j'.q_j[\vany/\vhole]},\overline{\lbl_k''.r_k}} \udot \vany\\
(\vset{\overline{\lbl_i.p_i},\overline{\lbl_j'.q_j}} \udot \vany) \sqcup
(\vset{\overline{\lbl_i.p_i'},\overline{\lbl_k''.r_k}}\udot \vany) &=&
\vset{\overline{\lbl_i.p_i\sqcup
    p_i'},\overline{\lbl_j'.q_j[\vany/\vhole]},\overline{\lbl_k''.r_k[\vany/\vhole]}} \udot \vany
  \end{eqnarray*}
   \begin{eqnarray*}
\vrec{\overline{A_i:p_i},\overline{B_j:q_j}} \sqcup  (\vrec{\overline{A_i:p_i'};\vhole}) &=&  \vrec{\overline{A_i:p_i\sqcup p_i'},\overline{B_j:q_j}}\\
\vrec{\overline{A_i:p_i},\overline{B_j:q_j}} \sqcup  (\vrec{\overline{A_i:p_i'};\vany}) &=&  \vrec{\overline{A_i:p_i\sqcup p_i'},\overline{B_j:q_j[\vany/\vhole]}}\\
(\vrec{\overline{A_i:p_i},\overline{B_j:q_j}; \vhole}) \sqcup
(\vrec{\overline{A_i:p_i'},\overline{C_k:r_k};\vhole}) &=&
\vrec{\overline{A_i:p_i\sqcup p_i'},\overline{B_j:q_j},\overline{C_k:r_k} ; \vhole}\\
(\vrec{\overline{A_i:p_i},\overline{B_j:q_j};\vhole}) \sqcup
(\vrec{\overline{A_i:p_i'},\overline{C_k:r_k};\vany}) &=&
\vrec{\overline{A_i:p_i\sqcup
    p_i'},\overline{B_j:q_j[\vany/\vhole]},\overline{C_k:r_k}; \vany}\\
(\vrec{\overline{A_i:p_i},\overline{B_j:q_j}; \vany}) \sqcup
(\vrec{\overline{A_i:p_i'},\overline{C_k:r_k}; \vany}) &=&
\vrec{\overline{A_i:p_i\sqcup
    p_i'},\overline{B_j:q_j[\vany/\vhole]},\overline{C_k:r_k[\vany/\vhole]}; \vany}
  \end{eqnarray*} \caption{Least upper bound for enriched patterns
    (excluding some symmetric cases).}
  \label{fig:nrc-lub-extended}
\end{figure*}

We handle the additional cases of the $\sqcup$ operation in
\figref{nrc-lub-extended}, and extend the $\simat{p}$ relation as
shown in \figref{nrc-sim-extended}.  Note that taking the least
upper bound of a partial pattern with a complete pattern yields a
complete pattern, while taking the least upper bound of partial
patterns involving $\vany$ again relies on the $[\vany/\vhole]$
substitution operation.

\begin{figure}[tb]
  \begin{smathpar}
\inferrule*
{v_1 \simat{p_1} v_1' \\
\cdots \\
v_n \simat{p_n} v_n'
}
{\kwrec{\overline{A_i:v_i}} \simat{\kwrec{\overline{A_i:p_i}}}
  \kwrec{\overline{A_i:v_i'}}
}
\and
\inferrule*
{v_1 \simat{p_1} v_1' \\
\cdots \\
v_n \simat{p_n} v_n'
}
{\kwrec{\overline{A_i:v_i},\overline{B_j{:}w_j}}
    \simat{\kwrec{\overline{A_i:p_i};\vhole}}
  \kwrec{\overline{A_i:v_i'}, \overline{C_k{:}w_k'}}
}
\and
\inferrule*
{v_1 \simat{p_1} v_1'\\
\cdots\\
v_n \simat{p_n} v_n'
}
{\kwrec{\overline{A_i:v_i},\overline{B_j{:}w_j}} \simat{\kwrec{\overline{A_i:p_i};\vany}}
  \kwrec{\overline{A_i:v_i'},\overline{B_j{:}w_j}}
}
\and
    \inferrule*{
v_1 \simat{p_1} v_1'\\
\cdots\\
v_n \simat{p_n} v_n'
 }{
\set{\overline{\lbl_i.v_i}}\semu v \simat{\set{\overline{\lbl_i.p_i}} \udot \vhole} \set{\overline{\lbl_i.v_i'}} \semu v'
}
\and
    \inferrule*{
v_1 \simat{p_1} v_1'\\
\cdots\\
v_n \simat{p_n} v_n'
 }{
\set{\overline{\lbl_i.v_i}}\semu v \simat{\set{\overline{\lbl_i.p_i}} \udot \vany} \set{\overline{\lbl_i.v_i'}} \semu v
}
  \end{smathpar}
  \caption{Enriched pattern equivalence}
  \label{fig:nrc-sim-extended}
\end{figure}

We extend the singleton extraction $p.\epsilon$, label projection
$p[\lbl]$, and restriction $p|_L$ operations on set patterns
as follows:
\[\small\begin{array}{rcl}
  (\{\epsilon.p\} \udot \vany) .\epsilon = (\{\epsilon.p\} \udot
  \vhole).\epsilon &=& p \\
\lbl \cdot (\vset{\overline{\lbl_i{.}p_i}} \udot \vhole) &=& (\lbl
\cdot \vset{\overline{\lbl_i{.}p_i}} )\semu \vhole\\
\lbl \cdot (\vset{\overline{\lbl_i{.}p_i}} \udot \vany) &=& (\lbl
\cdot \vset{\overline{\lbl_i{.}p_i}}) \semu \vany\\
(\vset{\overline{\lbl_i{.}p_i}} \udot \vhole)[\lbl] &=&
(\vset{\overline{\lbl_i{.}p_i}} [\lbl]) \semu \vhole\\
 (\vset{\overline{\lbl_i{.}p_i}} \udot \vany) [\lbl]&=& (
 \vset{\overline{\lbl_i{.}p_i}}[\lbl]) \semu \vany\\
(\vset{\overline{\lbl_i{.}p_i}} \udot \vhole)|_L &=&
(\vset{\overline{\lbl_i{.}p_i}}|_L) \semu \vhole\\
(\vset{\overline{\lbl_i{.}p_i}} \udot \vany)|_L &=&
(\vset{\overline{\lbl_i{.}p_i}}|_L) \semu \vany
\end{array}\]
Note that in many cases, we use the disjoint union operation 
$\semu$ on the right-hand side; this ensures, for example, that we never produce results of
the form $\emptyset \udot \vhole$ or $\emptyset \udot \vany$; these
are normalized to $\vhole$ and $\vany$ respectively, and this
normalization reduces the number of corner cases in the
slicing algorithm.

We define a record pattern projection operation $p.A$ as follows:
\[\small\begin{array}{rcl}
\kwrec{A_1{:}p_1,\ldots,A_n{:}p_n}.A_i &=& p_i \qquad \vhole.A = \vhole  \\
\kwrec{A_1{:}p_1,\ldots,A_n{:}p_n;\_}.A_i &=& p_i\qquad  \vany.A =\vany\\
\kwrec{A_1{:}p_1,\ldots,A_n{:}p_n;\vhole}.B &=& \vhole \quad (B \notin \{A_1,\ldots,A_n\})\\
\kwrec{A_1{:}p_1,\ldots,A_n{:}p_n;\vany}.B &=& \vany \quad (B \notin \{A_1,\ldots,A_n\})
\end{array}\]

We extend the slicing judgment to accommodate these new patterns in
Figure~\ref{fig:nrc-back-slice-extended}.  The rules $\textsc{SRec}$
and $\textsc{SProj}_A$ are similar to those for pairs, except that we
use the field projection operation in the case for a record trace, and
we use partial record patterns $\vrec{A:p;\vhole}$ in the case for a
field projection trace.  The added rules $\textsc{SHole}^*$ and
$\textsc{SDiamond}^*$ handle the possibility that a partial pattern reduces to
$\vhole$ or $\vany$ through projection; we did not need to handle this
case earlier because a simple set pattern is either a hole (which
could be handled by the rule $\vhole,T \trbslice [],\tremp$) or a
complete set pattern showing all of the labels of the result.

\begin{figure}[tb]
\fbox{$p, T \trbslice \rho, S$}
\vspace{-4ex}
\begin{smathpar}
\inferrule*[right=SRec]{
\vrec{A{:}p;\vhole},T \trbslice \rho,S
}{
p,\trfield{T}{A} \trbslice \rho,\trfield{S}{A}
}
\and
\inferrule*[right=SProj$_A$]
{
p.A_1,T_1 \trbslice \rho_1,S_1\\
\cdots\\
p.A_n,T_n \trbslice \rho_n,S_n
}
{
p, \trrec{A_1{:}T_1,\ldots,A_n{:}T_n}
\trbslice \rho_1 \sqcup \cdots \sqcup \rho_n,
\kwrec{A_1{:}S_1,\ldots,A_n{:}S_n}
}
\end{smathpar}
\fbox{$p, x.\Theta \trbslice^* \rho, \Theta',p_0$}
\vspace{-1ex}
\begin{smathpar}
\inferrule*[right=SHole$^*$]
{
\strut
}
{
\vhole, x.\Theta \trbslice^* [], \emptyset, \vhole
}
\and
\inferrule*[right=SDiamond$^*$]
{
\strut
}
{
\vany, x.\emptyset \trbslice^* [], \emptyset, \emptyset
}
\end{smathpar}
\caption{Backward trace slicing over enriched patterns.}
\label{fig:nrc-back-slice-extended}
\end{figure}

Both simple and extended patterns satisfy a number of lemmas that are
required to prove the correctness of trace slicing.  
\begin{lemma}[Properties of union and restriction]
~
\begin{enumerate}
\item If $p_1 \sqleq v_1$ and $p_2 \sqleq v_2$ and $v_1 \simat{p_1}
    v_1'$ and $v_2 \simat{p_2} v_2'$ then $v_1 \semu v_2 \simat{p_1
      \semu p_2} v_1' \semu v_2'$, provided all of these disjoint unions
    are defined.
\item If $p \sqleq v_1 \semu v_2$ and $L_1 \leq \dom(v_1)$ and
    $L_2 \leq \dom(v_2)$ and $L_1,L_2$ are prefix-disjoint, then $p|_{L_1} \sqleq v_1$ and
    $p|_{L_2} \sqleq v_2$.
\end{enumerate}
\end{lemma}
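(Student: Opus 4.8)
The plan is to prove each part by a finite case analysis mirroring the relevant defining clauses; no genuine induction is needed, since in part~1 the $\simat{}$-hypotheses already expose the decompositions and componentwise equivalences we need, and in part~2 we never descend into the element patterns.

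\emph{Part 1.} Since $v_1 \semu v_2$ is a collection, $p_1$ and $p_2$ must be set patterns, i.e.\ of the form $\vhole$, $\vany$, $\emptyset$, $\{\overline{\lbl_i.p_i}\}$, $\{\overline{\lbl_i.p_i}\} \udot \vhole$, or $\{\overline{\lbl_i.p_i}\} \udot \vany$. I would go through the cases following the table defining $\semu$ on patterns (\figref{nrc-pattern-union}). The matching hypotheses $p_1 \sqleq v_1$ and $p_2 \sqleq v_2$, together with the assumption that $v_1 \semu v_2$ is defined, guarantee that $\dom(p_1)$ and $\dom(p_2)$ are prefix-disjoint, so $p_1 \semu p_2$ is genuinely defined and the labels listed in each $p_i$ really occur in $v_i$. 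For each table entry one reads off and checks the corresponding clause of the equivalence relation (Figures~\ref{fig:nrc-sim} and~\ref{fig:nrc-sim-extended}). The clauses where one operand is $\vhole$, or where the combined pattern has a $\udot\vhole$ tail, are immediate, because such a pattern places no constraint on the parts of the two collections outside the explicitly listed labels. When an operand is $\vany$ (so the result has a $\udot\vany$ tail), $v_i \simat{\vany} v_i'$ forces $v_i = v_i'$, so the common remainder demanded by the $\udot\vany$ clause is exactly that shared piece; the purely complete case just reassembles two domain-disjoint element lists. The only mildly delicate entries are those combining a $\udot\vany$ tail with a $\udot\vhole$ tail, whose result has a $\udot\vhole$ tail: there one must exhibit a single consistent decomposition of both $v_1 \semu v_2$ and $v_1' \semu v_2'$ as ``listed part $\semu$ remainder'', which works because all the sub-domains in play are pairwise prefix-disjoint once $v_1 \semu v_2$ and $v_1' \semu v_2'$ are known to be defined.

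\emph{Part 2.} Here too $p$ must be a set pattern; the $\vhole$, $\vany$, and $\emptyset$ cases are trivial ($\vhole|_{L_1} = \vhole \sqleq v_1$ and $\vany|_{L_1} = \vany \sqleq v_1$ hold for every $v_1$, and $p = \emptyset$ forces $v_1 = v_2 = \emptyset$). The substantive case is $p$ of the form $\{\overline{\lbl_i.p_i}\}$, $\{\overline{\lbl_i.p_i}\}\udot\vhole$, or $\{\overline{\lbl_i.p_i}\}\udot\vany$. From $p \sqleq v_1 \semu v_2$ we obtain $\{\overline{\lbl_i}\} \subseteq \dom(v_1) \semu \dom(v_2)$ with $p_i \sqleq (v_1\semu v_2)(\lbl_i)$ for each $i$ (with equality of domains in the complete case). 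The heart of the argument is the prefix-code fact: for any $\ell \in \dom(v_1) \semu \dom(v_2)$, $\ell$ extends some element of $L_1$ iff $\ell \in \dom(v_1)$ (symmetrically for $L_2$). The forward direction is immediate from $L_1 \leq \dom(v_1)$; the reverse uses that two prefixes of a common sequence are comparable: if $\ell \notin \dom(v_1)$ then $\ell \in \dom(v_2)$, so $L_2 \leq \dom(v_2)$ gives some $\ell_2 \in L_2$ with $\ell_2 \leq \ell$, and together with the assumed $\ell_1 \in L_1$, $\ell_1 \leq \ell$, this forces $\ell_1$ and $\ell_2$ to be comparable, contradicting prefix-disjointness of $L_1$ and $L_2$. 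Given this, $p|_{L_1}$ retains exactly the pairs $\lbl_i.p_i$ with $\lbl_i \in \dom(v_1)$, with labels kept intact (restriction, unlike $p[\lbl]$, does not truncate). In the complete case this gives $\dom(p|_{L_1}) = \dom(v_1)$ exactly, and since the retained $p_i$ satisfy $p_i \sqleq v_1(\lbl_i)$ we conclude $p|_{L_1} \sqleq v_1$; in the partial case the residual $\udot\vhole$ (resp.\ $\udot\vany$) tail --- normalized to $\vhole$ (resp.\ $\vany$) when no $\lbl_i$ lies in $\dom(v_1)$ --- absorbs the remaining elements of $v_1$, and the same reasoning yields $p|_{L_1} \sqleq v_1$. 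The symmetric argument gives $p|_{L_2} \sqleq v_2$.

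I expect the main obstacle to be organizational rather than conceptual: in part~1, keeping the many $\semu$-table entries and the $\vhole$/$\vany$-tail interactions straight while tracking which sub-domains are prefix-disjoint; in part~2, being careful that restriction preserves whole labels so that the prefix-code separation argument applies verbatim. I would isolate the prefix-code fact above as a small standalone observation so that the case analyses stay short.
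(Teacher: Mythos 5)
Your proposal is correct and follows essentially the same route as the paper's proof: a case analysis on the shapes of the set patterns (illustrated there by the representative $\udot\vhole$/$\udot\vany$ combination you flag as the delicate one) for part~1, and a case analysis on $p$ for part~2 in which restriction keeps exactly the listed elements whose labels have a prefix in $L_1$. Your standalone prefix-code observation in part~2 merely makes explicit the comparability/prefix-disjointness argument that the paper's terser wording leaves implicit, so there is nothing substantively different to compare.
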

\begin{lemma}[Projection and $\sqleq$]
~
  \begin{enumerate}
  \item If $p \sqleq \set{\epsilon.v}$ then $p.\epsilon \sqleq v$.
  \item If $p \sqleq 1\cdot v_1 \semu 2 \cdot v_2$ then $p[1] \sqleq
    v_1$ and $p[2] \sqleq v_2$.
  \item If $p \sqleq \lbl\cdot v $ then $p[\lbl] \sqleq v$.
  \item If $p \sqleq \vrec{\overline{A_i:v_i}}$ then $p.A_i \sqleq
    v_i$.
 \end{enumerate}
\end{lemma}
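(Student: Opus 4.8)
The plan is to prove all four parts by a single case analysis on the top-level form of the pattern $p$, using the fact that by definition $p \sqleq v$ means $p \sqcup v = v$ (with $v$ read as a hole-free pattern), together with the clause-by-clause definitions of $\sqcup$ in \figreftwo{nrc-lub}{nrc-lub-extended} and the definitions of the projection operations $p.\epsilon$, $p[\lbl]$, and $p.A$. Since a value contains neither $\vhole$ nor $\vany$, the whole argument reduces to observing which patterns $p$ can possibly satisfy $p \sqcup v = v$ for a value $v$ of a prescribed shape.

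Concretely, I would first isolate a \emph{shape lemma}: if $v$ is a value and $p \sqcup v = v$ (so in particular $p \sqcup v$ is defined), then $p$ is either $\vhole$ or $\vany$, or $p$ has the same outermost constructor as $v$ --- namely $p = v$ when $v$ is a constant; $p$ is a record pattern, either complete or of the form $\vrec{\overline{A_j:p_j};\vhole}$ or $\vrec{\overline{A_j:p_j};\vany}$, whose listed field set is contained in that of $v$ and equal to it in the complete case, with each listed $p_j \sqleq v_j$; or $p$ is a set pattern, either complete or of the form $\vset{\overline{\lbl_j.p_j}}\udot\vhole$ or $\vset{\overline{\lbl_j.p_j}}\udot\vany$, whose listed label set is contained in $\dom(v)$ and equal to it in the complete case, with each listed $p_j \sqleq v_j$. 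This is proved by inspecting the $\sqcup$-clauses: a value carries no $\vany$, so any clause whose right-hand side retains a residual $\vany$ or applies the $[\vany/\vhole]$ rewrite to a \emph{non-plain} component cannot produce $v$; the remaining clauses each force the stated shape, and the prefix-disjointness side condition on $\semu$ together with the normalization of $\emptyset \udot \vhole$ and $\emptyset \udot \vany$ to $\vhole$ and $\vany$ rule out the degenerate cases.

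Given the shape lemma, parts (1)--(4) are immediate. If $p \in \{\vhole,\vany\}$, the relevant projection returns the same hole ($\vhole.\epsilon = \vhole$, $\vhole[\lbl] = \vhole$, $\vhole.A = \vhole$, and similarly for $\vany$), which is $\sqleq$ any value, so those subcases are trivial. Otherwise $p$ has the matching shape: for (1), the listed part of $p$ must be $\set{\epsilon.p'}$ with $p' \sqleq v$, and all three admissible forms give $p.\epsilon = p'$; for (2) and (3), $p[\lbl]$ collects exactly the element patterns of $p$ sitting under labels that begin with $\lbl$, with that prefix stripped, so its listed part has labels $\{\lbl' \mid \lbl\cdot\lbl' \in \dom(p)\} \subseteq \dom(v)$ --- equal to $\dom(v)$ when $p$ is complete, because the remaining labels of $1\cdot v_1 \semu 2\cdot v_2$ (respectively $\lbl\cdot v$) start with a different prefix --- with component patterns unchanged and each $\sqleq$ the corresponding component of $v$, and with any trailing $\vhole$ or $\vany$ carried along; the shape lemma then yields $p[\lbl] \sqleq v$, using the earlier observation that $p[\lbl]$ is again prefix-labeled. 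For (4), $p.A_i = p_i \sqleq v_i$ when $A_i$ is listed, and $p.A_i \in \{\vhole,\vany\}$ otherwise, which is again $\sqleq v_i$.

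The one step that requires genuine care is the shape lemma itself: \figref{nrc-lub-extended} has many clauses, so checking that ``$p \sqcup v$ defined and equal to $v$'' really does pin down the containment of field and label sets, rather than mere compatibility, in the partial cases is where essentially all of the bookkeeping lives; everything downstream is mechanical. The preceding Lemma on union and restriction is handled by the same clause inspection --- reading the $\semu$-clauses of \figref{nrc-pattern-union} forward for its first part and the restriction clauses for its second --- with an appeal to \thmref{prefix-labeled} to ensure that the disjoint unions and restrictions involved are well-defined.
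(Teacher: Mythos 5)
Your proposal is correct and follows essentially the same route as the paper: the paper's proof of this lemma is a direct case analysis on the form of $p$ (hole, $\vany$, complete, or partial pattern) in each part, reading off the admissible shapes from the definition of $\sqcup$ and then applying the projection clauses. You merely factor that inversion reasoning into an explicit ``shape lemma'' before dispatching the four parts, which is a presentational repackaging rather than a different argument.
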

\begin{lemma}[Projection and $\simat{p}$]
~
 \begin{enumerate}
  \item If $p \sqleq \{\epsilon.v\}$ and $v \simat{p.\epsilon} v'$ then $\set{\epsilon.v} \simat{p}
    \set{\epsilon.v'}$.  
  \item If $p \sqleq 1 \cdot v_1 \semu 2 \cdot v_2$ and $v_1 \simat{p[1]}
    v_1'$ and $v_2 \simat{p[2]} v_2'$ then $1 \cdot v_1 \semu 2 \cdot
    v_2 \simat{p} 1 \cdot v_1' \semu 2 \cdot v_2'$.
  \item If $p \sqleq \lbl \cdot v$ and $v \simat{p[\lbl]} v'$ then
    $\lbl\cdot v \simat{p} \lbl \cdot v'$.
  \item If $p \sqleq \vrec{\overline{A_i:v_i}}$ and $v_1\simat{p.A_1}
    v_1', \ldots, v_n \simat{p.A_n} v_n'$ then $\vrec{\overline{A_i:v_i}} \simat{p} \vrec{\overline{A_i:v_i'}}$.
\end{enumerate}
\end{lemma}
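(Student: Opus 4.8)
The plan is to prove all four parts uniformly by case analysis on the shape of the pattern $p$, in each case using the hypothesis $p\sqleq v$ (for whichever structured value $v$ is relevant) to pin down the admissible shapes of $p$, and then reading the conclusion off the corresponding inference rule for $\simat{p}$ in Figures~\ref{fig:nrc-sim} and~\ref{fig:nrc-sim-extended}. No induction is needed: each case unfolds one projection operation ($p.\epsilon$, $p[1]$ and $p[2]$, $p[\lbl]$, or $p.A_i$) and then applies a single $\simat{}$ rule. The first step is to record that $p\sqleq v'$ with $v'$ structured (a singleton $\set{\epsilon.v}$, a union $1\cdot v_1\semu 2\cdot v_2$, a prefixed set $\lbl\cdot v$, or a record $\vrec{\overline{A_i:v_i}}$) forces $p$ to be $\vhole$, $\vany$, or a set- (resp.\ record-) pattern of matching shape: complete, or one of the partial forms $\{\overline{\lbl_i.p_i}\}\udot\vhole$ / $\{\overline{\lbl_i.p_i}\}\udot\vany$ (resp.\ $\vrec{\overline{A_i:p_i};\vhole}$ / $\vrec{\overline{A_i:p_i};\vany}$), with constrained labels/fields drawn from $\dom(v')$. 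This holds because $\sqcup$ is undefined on mismatched constructors, so no constant or pair pattern, nor a set/record pattern with an incompatible label or field, can lie below $v'$.

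The base shapes are immediate. If $p=\vhole$ the conclusion holds because $\simat{\vhole}$ relates any two values. If $p=\vany$, then every projection of $p$ is again $\vany$ (by the defining clauses for $.\epsilon$, $[\lbl]$, $|_L$, $.A_i$), so the hypothesis on the projected pattern says $v\simat{\vany}v'$, i.e.\ $v=v'$; hence the two structured values coincide and $\simat{\vany}$ relates them.

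The interesting cases are when $p$ is a (complete or partial) set or record pattern. For a complete set pattern $\{\overline{\lbl_i.p_i}\}$ the relevant projection simply selects — and, for $p[\lbl]$, re-indexes — the sub-patterns $p_i$, so the hypotheses $v_i\simat{p_i}v_i'$ become exactly the premises of the set-equivalence rule; likewise for complete records with $p.A_i$. For the partial forms one checks that projection preserves the tail marker: $p.\epsilon$, $p[\lbl]$, $p|_L$ and $p.A_i$ each send $\{\overline{\lbl_i.p_i}\}\udot\vhole$ to something of shape $(\cdots)\semu\vhole$ and $\{\overline{\lbl_i.p_i}\}\udot\vany$ to $(\cdots)\semu\vany$, modulo the normalizations $\emptyset\udot\vhole=\vhole$ and $\emptyset\udot\vany=\vany$. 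The $\udot\vhole$ / $;\vhole$ subcases then go through because the residual component is unconstrained on both sides; the $\udot\vany$ / $;\vany$ subcases go through because the hypothesis on the projected pattern already forces the residuals of $v$ and $v'$ to be equal, and the label-prepending $\lbl\cdot(-)$ of part~3 — and the split of the union into its $1\cdot(-)$ and $2\cdot(-)$ halves in part~2 — carry that equality up to the full values.

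The only genuinely fiddly point, and the part I would be most careful about, is the label arithmetic in parts~2 and~3: verifying that $\dom(p)$ really partitions cleanly (every label of $p$ begins with $1$ or with $2$, resp.\ with $\lbl$), that $p[1]$- and $p[2]$-style re-indexing composes correctly with $1\cdot(-)$, $2\cdot(-)$ and $\semu$, and that the prefix-disjointness side conditions on disjoint unions stay satisfied throughout — for which one appeals to Theorem~\ref{thm:prefix-labeled} and the earlier lemmas showing $p[\lbl]$ and $p|_L$ are well defined on prefix codes. These are the same manipulations used in the companion lemma ``Projection and $\sqleq$'', so in practice I would develop the two proofs together.
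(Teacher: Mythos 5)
Your proposal is correct and follows essentially the same route as the paper's proof: a case analysis on the shape of $p$ (using $p \sqleq v$ to rule out mismatched constructors), immediate handling of $\vhole$ and $\vany$, and then unfolding the projection operations so that the hypotheses become the premises of the corresponding $\simat{p}$ rules, with the union/prefix bookkeeping in parts 2--3 discharged exactly as the paper does via the disjoint-union property and preservation of the partial-pattern tail. No gaps to report.
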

\begin{intronly}
Proofs are collected in 
  Appendix~\ref{app:pattern-proofs}.
\end{intronly}

We now state the key correctness property for slicing.  Intuitively,
it says that if we slice $T$ with respect to output pattern $p$, obtaining
a slice $\rho$ and $S$, then $p$ will be reproduced on recomputation 
under any change to the input and trace that is consistent with the slice ---
formally, that means that the changed trace $T'$ must match the sliced
trace $S$, and the changed input $\gamma'$ must match $\gamma$ modulo $\rho$.

\begin{intronly}
  \begin{theorem}[Correctness of Slicing]~
    \label{thm:trace-slicing}
    \begin{enumerate}
    \item Suppose $\gamma,T \trrun v$ and $p \sqleq v$ and $p, T
      \trbslice \rho, S$.  Then for all $\gamma' \simat{\rho} \gamma$
      and $T' \sqgeq S$ such that $\gamma',T' \trrun v'$ we have $v'
      \simat{p} v$.

    \item Suppose $\gamma,x\in v_0, \Theta \trrun^* v$ and $p \sqleq
      v$ and $p, x.\Theta_0 \trbslice \rho, \Theta_0', p_0$, where
      $\Theta_0 \subseteq \Theta$.  Then for all $\gamma'
      \simat{\rho}\gamma$ and $v_0' \simat{p_0} v_0$ and
      $\Theta'\sqgeq \Theta_0'$ such that $\gamma',x\in v_0,\Theta'
      \trrun^* v'$ we have $v' \simat{p} v$.
    \end{enumerate}
  \end{theorem}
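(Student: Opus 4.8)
The plan is to prove both parts simultaneously by induction on the derivation of the backward-slicing judgment --- $\trbslice$ for Part~1 and $\trbslice^*$ for Part~2 --- since Part~1's rule $\textsc{SComp}$ invokes Part~2 and Part~2's rule $\textsc{SSng}^*$ bottoms out in a Part~1 call. Before starting I would record, and carry along in the same induction, an auxiliary \emph{validity} invariant that is already implicit in the well-definedness of slicing: whenever $\gamma,T\trrun v$, $p\sqleq v$ and $p,T\trbslice\rho,S$, every recursive slicing call is made on a pattern $\sqleq$ the corresponding replay value (for instance $p_1\sqleq v_1$ in $\textsc{SLet}$ and $p'\sqleq v_0$ in $\textsc{SComp}$). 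This is exactly what makes the induction hypotheses applicable; it is proved using the \emph{Projection and $\sqleq$} lemma and the prefix-code property of \thmref{prefix-labeled}. I would also use two trivial structural facts throughout: if $S\ne\vhole$ then any $T'\sqgeq S$ has the same top constructor as $S$, so $T'$ decomposes to match $S$; and $\gamma'\simat{\rho_1\sqcup\rho_2}\gamma$ implies $\gamma'\simat{\rho_i}\gamma$, so a combined input slice can be split when feeding the sub-derivations.

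For Part~1 the cases fall into three groups. The leaf cases $\textsc{SHole}$, $\textsc{SConst}$, $\textsc{SEmpty}$, $\textsc{SVar}$ are immediate: $\simat{\vhole}$ relates everything; a constant or $\emptyset$ replays unchanged, and $p\sqleq v$ then forces $v'\simat{p}v$; and $\gamma'\simat{[x\mapsto p]}\gamma$ is by definition $\gamma'(x)\simat{p}\gamma(x)$. The ``exact'' cases $\textsc{SPrim}$, $\textsc{SSum}$, $\textsc{SEmptyP}$, $\textsc{SDiamond}$ slice their subtraces against $\vany$, so the hypotheses give \emph{equality} of the sub-results; applying the same semantic operation, aggregate, or pairing to equal arguments yields equal results, and an equal value matches any $p\sqleq v$. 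The compositional cases $\textsc{SLet}$, $\textsc{SPair}$, $\textsc{SProj}_i$, $\textsc{SRec}$, $\textsc{SProj}_A$, $\textsc{SSng}$, $\textsc{SUnion}$ all follow one recipe: decompose $T'$ to match $S$; push $p$ through the relevant projection ($p.\epsilon$, $p[1]$, $p[2]$, $p.A$, or the environment entry $[x\mapsto p_1]$), using the \emph{Projection and $\sqleq$} lemma to obtain valid sub-patterns; invoke the IH(s) to get $\simat{}$-equivalence of the recomputed parts; and reassemble using the \emph{Projection and $\simat{p}$} lemma. The one genuinely different member of this group is $\textsc{SIf}$: the test trace is sliced against the \emph{literal} Boolean $b$, and inverting the original replay gives $\gamma,T\trrun b$ for the test, so by the IH any replay of a trace $\sqgeq S$ under $\gamma'\simat{\rho}\gamma$ returns some $b'$ with $b'\simat{b}b$; since $\simat{b}$ on a constant pattern is equality, $b'=b$, so replay cannot get stuck at the test and takes the recorded branch, and the IH on that branch gives $v'\simat{p}v$.

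The substantial case is $\textsc{SComp}$, which is where Part~2 is used. Writing $S=\trcomp{e}{x}{S_1}{\Theta'}$ for the slice, any $T'\sqgeq S$ has the form $\trcomp{e}{x}{S_0}{\Theta''}$ with $S_0\sqgeq S_1$ and $\Theta''\sqgeq\Theta'$; inverting its replay gives an intermediate iterated set $v_0''$ with $\gamma',S_0\trrun v_0''$ and $\gamma',x\in v_0'',\Theta''\trrun^* v'$, while the original yields $\gamma,x\in v_0,\Theta\trrun^* v$. The Part~1 IH on the set-building subtrace (using $p'\sqleq v_0$ from validity) gives $v_0''\simat{p'}v_0$, and the Part~2 IH on $p,x.\Theta\trbslice^*\rho',\Theta',p'$ then gives $v'\simat{p}v$. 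Part~2 is itself a sub-induction on $\trbslice^*$: $\textsc{SHole}^*$ is trivial ($p=\vhole$); $\textsc{SEmpty}^*$ and $\textsc{SDiamond}^*$ are base cases where $p_0\in\{\emptyset,\vany\}$ forces the changed iterated set to be $\emptyset$ and hence the iteration empty; $\textsc{SSng}^*$ is a single application of the Part~1 IH on $\Theta(\lbl)$ against $p[\lbl]$ with binding $[x\mapsto p_0]$; and $\textsc{SUnion}^*$ splits the trace set, using the \emph{Properties of union and restriction} lemma to see $p|_{\dom(\Theta_i)}\sqleq v_{0,i}$ on the matching decomposition of $v_0$ --- here the prefix-code invariant of \thmref{prefix-labeled} is what guarantees that the decomposition of the changed iterated set induced by $\Theta''$ lines up with the restriction $p|_{\dom(\Theta_i)}$ and label-projection $p[\lbl]$ applied to the pattern --- and then recombines the recomputed pieces with part~1 of the same lemma.

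I expect the main obstacle to be exactly this comprehension/$\trbslice^*$ bookkeeping: simultaneously tracking (a) the prefix-code invariant that aligns the decomposition of a changed collection with the decomposition of a possibly-enriched set pattern; (b) the normalization of $\emptyset\udot\vhole$ and $\emptyset\udot\vany$ to $\vhole$ and $\vany$, which keeps enriched-pattern restriction total and makes $\textsc{SHole}^*$ and $\textsc{SDiamond}^*$ the only extra base cases one needs; and (c) the $\Theta_0\subseteq\Theta$ slack in Part~2, which is present exactly so that slicing may drop the label-subtrees covered by a $\vhole$ inside a partial set pattern while the original replay still ranged over the full $\Theta$. Once the enriched-pattern lemmas stated just before the theorem are available, the remaining cases are routine; the real work is in making the two mutually recursive statements, together with the validity invariant, strong enough to close the induction.
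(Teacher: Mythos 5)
Your proposal is correct and follows essentially the same route as the paper's proof: mutual induction on the slicing derivations for $\trbslice$ and $\trbslice^*$, inverting the replay derivation in each case, pushing the pattern through with the projection and union/restriction lemmas, and handling comprehensions by combining the two induction hypotheses (your ``validity invariant'' is exactly what the paper establishes inline via the Projection and $\sqleq$ lemma). The only ingredient left implicit is the paper's auxiliary Lemma~\ref{lem:subset-replay} (replaying an iteration over a subset of the iterated collection still succeeds and yields a subset of the result), which together with determinacy is what justifies, in the union case of $\trbslice^*$, decomposing the replays of $v_0$ and $v_0'$ along the split induced by $p_1 \semu p_2$ before applying the induction hypotheses.
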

\end{intronly}
\begin{insubonly}
   \begin{theorem}[Correctness of Slicing]~
    \label{thm:trace-slicing}
  Suppose $\gamma,T \trrun v$ and $p \sqleq v$ and $p, T
      \trbslice \rho, S$.  Then for all $\gamma' \simat{\rho} \gamma$
      and $T' \sqgeq S$ such that $\gamma',T' \trrun v'$ we have $v'
      \simat{p} v$.
  \end{theorem}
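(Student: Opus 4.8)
The plan is to prove, by a simultaneous induction on the derivations of the two backward-slicing judgments $p,T\trbslice\rho,S$ and $p,x.\Theta\trbslice^*\rho,\Theta_0,p_0$, a two-part generalization: part~(1) is the theorem as stated, and part~(2) is the corresponding statement for trace-set slicing, namely that if $\gamma,x\in v_0,\Theta\trrun^*v$, $p\sqleq v$ and $p,x.\Theta_0\trbslice^*\rho,\Theta_0',p_0$ with $\Theta_0\subseteq\Theta$, then for every $\gamma'\simat{\rho}\gamma$, $v_0'\simat{p_0}v_0$ and $\Theta'\sqgeq\Theta_0'$ with $\gamma',x\in v_0',\Theta'\trrun^*v'$ one has $v'\simat{p}v$. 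Three facts are used pervasively: $\simat{p}$ is symmetric (every rule in Figures~\ref{fig:nrc-sim} and~\ref{fig:nrc-sim-extended} is symmetric in its two value arguments), $p\sqleq v$ entails $v\simat{p}v$, and $\eqat{\rho_1\sqcup\rho_2}$ is the intersection of $\eqat{\rho_1}$ and $\eqat{\rho_2}$, so $\gamma'\simat{\rho_1\sqcup\cdots\sqcup\rho_n}\gamma$ implies $\gamma'\simat{\rho_i}\gamma$ for each $i$.

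For the leaf and congruence rules ($\textsc{SHole},\textsc{SConst},\textsc{SVar},\textsc{SPrim},\textsc{SLet},\textsc{SPair},\textsc{SRec},\textsc{SProj}_i,\textsc{SProj}_A,\textsc{SIf},\textsc{SEmpty},\textsc{SSng},\textsc{SUnion},\textsc{SSum},\textsc{SEmptyP},\textsc{SDiamond}$) the argument is uniform. If $S=\vhole$ we are in $\textsc{SHole}$, so $p=\vhole$ and $v'\simat{\vhole}v$ is vacuous; otherwise $S$ has a definite top-level constructor, so $T'\sqgeq S$ forces $T'$ to have the same constructor with subtraces $T_i'\sqgeq S_i$, and I invert $\gamma',T'\trrun v'$ and the given $\gamma,T\trrun v$ by the matching replay rules. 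I then push $p\sqleq v$ onto the subtraces via the lemma ``Projection and $\sqleq$'' (yielding $p.\epsilon\sqleq v_0$, $p[1]\sqleq v_1$ and $p[2]\sqleq v_2$, $p.A_i\sqleq v_i$, etc., or simply $\vany\sqleq v_i$ in the cases whose subslices are taken against $\vany$), restrict the environment relation as above, and apply the induction hypothesis to each subderivation. The resulting $\simat{}$ facts recombine into $v'\simat{p}v$ by the lemmas ``Projection and $\simat{p}$'' and ``Properties of union and restriction'', together with symmetry of $\simat{}$. The conditional case is degenerate in this direction: since we range only over $(\gamma',T')$ for which replay \emph{succeeds}, inverting already forces the sliced test subtrace to replay to the boolean $b$ recorded in the trace, hence the same branch is taken, and the induction hypothesis on the slice of that branch gives $v'\simat{p}v$ directly --- the test slice plays no essential role here.

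The substantive case is $\textsc{SComp}$ with the trace-set rules. There $p,\trcomp{e}{x}{T}{\Theta}\trbslice\rho\sqcup\rho',\trcomp{e}{x}{S}{\Theta'}$ comes from $p,x.\Theta\trbslice^*\rho',\Theta',p'$ and $p',T\trbslice\rho,S$. From $T'\sqgeq S$ I get $T'=\trcomp{e}{x}{T_0'}{\Theta''}$ with $T_0'\sqgeq S$ and $\Theta''\sqgeq\Theta'$; inverting the two replays yields $\gamma,T\trrun v_T$, $\gamma,x\in v_T,\Theta\trrun^*v$ and $\gamma',T_0'\trrun v_T'$, $\gamma',x\in v_T',\Theta''\trrun^*v'$. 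To apply part~(1) to $p',T\trbslice\rho,S$ I need $p'\sqleq v_T$; this follows from the well-definedness fact stated just before the theorem (slicing always produces $\rho,S$ with $\rho\sqleq\gamma$ and $S\sqleq T$), whose environment component is exactly $p'\sqleq v_T$. Part~(1) then gives $v_T'\simat{p'}v_T$, and part~(2) applied to $p,x.\Theta\trbslice^*\rho',\Theta',p'$ (with $v_T',v_T$ in the roles of $v_0',v_0$) gives $v'\simat{p}v$. Within part~(2), $\textsc{SEmpty}^*,\textsc{SHole}^*,\textsc{SDiamond}^*$ are immediate; $\textsc{SSng}^*$ peels off one labeled trace $\{\lbl.T\}$, using label projection $p[\lbl]$ and the inner induction hypothesis; $\textsc{SUnion}^*$ splits the trace set as $\Theta_1\semu\Theta_2$ and recurses on the restricted patterns $p|_{\dom(\Theta_i)}$.

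The crux, and the step I expect to be hardest, is exactly part~(2): backward slicing of a trace set recurses on the structure of $\Theta$, whereas replay recurses on the structure of the iterated collection $v_0$ (resp.\ $v_0'$), so the two decompositions must be shown to align. This is where the prefix-code invariant is indispensable. By \thmref{prefix-labeled} (and the standing assumption that all values, environments and traces are prefix-labeled), $\dom(\Theta)$ coincides with $\dom(v_T)$ and is a prefix code, so each $\lbl\in\dom(\Theta)$ addresses a unique element of $v_T$, the output $v$ is the disjoint union over $\lbl$ of the blocks $\lbl\cdot(\text{replay of }\Theta(\lbl))$, this $\lbl$-block coincides with $v|_{\{\lbl\}}$, and --- by the cited lemma on $\spat[\lbl]$ and $\spat|_L$ --- label projection and restriction commute with these splittings and preserve prefix-codedness. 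Combining this with ``Properties of union and restriction'' (part~2 to see each restricted pattern still bounds the corresponding sub-value, part~1 to reassemble the blockwise $\simat{}$ facts) makes the two recursions match and closes part~(2). The enriched-pattern subtleties --- the partial set and record patterns $\spat\udot\vhole$, $\spat\udot\vany$, $\vrec{\overline{A_i{:}p_i};\vhole}$, and the normalization performed by $\semu$ and by the projection operations --- are absorbed into the enriched versions of the three projection lemmas and the rules $\textsc{SRec},\textsc{SProj}_A,\textsc{SHole}^*,\textsc{SDiamond}^*$, and do not change the shape of the argument; the only place the $[\vany/\vhole]$ rewriting enters is through the identity $\eqat{p\sqcup q}=\eqat{p}\cap\eqat{q}$, already available for $\sqcup$.
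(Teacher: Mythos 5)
Your proposal is correct and follows essentially the same route as the paper's proof: a mutual induction over the two slicing judgments, inversion of the replay derivations, the projection and union/restriction lemmas, and the prefix-code invariant to align the decomposition of the trace set with that of the iterated collection in the comprehension case (including the same unproved-but-routine appeal to $p_0 \sqleq v_0$ that the paper itself makes). The only deviations are cosmetic: your direct treatment of conditionals is valid because a trace records the taken branch and successful replay must follow it, and your per-label block decomposition in part (2) plays the same role as the paper's binary-split argument via the subset-replay lemma and determinacy.
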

\end{insubonly}


Returning to our running example, we can use the enriched pattern $p'
= \{[r_2].\kwrec{B{:}8;\vhole}\}\udot \vhole$ to indicate interest in
the $B$ field of $r_2$, without naming the other fields of the row or
the other row indexes.  Slicing with respect to this pattern yields
the following slice:
\[
T'' =   \trcomp{\_}{x}{R}{\{[r_2].\trifthen{x.B=3}{\_}{\_}{\{\_\}}\}}
\]
and the slice of $R$ is $R'' =
\{[r_2].\kwrec{B{:}3,C{:}8;\vhole}\}\udot \vhole$.  This indicates
that (as before) the values of $r_1$ and $r_3$ and of the $A$ field of
$r_2$ are irrelevant to $r_2$ in the result; unlike $T'$, however, we
can also potentially replay if $r_1$ and $r_3$ have been deleted from
$R$.  Likewise, we can replay if the $A$ field has been removed from a
record, or if some other field such as $D$ is added.  This illustrates
that enriched patterns allow for smaller slices than simple patterns,
with greater flexibility concerning possible updates to the input.

A natural question is whether slicing computes the (or a) smallest
possible answer.  Because our definition of correct slices is based on
recomputation, minimal slices are not computable, by a
straightforward reduction from the undecidability of minimizing
dependency provenance~\cite{cheney11mscs,acar13jcs}.



\section{Query and Differential Slicing}
\label{sec:qdslicing}

We now adapt slicing techniques to provide explanations in terms of
query expressions, and show how to use differences between slices to
provide precise explanations for parts of the output.

\subsection{Query slicing}

Our previous work~\cite{perera12icfp} gave an algorithm for extracting
a \emph{program slice} from a trace.  We now adapt this idea to
queries. A trace slice shows the parts of the trace that need to be
replayed in order to compute the desired part of the output;
similarly, a \emph{query slice} shows the part of the query expression
that is needed in order to ensure that the desired part of the output
is recomputed.  As with traces, we allow holes $\vhole$ in programs to
allow deleting subexpressions, and 
define $\sqleq$ as a syntactic precongruence such
that $\vhole \sqleq e$.  We also define a least upper bound operation
$e \sqcup e'$ on partial query expressions in the obvious way, so that
$\hole \sqcup e = e$.

We define a judgment $p,T \uneval \rho,e$ that traverses $T$ and ``unevaluates'' $p$, yielding a partial input environment $\rho$ and partial program $e$.  The rules are illustrated in \figref{nrc-uneval}.  Many of the rules are similar to those for trace slicing; the main differences arise in the cases for conditionals and comprehensions, where we collapse the sliced expressions back into expressions, possibly inserting holes or merging sliced expressions obtained by running the same code in different ways (as in a comprehension that contains a conditional).

\begin{figure}[tb]
  
\fbox{$p, T \uneval \rho, e$}
\begin{smathpar}
\inferrule*
{\strut}
{
\vhole,T \uneval [],\vhole
}
\begin{intronly}
\and
\inferrule*
{
\strut
}
{
p,\trc \uneval [],\trc
}
\and
\inferrule*
{
\vany, T_1 \uneval \rho_1,e_1\\
\cdots\\
\vany,T_n \uneval\rho_n, e_n
}
{
p,\kwf(T_1,\ldots,T_n) \uneval \rho_1 \sqcup \cdots \sqcup
\rho_n, \trf(e_1,\ldots,e_n)
}
\and
\inferrule*
{\strut
}
{
p,x \uneval[x\mapsto p], x
}
\and
\inferrule*
{
p_2,T_2 \uneval \rho_2[x\mapsto p_1],e_2\\
p_1,T_1 \uneval \rho_1, e_1
}
{
p_2,\trlet{x}{T_1}{T_2} \uneval \rho_1 \sqcup \rho_2, \trlet{x}{e_1}{e_2}
}
\and
\inferrule*
{
p_1,T_1 \uneval \rho_1,e_1\\
\cdots\\
p_n,T_n \uneval \rho_n,e_n
}
{
\kwrec{A_1{:}p_1,\ldots,A_n{:}p_n} , \trrec{A_1{:}T_1,\ldots,A_n{:}T_n}
\uneval \rho_1 \sqcup \cdots \sqcup \rho_n,
\kwrec{A_1{:}e_1,\ldots,A_n{:}e_n}
}
\and
\inferrule*{
p.A,T \uneval \rho,S
}{
p,\trfield{T}{A} \uneval \rho,\trfield{S}{A}
}
\end{intronly}
\and
\inferrule*
{
p_1,T_1 \uneval \rho_1,e_1' \\
\kwtrue,T \uneval \rho,e'
}
{
p_1,\trifthen{T}{e_1}{e_2}{T_1} \uneval \rho_1\sqcup \rho, \kwif{e'}{e_1'}{\vhole}
}
\and
\inferrule*
{
p_2,T_2 \uneval \rho_2,e_2' \\
\kwfalse,T \uneval \rho,e'
}
{
p_2,\trifelse{T}{e_1}{e_2}{T_2} \uneval \rho_2\sqcup \rho, \kwif{e'}{\vhole}{e_2'}
}
\begin{intronly}
\and
\inferrule*
{
\strut
}
{
\emptyset, \emptyset \uneval [],\emptyset
}
\and
\inferrule*
{
p[\epsilon], T \uneval \rho,e
}
{
p, \trsets{T} \uneval \rho,\trsets{e}
}
\and
\inferrule*
{
p[1], T_1 \uneval \rho_1, e_1 \\
p[2], T_2 \uneval \rho_2, e_2 
}
{
 p, \trsetu{T_1}{T_2} \uneval \rho_1 \sqcup \rho_2 , \kwsetu{e_1}{e_2}
}
\and
\inferrule*
{
 \vany, T \uneval\rho, e
}
{
 p, \trsetsum{T} \uneval\rho, \trsetsum{e}
}
\and
\inferrule*
{
 \vany, T \uneval \rho,e
}
{
 p, \trsetise{T} \uneval \rho,\trsetise{e}
}
\end{intronly}
\and
\inferrule*{
p, x.\Theta \uneval^* \rho', e', p_0\\
p_0, T \uneval \rho,e_0'
}{
p,\trcomp{e}{x}{T}{\Theta} \uneval \rho \sqcup \rho', \kwcomp{e'}{x}{e_0'}
}
\begin{intronly}
\and
\inferrule*{
\vany,T_1 \uneval \rho_1,e_1\\
\cdots\\
\vany,T_n \uneval \rho_n,e_n
}{
\vany, \trrec{A_1:T_1,\ldots,A_n:T_n}\uneval \rho_1 \sqcup \cdots
\sqcup \rho_n, \trrec{A_1:e_1,\ldots,A_n:e_n}
}
\and
\inferrule*{
\strut
}{
\vany,\emptyset \uneval [],\emptyset
}
\end{intronly}
\end{smathpar}
\fbox{$p, x.\Theta \uneval \rho, e,p'$}
\vspace{-1ex}
\begin{smathpar}
\inferrule*
{
\strut
}
{
\vhole,x. \Theta \uneval^* [], \vhole, \vhole
}
\and
\inferrule*
{
\strut
}
{
\emptyset,x. \emptyset \uneval^* [], \vhole, \emptyset
}
\and
\inferrule*
{
\strut
}
{
\vany,x. \emptyset \uneval^* [], \vhole, \emptyset
}
\and
\inferrule*
{
p[\lbl], T \uneval \rho[x\mapsto p_0],e
}
{
p, x.\{\lbl.T\} \uneval^* \rho, e, \{\lbl.p_0\}
}
\and
\inferrule*
{
p|_{\dom(\Theta_1)}, x.\Theta_1 \uneval^* \rho_1,e_1,p_1\\
p|_{\dom(\Theta_2)},x. \Theta_2 \uneval^* \rho_2,e_2,p_2
}
{
p, x.\Theta_1 \semu \Theta_2 \uneval^* \rho_1 \sqcup \rho_2,
e_1 \sqcup e_2, p_1 \semu p_2
}
\end{smathpar}
  \caption{Unevaluation\subonly{ (selected rules)}.}
  \label{fig:nrc-uneval}
\end{figure}

Again, it is straightforward to show that if $\gamma,e \red v,T$ and
$p \sqleq v$ then there exist $\rho,e'$ such that $p,T \uneval
\rho,e'$, where $\rho \sqleq \gamma$ and $e' \sqleq e$.  The essential
correctness property for query slices is similar to that for trace
slices: again, we require that rerunning any sufficiently similar
query on a sufficiently similar input produces a result that matches
$p$.  
\begin{intronly}
  The proof of this result is in  Appendix~\ref{app:proof-query-slicing}.
\end{intronly}

\begin{intronly}
  \begin{theorem}[Correctness of Query
    Slicing]\label{thm:query-slicing}
    ~
    \begin{enumerate}
    \item Suppose $\gamma,T \trrun v$ and $p \sqleq v$ and $p, T
      \uneval \rho, e$.  Then for all $\gamma' \simat{\rho} \gamma$
      and $e'\sqgeq e$ such that $\gamma',e' \red v'$ we have $v'
      \simat{p} v$.

    \item Suppose $\gamma,x\in v_0, \Theta \trrun^* v$ and $p \sqleq
      v$ and $p, \Theta \uneval \rho, e_0, p_0$.  Then for all
      $\gamma' \simat{\rho}\gamma$ and $v_0' \simat{p_0} v_0$ and
      $e_0'\sqgeq e_0$ such that $\gamma',x\in v_0', e_0' \red^* v'$
      we have $v' \simat{p} v$.
    \end{enumerate}
  \end{theorem}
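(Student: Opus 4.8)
The plan is to prove both parts simultaneously by induction on the derivations of $p, T \uneval \rho, e$ and $p, x.\Theta \uneval^* \rho, e_0, p_0$, following the same pattern as the proof of \thmref{trace-slicing}. First I would dispatch a couple of routine auxiliary facts mirroring those used for trace slicing: that unevaluation is well-defined on replay-able traces, and that the computed slice and sliced environment are lower bounds, in particular, that if $p, T \uneval \rho, e$ and $\gamma,T\trrun v$ and $p\sqleq v$ then $\rho\sqleq\gamma$, and if $p,x.\Theta\uneval^*\rho,e_0,p_0$ and $\gamma,x\in v_0,\Theta\trrun^* v$ and $p\sqleq v$ then $\rho\sqleq\gamma$ and $p_0\sqleq v_0$. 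For each case of the main induction the recipe is: invert the replay derivation to expose the shape of $v$ and the sub-runs that produced it; use $e'\sqgeq e$ together with the fact that $\sqleq$ on expressions is a syntactic precongruence with $\vhole$ least to invert $\gamma',e'\red v'$ into runs of expressions sitting above the sub-slices; apply the induction hypotheses to those sub-runs; and recombine the resulting pattern equivalences using the projection lemmas (Projection and $\sqleq$, Projection and $\simat{p}$) and the lemma on properties of union and restriction. The structural rules (constants, variables, records and field projection, let-binding, primitive operations, and the singleton and union rules for sets) go through exactly as in \thmref{trace-slicing}, using $\vhole\sqleq v'$ and $\vany\sqleq v'$ on inputs to absorb discarded parts of the output, and the base clauses of \figref{nrc-sim} and \figref{nrc-sim-extended} at the leaves.

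The conditional cases are the argument already sketched in the text. For the then-rule (write $p$ for the demand), $e$ has the form $\kwif{e_0'}{e_1'}{\vhole}$ with $e_0'$ the test slice and $e_1'$ the then-branch slice, so any $e'\sqgeq e$ with $\gamma',e'\red v'$ is of the form $\kwif{f_0}{f_1}{f_2}$ where $f_0\sqgeq e_0'$, $f_1\sqgeq e_1'$, and $f_2$ is arbitrary; inverting the original replay gives $\gamma,T\trrun\kwtrue$ and $\gamma,T_1\trrun v$, and inverting $\gamma',e'\red v'$ gives $\gamma',f_0\red b$ followed by one branch. The induction hypothesis applied to the test slice $\kwtrue,T\uneval\rho,e_0'$ forces $b=\kwtrue$, so recomputation takes exactly the recorded branch $f_1$, and the induction hypothesis applied to $p,T_1\uneval\rho_1,e_1'$ yields $v'\simat{p}v$; the other branch of $e$ being a hole is harmless precisely because the test is pinned down. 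Emptiness tests and sums are handled like primitive base-type operations: the argument is sliced against $\vany$, so it is recomputed identically and the conclusion is immediate.

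The comprehension is the main obstacle, and it combines two intertwined difficulties. The first is the merging of per-element body slices by $\sqcup$ in the union rule for $\uneval^*$: I would observe that these per-element slices (and likewise the per-element sliced environments) are all unevaluations of traces produced by running the \emph{same} body expression on different bindings of $x$, hence each is $\sqleq$ that body expression (and each environment $\sqleq\gamma$), so their least upper bounds exist and stay below it; consequently any $e'\sqgeq e_1\sqcup e_2$ satisfies $e'\sqgeq e_1$ and $e'\sqgeq e_2$ separately, the analogous split holds for environments, and since $q\sqleq q'$ implies $\simat{q'}\subseteq\simat{q}$ (from the characterization of $\sqcup$) the merged demand $p_0$ on $x$ refines the demands used by the slices of the test and of the taken branches, which is exactly what lets the induction hypotheses go through for a body that behaves differently on different elements. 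The second difficulty is lining up the modified iteration over $v_0'$ with the original over $v_0$: inverting $\gamma,x\in v_0,\Theta_1\semu\Theta_2\trrun^* v$ splits $v_0$ along $\dom(\Theta_1),\dom(\Theta_2)$ and splits $v$ into prefix-disjoint pieces (using \thmref{prefix-labeled} so the pieces recombine under $\semu$); $p\sqleq v$ then gives $p|_{\dom(\Theta_i)}\sqleq$ the $i$th piece by the union-and-restriction lemma; $v_0'\simat{p_1\semu p_2}v_0$ splits $v_0'$ correspondingly; and by determinacy of evaluation the $\red^*$ run of the sliced body decomposes the same way, so the induction hypotheses apply to the halves and recombine via that lemma together with the identity $p|_{\dom(\Theta_1)}\semu p|_{\dom(\Theta_2)}=p$ (after normalizing $\emptyset\udot\vhole$ and $\emptyset\udot\vany$ to $\vhole$ and $\vany$). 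In the comprehension case of part~1 I additionally need $p_0\sqleq v_0$ before invoking part~2, which is the lower-bound auxiliary fact above applied to the slice of the collection subtrace. The remaining enriched-pattern cases (the hole and identity rules for $\uneval^*$ and the partial-set and partial-record variants) are either vacuously true (when $p=\vhole$) or reduce to the complete-pattern cases after the same normalization; no minimality claim is attempted, consistent with the undecidability remark following \thmref{trace-slicing}.
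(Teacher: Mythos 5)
Your proposal is correct and follows essentially the same route as the paper's proof: mutual induction on the unevaluation judgments, with the conditional case pinning the recorded branch via the test slice (the false-branch alternative being refuted by the induction hypothesis), and the comprehension/union case splitting the iteration along $\dom(\Theta_1)$, $\dom(\Theta_2)$, applying the induction hypotheses to the halves, and recombining via the union--restriction lemma and the identity $p = p|_{\dom(\Theta_1)} \semu p|_{\dom(\Theta_2)}$. The one step you state more loosely than the paper is the decomposition of the recomputation run over $v_0' = v_1' \semu v_2'$, which requires an explicit sub-run lemma (if $\gamma', x\in v, e \red^* w$ and $v'\subseteq v$ then a run on $v'$ exists producing a subset of $w$) in addition to determinacy, while your explicit treatment of the merged body slices under $\sqcup$ is, if anything, more careful than the paper's.
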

\end{intronly}
\begin{insubonly}
   \begin{theorem}[Correctness of Query
    Slicing]\label{thm:query-slicing}
        Suppose $\gamma,T \trrun v$ and $p \sqleq v$ and $p, T
      \uneval \rho, e$.  Then for all $\gamma' \simat{\rho} \gamma$
      and $e'\sqgeq e$ such that $\gamma',e' \red v'$ we have $v'
      \simat{p} v$.
  \end{theorem}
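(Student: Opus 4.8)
The plan is to prove the stated one-part theorem as an instance of a two-part generalization, proved by simultaneous induction on the derivations of the two unevaluation judgments $p, T \uneval \rho, e$ and $p, x.\Theta \uneval^* \rho, e_0, p_0$, mirroring the structure of the correctness proof for trace slicing (\thmref{trace-slicing}). Part~(2) is needed because the comprehension rule for $\uneval$ first unevaluates the trace set $\Theta$, obtaining a pattern $p_0$ for the iteration set, and then unevaluates the set-producing subtrace against $p_0$; it reads: if $\gamma, x\in v_0, \Theta \trrun^* v$, $p \sqleq v$, and $p, \Theta \uneval^* \rho, e_0, p_0$, then for all $\gamma' \simat{\rho}\gamma$, $v_0' \simat{p_0} v_0$, and $e_0' \sqgeq e_0$ with $\gamma', x\in v_0', e_0' \red^* v'$, we have $v' \simat{p} v$. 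Throughout I use the prefix-code invariant (\thmref{prefix-labeled}) so that collections and trace sets split uniquely along their domains, I use that $\simat{\rho \sqcup \rho'}$ entails both $\simat{\rho}$ and $\simat{\rho'}$, and I carry along the routine well-definedness fact that the patterns handed to the recursive calls (e.g.\ $p_0$ above) satisfy $p_0 \sqleq v_0$ --- which here follows directly, since the input pattern is $\sqleq$ the replayed value.

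The structural cases --- constants, variables, primitive operations, records, field projections, let-bindings, singletons, set unions, sums, emptiness tests --- go through essentially as for trace slicing. Since $\sqleq$ is a syntactic precongruence and the unevaluated $e$ is not a hole in these cases, any $e' \sqgeq e$ has the same head constructor, so $\gamma', e' \red v'$ proceeds by the matching evaluation rule; the pattern-projection lemmas (\emph{Projection and $\sqleq$}, to push the output pattern through $p.\epsilon$, $p[1]$, $p[\lbl]$, $p.A_i$, and \emph{Projection and $\simat{p}$}, to recombine) together with the induction hypotheses on the subderivations give matching values for each subresult, which reassemble into $v' \simat{p} v$. Wherever $e$ has a hole in a position that matters, that rule instance was reached only because $p = \vhole$, and $v' \simat{\vhole} v$ holds unconditionally.

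The cases with real content are conditionals and comprehensions. For a trace $\trifthen{T}{e_1}{e_2}{T_1}$, unevaluation yields $\kwif{e'}{e_1'}{\vhole}$ from $\kwtrue, T \uneval \rho, e'$ and $p_1, T_1 \uneval \rho_1, e_1'$; since the only replay rule for this trace gives $\gamma, T \trrun \kwtrue$ (and $\gamma, T_1 \trrun v$), the induction hypothesis on $\kwtrue, T \uneval \rho, e'$ --- instantiated with the test subexpression of the given $e'' \sqgeq \kwif{e'}{e_1'}{\vhole}$ --- forces the recomputed test to evaluate to $\kwtrue$, so $\gamma'', e'' \red v''$ takes the then-branch, and the induction hypothesis on $p_1, T_1 \uneval \rho_1, e_1'$ finishes the case (the false case is symmetric). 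For a comprehension trace $\trcomp{e}{x}{T}{\Theta}$, unevaluation gives $\kwcomp{e'}{x}{e_0'}$ with $p, x.\Theta \uneval^* \rho', e', p_0$ and $p_0, T \uneval \rho, e_0'$; from $e'' = \kwcomp{e_1}{x}{e_2} \sqgeq \kwcomp{e'}{x}{e_0'}$, evaluation of $e''$ runs $e_2$ to a collection $w''$ and then iterates $e_1$ over $w''$. Since $\gamma, T \trrun v_0$ is the iteration set recorded by the comprehension-trace replay rule, the induction hypothesis on $p_0, T \uneval \rho, e_0'$ gives $w'' \simat{p_0} v_0$, and then part~(2), applied to $p, x.\Theta \uneval^* \rho', e', p_0$ with this $w''$, yields $v'' \simat{p} v$.

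For part~(2) I induct on the $\uneval^*$ derivation. The base rules $\vhole, x.\Theta \uneval^* [], \vhole, \vhole$ and $\emptyset/\vany, x.\emptyset \uneval^* \ldots$ are immediate (respectively $p = \vhole$, or $\Theta$ empty which forces the iteration set, hence both $v$ and $v'$, to be $\emptyset$). The singleton rule reduces via $p[\lbl]$ to part~(1) on $\Theta(\lbl)$, relating $p$ and $p[\lbl]$ across the prepended label by \emph{Projection and $\sqleq$} and its $\simat{p}$ counterpart. The union rule $p, x.\Theta_1 \semu \Theta_2 \uneval^* \rho_1 \sqcup \rho_2, e_1 \sqcup e_2, p_1 \semu p_2$ is the crux: I split $v_0'$ and $v_0$ along the prefix-disjoint domains $\dom(\Theta_1), \dom(\Theta_2)$, note that $\red^*$ distributes over $\semu$ (and is deterministic) so $v'$ factors correspondingly, apply the induction hypothesis to each half --- crucially using the merged body $e_1 \sqcup e_2$ as a common refinement, so $e_0' \sqgeq e_1$ and $e_0' \sqgeq e_2$ both hold --- and recombine with the \emph{Properties of union and restriction} lemma. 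This last step is where I expect the main obstacle: one must argue that unevaluation's collapsing of the per-element expressions by $\sqcup$ is sound --- in particular when the comprehension body is a conditional that took different branches on different elements, producing a single $\kwif{\cdot}{e_{\mathrm{then}}}{e_{\mathrm{else}}}$ with \emph{both} branches present --- and this is precisely what the ``for all $e' \sqgeq e$'' quantification in the induction hypothesis buys: the merged body can be fed back in place of each element's own slice, while the restriction lemmas guarantee the per-element output patterns still match after the split.
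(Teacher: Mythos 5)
Your proposal matches the paper's proof: the paper likewise strengthens the statement to a two-part claim (adding exactly your part~(2) for the auxiliary judgment $p,x.\Theta \uneval^* \rho,e_0,p_0$), inducts on the unevaluation derivation, handles conditionals by using the recorded boolean to force the same branch, handles comprehensions by first relating the recomputed iteration set to $p_0$ and then invoking part~(2), and proves the union case for $\uneval^*$ by splitting along the prefix-disjoint domains, using determinacy and the subset-evaluation/replay lemmas, and recombining via the union and restriction lemmas. Your explicit remark that the merged body $e_1 \sqcup e_2$ serves as a common refinement so the induction hypothesis applies to each half is precisely the (implicit) step the paper relies on, so there is no gap.
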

\end{insubonly}
Combining with the consistency property (Proposition~\ref{prop:consistency}), we have:
\begin{corollary}
      Suppose $\gamma,e \red v,T$ and $p \sqleq v$ and $p, T \uneval \rho, e'$.  Then
      for all $\gamma' \simat{\rho} \gamma$ and $e'' \sqgeq e'$ such that $\gamma', e'' \red v'$,
    we have $v \eqat{p} v'$.  
\end{corollary}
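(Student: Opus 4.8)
The plan is to obtain the corollary by a simple two-step composition of results already in hand: \propref{consistency}, which converts the evaluation hypothesis into a replay judgment, followed by \thmref{query-slicing}. Concretely, I would first apply \propref{consistency} to the assumption $\gamma,e \red v,T$, yielding $\gamma,T \trrun v$. Together with the remaining hypotheses $p \sqleq v$ and $p,T \uneval \rho,e'$, this is exactly the set of premises required by \thmref{query-slicing}, so that theorem gives: for every $\gamma' \simat{\rho}\gamma$ and every $e'' \sqgeq e'$ with $\gamma',e'' \red v'$, we have $v' \simat{p} v$.

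It then only remains to rewrite $v' \simat{p} v$ as $v \eqat{p} v'$. Since $\simat{p}$ and $\eqat{p}$ denote the same relation $\eqsim_p$, this reduces to observing that $\eqsim_p$ is symmetric. I would dispatch this by a routine induction on the structure of $p$, checking that each rule defining $\eqsim_p$ in \figref{nrc-sim} and \figref{nrc-sim-extended} is closed under swapping the two related values: the base cases $\vhole$, $\vany$, and $\vc$ are immediate, and the record, pair, and set cases follow by applying the induction hypotheses componentwise. (In fact $\eqsim_p$ is a partial equivalence relation; only symmetry is needed here.) Applying symmetry to $v' \simat{p} v$ yields $v \eqat{p} v'$, which is the desired conclusion.

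There is essentially no obstacle in the corollary itself — all of the substantive work has already been carried out in the proof of \thmref{query-slicing} (which handles the delicate cases for conditionals and comprehensions, including the merging of sliced subexpressions run under different bindings and the threading of the pattern environment through unevaluation) and in \propref{consistency}. The single point requiring attention is that \thmref{query-slicing} places the recomputed value on the left of $\simat{p}$ whereas the corollary puts it on the right, which is precisely why the symmetry remark above is invoked.
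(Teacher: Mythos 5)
Your proposal is correct and matches the paper's intent exactly: the corollary is obtained by composing Proposition~\ref{prop:consistency} (to turn $\gamma,e \red v,T$ into $\gamma,T \trrun v$) with Theorem~\ref{thm:query-slicing}, which is all the paper itself does. Your extra symmetry check on $\eqsim_p$ (to pass from $v' \simat{p} v$ to $v \eqat{p} v'$) is a detail the paper silently glosses over, and your structural-induction argument for it is sound.
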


Continuing our running example, the query slice for the pattern $p'$
considered above is 
\[Q' = \kwcomp{\kwif{x.B=3}{\{\kwrec{A{:}\vhole,B{:}x.C}\}}{\{\}}}{
  x}{R}\] 
since only the computation of the $A$ field in the output is irrelevant
to $p'$.

\subsection{Differential slicing}

We consider a \emph{pattern difference} to be a pair of patterns
$(p,p')$ where $p \sqleq p'$.  Intuitively, a pattern difference
selects the part of a value present in the outer component $p'$ and
not in the inner component $p$.  For example, the pattern difference
$(\kwrec{B{:}\vhole;\vhole},\kwrec{B{:}8;\vhole})$
selects the value $8$ located in the $B$ component of a record.  We
can also write this difference as
$\kwrec{B{:}\fbox{$8$};\vhole}$, using
$\fbox{?}$ to highlight the boundary between the inner and outer
pattern.  Trace and query pattern differences are defined analogously.

It is straightforward to show by induction that slicing is monotonic in both arguments:
\begin{lemma}[Monotonicity]\label{lem:monotonicity}
  If $p \sqleq p'$ and $T \sqleq T'$ and $p',T' \trbslice \rho',S'$
  then there exist $\rho,S$ such that $p,T \trbslice \rho,S$ and $\rho \sqleq \rho'$ and $S \sqleq S'$.  In addition, $p,S' \trbslice \rho,S$.
\end{lemma}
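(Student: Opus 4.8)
The plan is to prove the lemma by induction on the derivation of $p',T'\trbslice\rho',S'$, establishing both conclusions at once --- the existence of $\rho,S$ with $p,T\trbslice\rho,S$, $\rho\sqleq\rho'$, $S\sqleq S'$, \emph{and} the ``in addition'' identity $p,S'\trbslice\rho,S$ --- and in mutual recursion with the corresponding statement for the trace-set judgment $\trbslice^*$: if $p\sqleq p'$, $\Theta\sqleq\Theta'$, and $p',x.\Theta'\trbslice^*\rho',\Theta_0',p_0'$, then there are $\rho,\Theta_0,p_0$ with $p,x.\Theta\trbslice^*\rho,\Theta_0,p_0$, $\rho\sqleq\rho'$, $\Theta_0\sqleq\Theta_0'$, $p_0\sqleq p_0'$, and $p,x.\Theta_0'\trbslice^*\rho,\Theta_0,p_0$.

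First I would dispose of the degenerate cases. If $p=\vhole$, rule $\textsc{SHole}$ gives $p,T\trbslice[],\vhole$ and $p,S'\trbslice[],\vhole$, and $[]\sqleq\rho'$, $\vhole\sqleq S'$ are immediate; the case where a smaller pattern has collapsed to $\vhole$ or $\vany$ through projection (so that the applicable auxiliary rule becomes $\textsc{SHole}^*$ or $\textsc{SDiamond}^*$) is handled identically, and $T=\vhole$ is trivial. In every remaining case, $\sqleq$ on traces and on patterns is a precongruence whose only strict steps replace a subterm by a hole, so $T$ has the same head constructor as $T'$ with $\sqleq$-related immediate subterms (and, for the pattern-directed rules $\textsc{SPair}$/$\textsc{SDiamond}$, $p$ either matches the head of $p'$ or is the generic pattern $\vany$ that the trace-directed rules propagate uniformly). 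Hence the rule concluding $p',T'\trbslice\rho',S'$ also applies to $p,T$, and the case closes by the induction hypothesis on each premise followed by reassembly.

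The reassembly needs only routine monotonicity facts about the auxiliary operations on patterns --- that $\sqcup$, singleton extraction $p.\epsilon$, label projection $p[\lbl]$, restriction $p|_L$, record projection $p.A$, and the substitution $p[\vany/\vhole]$ are all monotone for $\sqleq$ --- which I would state explicitly and prove by a short side induction (they sit next to the pattern lemmas already given). The ``in addition'' clause is carried through the strengthened induction hypothesis: in each structural rule the immediate subtraces of the slice $S'$ are exactly the outputs of the recursive slicing calls in the derivation of $p',T'\trbslice\rho',S'$, so, for example, in the $\textsc{SLet}$ case with $S'=\trlet{x}{S_1'}{S_2'}$, the ``in addition'' halves of the two induction hypotheses deliver $p,S_2'\trbslice\rho_2[x\mapsto p_1],S_2$ and $p_1,S_1'\trbslice\rho_1,S_1$, which are precisely the premises needed to conclude $p,S'\trbslice\rho_1\sqcup\rho_2,\trlet{x}{S_1}{S_2}$ by $\textsc{SLet}$ again; all structural cases follow this template.

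I expect the single nontrivial case to be $\textsc{SComp}$, which delegates to the trace-set judgment and so needs the mutual companion statement. Its proof must cope with rule $\textsc{SUnion}^*$, which can split a trace set $\Theta_1\semu\Theta_2$ in many ways (staying deterministic only by associativity of $\semu$) and whose recursive calls restrict the pattern to $\dom(\Theta_1)$ and $\dom(\Theta_2)$; the key point is that the comprehension trace sets involved have matching domains ($\dom(\Theta)=\dom(\Theta')$, with subtrace-wise $\sqleq$), so I can mimic the larger derivation's decomposition, line up the restrictions using monotonicity of $|_L$, and rely on the prefix-code invariant (Theorem~\ref{thm:prefix-labeled}) and the pattern restriction lemma to keep the disjoint unions $p_1\semu p_2$ and the inequalities $\Theta_0\sqleq\Theta_0'$, $p_0\sqleq p_0'$ well-defined. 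Once the companion statement is in hand, the $\textsc{SComp}$ case reassembles exactly like $\textsc{SLet}$, and the remaining collection and record rules ($\textsc{SSng}$, $\textsc{SUnion}$, $\textsc{SEmptyP}$, $\textsc{SSum}$, $\textsc{SRec}$/$\textsc{SProj}_A$) are mechanical.
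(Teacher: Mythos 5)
Your overall strategy---induction on the derivation of $p',T'\trbslice\rho',S'$, mutually with a companion statement for $\trbslice^*$, proving the ``in addition'' clause simultaneously, and relying on monotonicity of $\sqcup$, $p.\epsilon$, $p[\lbl]$, $p|_L$, $p.A$ and $[\vany/\vhole]$---is exactly the ``straightforward induction'' the paper appeals to (it records no further detail). But one step, as written, would fail: the claim that the case $T=\vhole$ is trivial. The only slicing rule with a hole in trace position is \textsc{SHole}, which requires the \emph{pattern} to be $\vhole$, so for $p\neq\vhole$ there is no derivation of $p,\vhole\trbslice\rho,S$ to exhibit. Concretely, take $p=p'=\vany$, $T'=x$ and $T=\vhole\sqleq T'$: \textsc{SVar} gives $p',T'\trbslice[x\mapsto\vany],x$, yet $\vany,\vhole\trbslice\rho,S$ is underivable. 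So the bare hypotheses $p\sqleq p'$, $T\sqleq T'$ do not suffice, and the same obstruction recurs inside your induction whenever $T$ has a hole at a position where the recursive calls still demand structure (e.g.\ $T=\trlet{x}{\vhole}{T_2}$ while slicing $T_2$ puts a non-hole pattern on $x$). The repair is to build into the induction the invariant $S'\sqleq T$, i.e.\ prove the statement for all $T$ with $S'\sqleq T\sqleq T'$; both instantiations the lemma is actually used for satisfy it ($T=T'$ for the first conclusion, $T=S'$ for the ``in addition'' part, since a slice is always a subtrace of the trace sliced). Under that invariant, wherever $p$---hence $p'$---demands structure, $S'$ and therefore $T$ supply it, the case $T=\vhole$ forces $p'=p=\vhole$, and the rest of your case analysis (including carrying the ``in addition'' clause through the IH as in your \textsc{SLet} template) goes through.

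Two smaller inaccuracies in the comprehension case, both repairable: $\dom(\Theta)$ and $\dom(\Theta')$ need not coincide, since slicing drops labels whenever a restricted pattern normalizes to $\vhole$ and \textsc{SHole}$^*$ fires, so you cannot assume matching domains but must carry the companion invariant $\Theta_0'\sqleq\Theta\sqleq\Theta'$; and when reassembling the ``in addition'' derivation with \textsc{SUnion}$^*$, the rule's premises restrict the pattern to the domains of the \emph{sliced} trace sets rather than of the originals, so you need the (easy) observation that $\trbslice^*$ is insensitive to pattern labels lying outside the domain of the trace set being sliced. Also note that \textsc{SDiamond}$^*$ applies only to the empty trace set; a pattern that collapses to $\vany$ against a nonempty trace set is handled by the ordinary rules \textsc{SSng}$^*$/\textsc{SUnion}$^*$.
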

This implies that given a pattern difference and a trace, we can compute a trace difference using the following rule:
\[\inferrule*{
p_2,T \trbslice \rho_2,S_2\\
p_1,S_2 \trbslice \rho_1, S_1
}{
(p_1,p_2),T \trbslice (\rho_1,\rho_2), (S_1,S_2)
} 
\]
It follows from monotonicity that $\rho_1 \sqleq \rho_2$ and $S_1
\sqleq S_2$, thus, pattern differences yield trace differences.
Furthermore, the second part of monotonicity implies that we can
compute the smaller slice $S_1$ from the larger slice $S_2$, rather
than re-traverse $T$. It is also possible to define a simultaneous
differential slicing judgment, as an optimization to ensure we only
traverse the trace once.  

Query slicing is also monotone, so
differential query slices can be obtained in exactly the same way.
Revisiting our running example one last time, consider the
differential pattern $\{[r_2].\kwrec{B{:}\fbox{$8$};\vhole}\}\udot
\vhole$.  The differential query slice for the pattern $p'$
considered above is 
\[Q'' = \kwcomp{\kwif{x.B=3}{\{\kwrec{A{:}\vhole,B{:}\fbox{$x.C$}}\}}{\{\}}}{
  x}{R}\] 
%



\section{Examples and Discussion}
\label{sec:examples}

In this section we present some more complex examples to  illustrate key points.

\paragraph*{Renaming}
Recall the swapping query from the introduction, written in NRC as 
\[Q_1 = \kwcomp{\{\kwif{x.A > x.B}{\kwrec{A{:}x.B,B{:}x.A}}{\kwrec{x}}\}}{x}{R}\;\]
This query illustrates a key difference between our approach and the
how-provenance model of Green et al.~\cite{DBLP:conf/pods/2007/GreenKT07}.  As discussed in
\cite{cheney09ftdb}, renaming operations are ignored by
how-provenance, so the how-provenance annotations of the results of
$Q_1$ are the same as for a query that simply returns $R$.  In other
words, the choice to swap the fields when $A > B$ is not reflected in
the how-provenance, which shows that it is impossible to extract
where-provenance (or traces) from how-provenance.  Extracting
where-provenance from traces appears straightforward, extending our
previous work~\cite{acar13jcs}.

This example also illustrates how traces and slices can be used for
partial recomputation.  The slice for output pattern $\{[1,r_1].\kwrec{B{:}2}\}
\udot \vhole$, for example, will show that this record was produced
because the $A$ component of $\kwrec{A{:}1,B{:}2,C{:}7}$ at index $[r_1]$ in the
input was less than or equal to the  $B$ component.  Thus, we can
replay after any change that preserves this ordering information.

\paragraph*{Union}
Consider query 
\[Q_2 = \bigcup \{\{\kwrec{B{:}x.B}\} \mid x \in R\} \cup
\{\kwrec{B{:}3}\}\]
that projects the $B$ fields of elements of $R$ and adds another copy
of $\kwrec{B{:}3}$ to the result.  This yields
\[Q_2(R) = \{[1,r_1].\kwrec{B{:}2},[1,r_2].\kwrec{B{:}3},[1,r_3].\kwrec{B{:}3},[2].\kwrec{B{:}3}\}\]
This illustrates that the indexes may not all have the same length,
but still form a prefix code.  If we slice with respect to
$\{[1,r_2].\kwrec{B{:}3}\}\udot \vhole$ then the query slice is:
\[Q_2' = \bigcup \{\{\kwrec{B{:}x.B}\} \mid x \in R\} \cup \vhole\]
and $R' = \{[r_2].\kwrec{B{:}3;\vhole}\} \udot \vhole$
whereas if we slice with respect to $\{[2].\kwrec{B{:}3}\}\udot
\vhole$ then the query slice is $Q_2'' = \vhole \cup \{\kwrec{B{:}3}\}$
and $R'' = \vhole$, indicating that this part of the result has no dependence on the input.

A related point: one may wonder whether it makes sense to select a
particular copy of $\kwrec{B{:}3}$ in the output, since in a
conventional multiset, multiple copies of the same value are
indistinguishable.  We believe it is important to be able  to distinguish
different copies of a value, which may have different explanations.  
Treating $n$ copies of a value as a single value with multiplicity $n$
would obscure this distinction and force us to compute the slices of
all of the copies even if only a single explanation is required. This
is why we have chosen to work with indexed sets, rather than pure
multisets.
\paragraph*{Joins}
So far all examples have involved a single table $R$.  Consider
a simple join query 
\[Q_3 = \{\kwrec{A{:}x.A,B{:}y.C} \mid x \in R, y \in S, x.B=y.B\}\]
and consider the following table $S$, and the result $Q_3(R,S)$.
\[S = \begin{array}{c|cc}
  id & B & C\\
\hline
 {}[s_1] & 2 & 4\\
 {}[s_2] & 3 & 4\\
 {}[s_3] & 4 & 5
\end{array}\quad
Q_3(R,S) = \begin{array}{c|cc}
  id & A & B \\
\hline
{}[r_1,s_1] & 1 & 4\\
{}[r_2,s_2] & 2 & 4\\
{}[r_3,s_2] & 4 & 5
\end{array}
\]
The full trace of this query execution is as follows:
\[\small
\begin{array}{ll}
T_3 =   \trcomp{\_}{x}{R}{\{\\
\quad
[r_1].\trcomp{\_}{y}{S}{\{&[s_1].\trifthen{x.B=y.B}{\_}{\_}{\{\_\}},\\
&[s_2].\trifelse{x.B=y.B}{\_}{\_}{\{\}},\\
&[s_3].\trifelse{x.B=y.B}{\_}{\_}{\{\}}\}},\\
\quad
[r_2].\trcomp{\_}{y}{S}{\{&[s_1].\trifelse{x.B=y.B}{\_}{\_}{\{\}},\\
&[s_2].\trifthen{x.B=y.B}{\_}{\_}{\{\_\}},\\
&[s_3].\trifelse{x.B=y.B}{\_}{\_}{\{\}}\}},\\
\quad
[r_3].\trcomp{\_}{y}{S}{\{&[s_1].\trifelse{x.B=y.B}{\_}{\_}{\{\}},\\
&[s_2].\trifthen{x.B=y.B}{\_}{\_}{\{\_\}},\\
&[s_3].\trifelse{x.B=y.B}{\_}{\_}{\{\}}\}}\}}
\end{array}
\]
 Slicing with respect to
$\{[r_1,s_1].\kwrec{A{:}1;\vhole},[r_2,s_2].\kwrec{B{:}4;\vhole}\} \udot \vhole$ yields
trace slice
\[\small
\begin{array}{ll}
T_3' =   \trcomp{\_}{x}{R}{\{\\
\quad
[r_1].\trcomp{\_}{y}{S}{\{&[s_1].\trifthen{x.B=y.B}{\_}{\_}{\{\_\}}},\\
\quad
[r_2].\trcomp{\_}{y}{S}{\{&[s_2].\trifthen{x.B=y.B}{\_}{\_}{\{\_\}}}\}}
\end{array}
\]
and input slice $R_3' =
\{[r_1].\kwrec{A{:}1,B{:}2;\vhole},[r_2].\kwrec{B{:}3;\vhole}\}$ and
$S_3' = \{[s_1].\kwrec{B{:}2;\vhole},[s_2].\kwrec{B{:}3;C{:}4}\}$.

\paragraph*{Workflows}
NRC expressions can be used to represent workflows, if primitive operations are
added representing the workflow steps~\cite{DBLP:conf/dils/HiddersKSTB07,acar10tapp}.
To illustrate query slicing and differential slicing for a
workflow-style query, consider the following more complex query:
\[Q_4 = \{f(x,y) \mid x \in T, y \in T, z \in U, p(x,y),q(x,y,z)\}\]
where $f$ computes some function of $x$ and $y$ and $p$ and $q$ are
selection criteria.  Here, we assume $T$ and $U$ are collections of data files
and $f,p,q$ are additional primitive operations on them.
 This query exercises most of the distinctive features of our
approach; we can of course translate it to the NRC core calculus used
in the rest of the paper.  If $dom(T) = \{t_1,\ldots,t_{10}\}$ and $dom(U) =
\{u_1,\ldots,u_{10}\}$ then we might obtain
result $\{[t_3,t_4,u_5].v_1, [t_6,t_8,u_{10}].v_2\}$.  If
we focus on the value $v_1$ using the pattern $\{[t_3,t_4,u_5].v_1\}\udot
\vhole$, then the program slice we obtain is $Q_4$ itself,
while the data slice might be $T' = \{[t_3]. \vany,[t_4]. \vany\}
\udot \vhole, U' = \{[u_5].\vany\} \udot
\vhole$, indicating that if the values at $t_3,t_4,u_5$ are held
fixed then the end result will still be $v_1$.  The trace slice is
similar, and shows that $v_1$ was computed by applying $f$ with $x$
bound to the value at $t_3$ in $T$, $y$ bound to $t_4$, and $z$ bound
to $u_5$, and that $p(x,y)$ and $q(x,y,z)$ succeeded for these values.

If we consider a differential slice using pattern difference 
$\{[t_3,t_4,u_5].\fbox{$v_1$}\}\udot \vhole$
 then we obtain the following program difference:
\[Q_4^\delta = \bigcup\{\fbox{$f(x,y)$} \mid x \in T, y \in T, z \in U, p(x,y),q(x,y,z)\}
\]
This shows that most of the query is needed to
ensure that the result at $[t_3,t_4,u_5]$ is produced, but the subterm $f(x,y)$
is only needed to compute the value $v_1$.  This can be viewed as a
query-based explanation for this part of the result.


\section{Implementation}
\label{sec:implementation}

To validate our design and experiment with larger examples, we
extended our Haskell implementation \Slicer of program slicing for
functional programs~\cite{perera12icfp} with the traces and slicing
techniques presented in this paper.  We call the resulting system
\NRCSlicer; it supports a free combination of NRC and general-purpose
functional programming features.  \NRCSlicer interprets expressions
in-memory without optimization.  As
reported previously for \Slicer, we have experimented with several
alternative tracing and slicing strategies, which use Haskell's lazy
evaluation strategy in different ways.  The alternatives we consider
here are:
\begin{itemize}
\item \emph{eager}: the trace is fully computed during evaluation.
\item \emph{lazy}: the value is computed eagerly, but the trace is
  computed lazily using Haskell's default lazy evaluation strategy.
\end{itemize}
To evaluate  the effectiveness of  enriched patterns, we  measured the
time needed for  the eager and lazy techniques to  trace and slice the
workflow example $Q_4$  in   the  previous  section.  We considered a
instantiation of the workflow where the data values are
simply integers and with   input  tables $T,U =
\{1,\ldots,50\}$, and defined the operations $f(x,y)$ as $x*y$,
$p(x,y)$ as $x < y$, and $q(x,y,z)$ as $x^2+y^2=z^2$.   This
is not a realistic workflow, and we expect that the time to evaluate
the basic operations of a
realistic workflow following this pattern would be much larger.  However,
the overheads of tracing and slicing do not depend on the
execution time of primitive operations, so we can still draw some
conclusions from this simplistic example.

The comprehension iterates
over $50^3$ = 125,000 triples, producing 20 results.  We considered
simple and enriched patterns selecting a single element of the
result.  We measured evaluation time, and the overhead of tracing,
trace slicing, and query slicing.  The  experiments  were
conducted on a  MacBook Pro with 2GB RAM and a  2.8GHz Intel Core Duo,
using GHC version 7.4.
\begin{center}
  \begin{tabular}{|l|cccc|}
\hline
          & eval &  trace & slice & qslice\\
\hline
eager-simple & 0.5 & 1.5  & 2.5 &  1.6 \\
eager-enriched & 0.5 & 1.5 & $<$0.1 & $<$0.1 \\
lazy-simple & 0.5 & 0.7 & 1.3& 1.7\\
lazy-enriched & 0.5 & 0.7 & $<$0.1& $<$0.1\\
\hline
 \end{tabular}
\end{center}
The times are in seconds. The ``eval'' column shows the time needed to
compute the result without tracing.  The ``trace'', ``slice'', and
``qslice'' columns show the added time needed to trace and compute
slices.  The full traces in each of these runs have over 2.1 million
nodes; the simple pattern slices are almost as large, while the
enriched pattern slices are only 95 nodes.  For this example, slicing
is over an order of magnitude faster using enriched patterns.  The
lazy tracing approach required less total time both for tracing and
slicing (particularly for simple patterns).  Thus, Haskell's built-in
lazy evaluation strategy offers advantages by avoiding explicitly
constructing the full trace in memory when it is not needed; however,
there is still room for improvement.  Again, however, for an actual
workflow involving images or large data files, the evaluation time
would be much larger, dwarfing the time for tracing or slicing.

Our implementation is a proof-of-concept that evaluates queries
in-memory via interpretation, rather than compilation; further work
would be needed to adapt our approach to support fine-grained
provenance for conventional database systems.  Nevertheless, our
experimental results do suggest that the lazy tracing strategy and use
of enriched patterns can effectively decrease the overhead of tracing,
making it feasible for in-memory execution of workflows represented in
NRC.



\section{Related and future work}
\label{sec:related}

Program slicing has been studied
extensively~\cite{weiser81icse,DBLP:journals/jpl/Tip95,field98ist}, as
has the use of execution traces, for example in dynamic slicing.  Our
work contrasts with much of this work in that we regard the trace and
underlying data as being of interest, not just the program.
Some of our previous work~\cite{cheney11mscs} identified analogies
between program slicing and provenance, but to our knowledge, there is
no other prior work on slicing in databases.



 

Lineage and why-provenance were motivated semantically in terms of
identifying \emph{witnesses}, or parts of the input needed to ensure
that a given part of the output is produced by a query.  Early work on
lineage in relational algebra~\cite{DBLP:journals/tods/CuiWW00}
associates each output record with a witness.  Buneman et al. studied
a more general notion called why-provenance that maps an output part
to a collection of witnesses~\cite{buneman01icdt,buneman02pods}.  This
idea was generalized further to the \emph{how-provenance} or
\emph{semiring}
model~\cite{DBLP:conf/pods/2007/GreenKT07,DBLP:conf/pods/FosterGT08},
based on using algebraic expressions as annotations; this approach has
been extended to handle some forms of negation and
aggregation~\cite{amsterdamer11pods,geerts10jal}.  Semiring
homomorphisms commute with query evaluation; thus, homomorphic changes
to the input can be performed directly on the output without
re-running the query.  However, this approach only applies to changes
describable as semiring homomorphisms, such as deletion.

Where-provenance was also introduced by Buneman et
al.~\cite{buneman01icdt,buneman02pods}.  Although the idea of tracking
where input data was copied from is natural, it is nontrivial to
characterize semantically, because where-provenance does not always
respect semantic equivalence.  In later work, Buneman et
al.~\cite{buneman08tods} studied where-provenance for the pure NRC and
characterized its expressiveness for queries and updates.  It would be
interesting to see whether their
notion of \emph{expressive completeness}  for where-provenance could
be extended to richer provenance models, such as  traces, possibly
leading to an implementation strategy via translation to plain NRC.

Provenance has been studied extensively for scientific workflow
systems~\cite{bose05cs,DBLP:journals/sigmod/SimmhanPG05}, but there
has been little formal work on the semantics of workflow provenance.
The closest work to ours is that of Hidders et
al.~\cite{DBLP:conf/dils/HiddersKSTB07}, who model workflows by
extending the NRC with nondeterministic, external function calls.
They sketch an operational semantics that records \emph{runs} that
contain essentially all of the information in a derivation tree,
represented as a set of triples.  They also suggest ways of extracting
\emph{subruns} from runs, but their treatment is
partial and lacks strong formal guarantees analogous to our
results.  

There have been some attempts to reconcile the database and workflow views of
provenance; Hidders et al.~\cite{DBLP:conf/dils/HiddersKSTB07} argued
for the use of Nested Relational Calculus (NRC) as a unifying
formalism for both workflow and database operations, and subsequently
Kwasnikowska and Van den Bussche~\cite{kwasnikowska08ipaw} showed how
to map this model to the Open Provenance Model.  Acar et
al.~\cite{acar10tapp} later formalized a graph model of provenance for
NRC.  The most advanced work in this direction appears to be that of
Amsterdamer et al.~\cite{amsterdamer11pvldb}, who combined workflow
and database styles of provenance in the context of the PigLatin
system (a MapReduce variant based on nested relational queries).
Lipstick allows analyzing the impact of
restricted hypothetical changes (such as deletion) on parts of the
output, but to our knowledge no previous work provides a formal
guarantee about the impact of changes other than deletion.

In our previous work~\cite{cheney11mscs}, we introduced
\emph{dependency provenance}, which conservatively over-approximates
the changes that can take place in the output if the input is changed.
We developed definitions and techniques for dependency provenance in
full NRC including nonmonotone operations ($\kw{empty}$, $\kw{sum}$) and
primitive functions.
Dependency provenance cannot predict exactly how the output will be
affected by a general modification to the source, but it can guarantee
that some parts of the output will not change if certain parts of the
input are fixed.  Our notion of equivalence modulo a pattern is a
generalization of the \emph{equal-except-at} relation used in that
work.
%
%
%
%
%
%
Motivated by dependency provenance, an earlier technical
report~\cite{traces-tr} presented a model of traced evaluation for NRC
and proved elementary properties such as fidelity.  However, it did
not investigate slicing techniques, and used nondeterministic label
generation instead of our deterministic scheme; our deterministic
approach greatly simplifies several aspects of the system,
particularly for slicing.

There are several intriguing directions for future work, including
developing more efficient techniques for traced evaluation and slicing
that build upon existing database query optimization capabilities.  It
appears possible to translate multiset queries so as to make the labels
explicit, since a fixed given query increases the label depth by at most a
constant. Thus, it may be possible to evaluate queries with label
information but without tracing
first, then gradually build the trace by slicing backwards through the
query, re-evaluating subexpressions as necessary.  Other 
interesting directions include the use of slicing techniques for security,
to hide confidential input information while disclosing enough about
the trace to permit recomputation, and the possibility of extracting
other forms of provenance from traces, as explored in
the context of functional programs in prior work~\cite{acar13jcs}.



\section{Conclusion}\label{sec:concl}

The importance of provenance for transparency and reproducibility is
widely recognized, yet there has been little explicit discussion of
correctness properties formalizing intuitions about how provenance is
to provide reproducibility. In
self-explaining computation, traces are considered to be explanations
of a computation in the sense that the trace can be used to recompute
(parts of) the output under hypothetical changes to the input.  This
paper develops the foundations of self-explaining computation for
database queries, by defining a tracing semantics for NRC, proposing a
formal definition of correctness for tracing (fidelity) and slicing,
and defining a correct (though potentially overapproximate) algorithm
for trace slicing.  Trace slicing can be used to obtain smaller
``golden trail'' traces that explain only a part of the input or
output, and explore the impact of changes in hypothetical scenarios
similar to the original run.  At a technical level, the main
contributions are the careful use of prefix codes to label multiset
elements, and the development of enriched patterns that allow more
precise slices.  Our design is validated by a proof-of-concept
implementation that shows that laziness and enriched patterns can
significantly improve performance for small (in-memory) examples.

In the near term, we plan to combine our work on self-explaining
functional programs~\cite{perera12icfp} and database queries (this
paper) to obtain slicing and provenance models for programming
languages with query primitives, such as F\#~\cite{cheney13icfp} or
Links~\cite{lindley12tldi}.  Ultimately, our aim is to extend
self-explaining computation to programs that combine several execution
models, including workflows, databases, conventional programming
languages, Web interaction, or cloud computing.

\paragraph{Acknowledgments}
We are grateful to Peter Buneman, Jan Van den Bussche, and Roly Perera
for comments on this work and to the anonymous reviewers for detailed
suggestions.
Effort sponsored by the Air Force Office of Scientific Research, Air
Force Material Command, USAF, under grant number FA8655-13-1-3006. The
U.S. Government and University of Edinburgh are authorized to
reproduce and distribute reprints for their purposes notwithstanding
any copyright notation thereon.  Cheney is supported by a Royal
Society University Research Fellowship, by the EU FP7 DIACHRON
project, and EPSRC grant EP/K020218/1.  Acar is partially supported by
an EU ERC grant (2012-StG 308246---DeepSea) and an NSF grant
(CCF-1320563).


{
\small
\bibliographystyle{abbrv}

\bibliography{main}
}

\newpage
\appendix
\onecolumn



 \section{Auxiliary definitions}
 \label{app:auxiliary}

\figref{nrc-trace-types} summarizes the typing rules for traces.
\figref{nrc-subtrace} defines the subtrace relation.

 \begin{figure*}[tb]
\fbox{$\Gamma \ts T : \tau$}
\vspace{-1ex}
\begin{smathpar}
\inferrule*{n \in \mathbb{N}}
{\Gamma \ts n : \tyint}
\and
\inferrule*{b \in \{\kwtrue,\kwfalse\}}
{\Gamma \ts b : \tybool}
\and
\inferrule*{\kwf : (b_1,\ldots,b_n) \to b \in \Sigma\\
\Gamma \ts T_1 : b_1\\
\cdots\\
\Gamma \ts T_n : b_n
}{
\Gamma \ts \kwf(T_1,\ldots,T_n) : b
}
\and
\inferrule*{x:\tau \in \Gamma
}{
\Gamma \ts x : \tau
}
\and
\inferrule*{
\Gamma \ts T_1 : \tau_1 \\
\Gamma,x:\tau_1 \ts T_2 : \tau_2
}{
\Gamma \ts \trlet{x}{T_1}{T_2} : \tau_2
}
\and
\inferrule*{
\Gamma \ts T_1 :\tau_1 \\
\cdots \\
\Gamma \ts T_n : \tau_n
}{
\Gamma \ts \trrec{A_1:T_1,\ldots,A_n:T_n} : \tyrec{A_1 : \tau_1,\ldots,A_n:\tau_n}
}
\and
\inferrule*{
\Gamma \ts T : \tyrec{A_1 : \tau_1,\ldots,A_n:\tau_n}
}{
\Gamma \ts \trfield{T}{A_i} : \tau_i
}
\and
\inferrule*{
b \in \{\kwtrue,\kwfalse\}\\
\Gamma \ts T : \tybool\\
\Gamma \ts e_1 : \tau\\
\Gamma \ts e_2 : \tau\\
\Gamma \ts T': \tau
}{
\Gamma \ts \trif{T}{e_1}{e_2}{b}{T'} : \tau
}
\and
\inferrule*
{\strut}
{\Gamma \ts \emptyset : \set{\tau}}
\and
\inferrule*
{\Gamma \ts T : \tau }
{\Gamma \ts \trsets{T} : \set{\tau}}
\and
\inferrule*
{\Gamma \ts T : \set{\tau} \\
 \Gamma \ts T' : \set{\tau}}
{\Gamma \ts \trsetu{T}{T'} : \set{\tau}}
\and
%
%
\inferrule*
{\Gamma \ts T : \set{\tyint}}
{\Gamma \ts \trsetsum{T} : \tyint}
\and
\inferrule*
{\Gamma \ts T : \set{\tau}}
{\Gamma \ts \trsetise{T} : \tybool}
\and
\inferrule*
{\Gamma, x:\tau \ts e : \set{\tau'} \\
 \Gamma \ts T : \set{\tau} \\
\Gamma,x\in \set{\tau} \vdash \Theta : \set{\tau'}}
{ 
\Gamma \ts \trcomp{e}{x}{T}{\Theta} : \set{\tau'}
}
\end{smathpar}
\fbox{$\Gamma,x\in\set{\tau} \vdash \Theta : \set{\tau'}$}
\vspace{-3ex}
\begin{smathpar}
\inferrule*{
\strut
}{
\Gamma,x\in\set{\tau} \vdash \emptyset :  \set{\tau'}
}
\and
\inferrule*{
\Gamma,x:\tau \vdash T:\{\tau'\}
}{
\Gamma,x\in\set{\tau} \vdash \{\lbl.T\} : \{\tau'\}
}
\and
\inferrule*{
\Gamma,x\in\set{\tau} \vdash \Theta_1 :\{\tau'\}\\
\Gamma,x\in\set{\tau} \vdash \Theta_2 :\{\tau'\}
}{
\Gamma,x\in\set{\tau} \vdash \Theta_1 \cup \Theta_2 : \{\tau'\}
}
\and
\end{smathpar}
\caption{Well-typed traces.}
\label{fig:nrc-trace-types}
\end{figure*}

\begin{figure*}[tb]
\fbox{$T' \trsub T$}
\vspace{-2ex}
\begin{smathpar}
\inferrule*
{\strut
}
{
\tremp \trsub T
}
\and
\inferrule*
{\strut
}
{
T \trsub T
}
\and
\inferrule*
{T'' \trsub T'\\
T' \trsub T
}
{
T'' \trsub T
}
\and
\inferrule*
{
T_1' \trsub T_1\\
\cdots \\
T_n' \trsub T_n
}
{
\trf(T_1',\ldots,T_n') \trsub \trf(T_1,\ldots,T_n)
}
\and
\inferrule*
{T_1' \trsub T_1\\
T_2' \trsub T_2
}
{
\trlet{x}{T_1'}{T_2'} \trsub \trlet{x}{T_1}{T_2}
}
\and
\inferrule*
{
T_1' \trsub T_1\\
\cdots \\
T_n' \trsub T_n
}
{
\trrec{A_1:T_1',\ldots,A_n:T_n'} \trsub \trrec{A_1:T_1,\ldots,A_n:T_n}
}
\and
\inferrule*
{
T' \trsub T
}
{
\trfield{T'}{A} \trsub \trfield{T}{A}
}
\and
\inferrule*
{
T_0' \trsub T_0\\
T' \trsub T
}
{
\trif{T_0'}{e_1}{e_2}{b}{T'} \trsub \trif{T_0}{e_1}{e_2}{b}{T}
}
%
%
\and
\inferrule*
{\strut}
{\emptyset \trsub \emptyset}
\and
\inferrule*
{T' \trsub T}
{\trsets{T'} \trsub \trsets{T}}
\and
\inferrule*
{T_1' \trsub T_1 \\
 T_2' \trsub T_2}
{\trsetu{T_1'}{T_2'} \trsub \trsetu{T_1}{T_2}}
\and
%
%
\inferrule*
{T' \trsub T}
{\trsetsum{T'} \trsub \trsetsum{T}}
\and
\inferrule*
{T' \trsub T}
{\trsetise{T'} \trsub \trsetise{T}}
\and
\inferrule*
{ T' \trsub T\\
\Theta' \trsub \Theta}
{\trcomp{e}{x}{T'}{\Theta'} \trsub \trcomp{e}{x}{T}{\Theta}}
\end{smathpar}
\[\Theta' \trsub \Theta \iff \forall \lbl\in \dom(\Theta'). \Theta'(\lbl)
\trsub \Theta(\lbl)\]
\caption{Subtrace relation.}
\label{fig:nrc-subtrace}
\end{figure*}

\section{Proofs of pattern properties}
\label{app:pattern-proofs}
We prove the required properties for enriched patterns.  The
corresponding properties for the sublanguage of simple patterns follow
immediately since simple patterns are closed under the relevant
operations.
\begin{lemma}\label{lem:lub-properties}
  If $p,p' \sqleq v$ then $p \sqcup p'$ exists and is the least upper
  bound of $p$ and $p'$.
\end{lemma}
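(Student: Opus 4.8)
The plan is to proceed by structural induction on the pattern $p$ (or, equivalently, on the value $v$ bounding both patterns), with a case analysis on $p'$ in each case. The statement has two parts that must be proved together: (a) that $p \sqcup p'$ is \emph{defined} (the $\sqcup$ operation of Figures~\ref{fig:nrc-lub} and~\ref{fig:nrc-lub-extended} is partial), and (b) that the defined value is genuinely the least upper bound of $p$ and $p'$ in the ordering $\sqleq$. For (b) I will separately verify that $p \sqleq p\sqcup p'$ and $p' \sqleq p \sqcup p'$ (upper bound), and that for any $q$ with $p \sqleq q$ and $p' \sqleq q$ we have $p \sqcup p' \sqleq q$ (least). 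Since $\sqleq$ is itself defined via $\sqcup$ (namely $p \sqleq p'$ iff $p \sqcup p' = p'$), the three conditions in (b) unwind to equational facts about $\sqcup$: associativity-like and idempotence-like identities, together with the property that $\sqcup$ commutes appropriately with the $[\vany/\vhole]$ substitution.

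First I would dispatch the base and easy cases. If $p = \vhole$ then $p \sqcup p' = p'$ by definition, and $\vhole \sqleq p'$ trivially, so $p'$ is the lub. If $p = \vany$, then $p \sqcup p' = p'[\vany/\vhole]$; here I use that $p' \sqleq v$ forces every hole position of $p'$ to be fillable, and that $p'[\vany/\vhole]$ is the least pattern above both $\vany$ and $p'$ — this needs the auxiliary fact that $q[\vany/\vhole] \sqleq q'$ iff $\vany \sqleq q'$ and $q \sqleq q'$, which I would establish as a small separate sublemma (again by induction on $q$). The constant case $p = \vc$ is forced: since $p' \sqleq \vc$, we must have $p' \in \{\vhole,\vany,\vc\}$, all of which are handled by the clauses already listed or by $\vc \sqcup \vc = \vc$. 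The symmetric situations where $p' \in \{\vhole,\vany\}$ are likewise covered.

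The substantive cases are records and sets, and the set case is where the real work lies. For pairs/records the induction is routine: $p \sqleq v$ and $p' \sqleq v$ with $v$ a record forces $p$ and $p'$ to be record patterns over the same underlying field set (modulo the $\vhole/\vany$ tail), their corresponding field patterns are bounded by the corresponding field values of $v$, so the induction hypothesis applies componentwise, and Figure~\ref{fig:nrc-lub-extended} was designed precisely so that the componentwise lubs assemble into the lub of the whole — the only subtlety being the bookkeeping of the $\vhole$ versus $\vany$ tail, which I handle by the sublemma about $[\vany/\vhole]$ mentioned above. For sets, the domains of $p$ and $p'$ must each be sub-prefix-codes "compatible with" $\dom(v)$; the key observation is that whenever both $p,p' \sqleq v$, the labelled entries of $p$ and $p'$ that are \emph{explicit} either coincide in label or one refines the other — but because $\dom(v)$ is a prefix code and both refine it, in fact the explicit labels must match up exactly on their common part, so the clauses of Figure~\ref{fig:nrc-lub-extended} apply and recursion proceeds on the matched sub-patterns. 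I expect the main obstacle to be exactly this: showing definedness in the set case, i.e. that the side-conditions implicit in the $\sqcup$ clauses (matching labels on the overlapping explicit parts, prefix-disjointness of the residual parts) are automatically satisfied once we know $p,p'\sqleq v$. This requires a careful lemma relating $\dom(p)$, $\dom(p')$, and $\dom(v)$ under the prefix-code invariant (Theorem~\ref{thm:prefix-labeled}), plus the observation that a partial pattern $\{\overline{\lbl_i.p_i}\}\udot\vhole \sqleq v$ pins down $v$'s multiplicity at each $\lbl_i$ but leaves the rest free, so two such patterns can always be merged. Once definedness is in hand, the least-upper-bound verification is a mechanical induction using the defining equations, and I would present it compactly, noting that the simple-pattern sublanguage is closed under all the operations involved so the corresponding statement for simple patterns is immediate.
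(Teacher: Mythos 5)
Your proposal is correct, and its core --- a structural induction in which the common bound forces the top-level constructors of $p$ and $p'$ to be compatible, so that a clause of $\sqcup$ applies and the induction hypothesis handles the subpatterns --- is the same idea as the paper's. The organization differs, though. The paper proves one strengthened statement by induction on an arbitrary \emph{pattern} upper bound $q$: if $p,p' \sqleq q$ then $p \sqcup p'$ is defined and $p \sqcup p' \sqleq q$. Taking $q = v$ gives existence, and leastness is literally the same statement read with $q$ an arbitrary upper bound, so nothing remains beyond the easy observation that $p, p' \sqleq p \sqcup p'$. You instead induct on $p$ (or $v$) to get definedness and plan to discharge leastness afterwards via equational facts about $\sqcup$ (idempotence/associativity-style identities and their interaction with the substitution $[\vany/\vhole]$); that route works, but partial associativity of $\sqcup$ is itself an induction of roughly the same size as the whole lemma, so you pay more than the paper does --- if you simply generalize the bound in your main induction from the value $v$ to an arbitrary pattern $q$, your leastness clause becomes an instance of the induction and the extra equational machinery disappears. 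On the plus side, your treatment of the collection case is more explicit than the paper's one-line remark: spelling out that every explicit label of a (complete or partial) set pattern below $v$ lies in $\dom(v)$, hence overlapping explicit labels must coincide and the residual label sets are automatically distinct and prefix-disjoint, is exactly the definedness bookkeeping the paper leaves implicit, and your sublemma characterizing when $q[\vany/\vhole] \sqleq q'$ is the right auxiliary fact for the $\vany$ cases.
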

\begin{proof}
  If both $p$ and $p'$ match some pattern $q$, then it is straightforward to show by induction on $q$ that $p \sqcup p'$ is defined and $p \sqcup p' \sqleq q$.   Specifically, if $p$ or $p'$ is $\vhole$ or $\vany$ then we are done; if $p$ and $p'$ are both constants then we are done; otherwise, in each case, the toplevel structure of $p$ and $p'$ must match $q$, so that we can apply one of the rules for $\sqcup$ on smaller terms that match part of $q$.  When $q = v$, the desired result follows.  The second part (that $\sqcup$ is a least upper bound) also follows directly since clearly, $p,p' \sqleq p \sqcup p'$ and if $p,p' \sqleq q$ then a similar argument shows that $p\sqcup p' \sqleq q$.
\end{proof}
\begin{lemma}\label{lem:sub-sim}
  For any $v,v',p$,  $v\simat{p[\vany/\vhole]} v'$ holds if and only
  if $v \simat{p} v'$  holds and $v= v'$.
\end{lemma}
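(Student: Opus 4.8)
The quickest route is to notice that, by the very definition of the least upper bound on patterns (\figref{nrc-lub}), $p[\vany/\vhole]$ is nothing other than $p \sqcup \vany$, and that $p \sqcup \vany$ is always defined (the substitution $[\vany/\vhole]$ is total). Hence, invoking the characterizing property of $\sqcup$ recalled in \secref{slicing} --- that $v \simat{p \sqcup p'} v'$ holds iff $v \simat{p} v'$ and $v \simat{p'} v'$ --- we obtain that $v \simat{p[\vany/\vhole]} v'$ holds iff $v \simat{p} v'$ and $v \simat{\vany} v'$, and since $\simat{\vany}$ is the identity relation (\figref{nrc-sim}), this is exactly ``$v \simat{p} v'$ and $v = v'$''. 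That is the claim.

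If one prefers an argument that does not rely on that characterization (which may itself be established with the help of lemmas like this one), the statement follows by induction on the structure of $p$, using the definitions of $p[\vany/\vhole]$ (\secreftwo{slicing}{patterns}) and of $\simat{p}$ (\figreftwo{nrc-sim}{nrc-sim-extended}). In the base cases, $\vhole$ and $\vany$ both rewrite to $\vany$, so the left-hand side collapses to ``$v = v'$'', matching the right-hand side (the conjunct $v \simat{\vhole} v'$ being vacuous, and $v \simat{\vany} v'$ being already $v = v'$); the constant case is immediate. For pairs, records $\vrec{\overline{A_i:p_i}}$, and complete set patterns $\vset{\overline{\lbl_i.p_i}}$, the pattern $p[\vany/\vhole]$ constrains the top-level shape and labels of the two values exactly as $p$ does and relates the corresponding components through the $p_i[\vany/\vhole]$; the induction hypothesis rewrites each component obligation $v_i \simat{p_i[\vany/\vhole]} v_i'$ to ``$v_i \simat{p_i} v_i'$ and $v_i = v_i'$'', and gathering the equalities over all components is equivalent to $v = v'$.

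The cases I expect to need the most care --- the only real obstacle --- are the partial patterns $\vset{\overline{\lbl_i.p_i}}\udot\vhole$ and $\vrec{\overline{A_i:p_i};\vhole}$, which rewrite to $\vset{\overline{\lbl_i.p_i[\vany/\vhole]}}\udot\vany$ and $\vrec{\overline{A_i:p_i[\vany/\vhole]};\vany}$ respectively (the $\udot\vany$ and $;\vany$ variants are analogous and easier). The subtlety is that the equivalence rule for $\udot\vhole$ in \figref{nrc-sim-extended} permits the two values to carry \emph{different} remainders outside the named labels, whereas the rule for $\udot\vany$ forces the remainder to be the \emph{same} submultiset. In the forward direction this causes no trouble: from $v \simat{\vset{\overline{\lbl_i.p_i[\vany/\vhole]}}\udot\vany} v'$ the remainders agree and, by induction, $v_i = v_i'$, whence $v = v'$ and therefore also $v \simat{\vset{\overline{\lbl_i.p_i}}\udot\vhole} v'$. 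Conversely, given $v \simat{\vset{\overline{\lbl_i.p_i}}\udot\vhole} v'$ together with $v = v'$, I need a \emph{common} decomposition of $v$ and $v'$: this is where the prefix-code invariant (\thmref{prefix-labeled}) enters, since once the labels $\lbl_i$ are fixed, the decomposition of a multiset value as $\vset{\overline{\lbl_i.v_i}} \semu w$ into its named part and a prefix-disjoint remainder $w$ is unique, so $v = v'$ forces $v_i = v_i'$ for every $i$ and $w = w'$; the induction hypothesis then supplies $v_i \simat{p_i[\vany/\vhole]} v_i'$, and the $\udot\vany$ rule closes the case with the shared remainder $w$. The record case is identical, with distinctness of field names playing the role of the prefix-code property.
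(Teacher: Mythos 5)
Your inductive argument is correct and is essentially the paper's own proof, which is precisely a straightforward induction (the paper phrases it as induction on the derivation of $v \simat{p[\vany/\vhole]} v'$); your handling of the partial set and record cases, including the uniqueness of the decomposition at the named labels/fields, just fills in details the paper leaves implicit. Note, however, that your ``quick route'' is circular within the paper's development: the characterization $v \simat{p \sqcup p'} v'$ iff $v \simat{p} v'$ and $v \simat{p'} v'$ is \lemref{lub-sim}, whose proof explicitly invokes the present lemma for the $\vany$ cases, so only your induction stands as an independent proof.
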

\begin{proof}
  Straightforward induction on the derivation of $v
  \simat{p[\vany/\vhole]} v'$.
\end{proof}
\begin{lemma}\label{lem:lub-sim}
  For any $v,v',p,p'$, we have $v \simat{p\sqcup p'} v'$ if and only
  if $v \simat{p} v'$ and $v \simat{p'} v'$.  Moreover,  $p \sqleq p'$
  if and only if for all $v,v'$, we have $v \simat{p'} v'$ implies $v
  \simat{p} v'$.
\end{lemma}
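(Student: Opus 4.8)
The plan is to establish the two statements in turn: the first biconditional by structural induction, and the second as a consequence of the first together with the least-upper-bound property of $\sqcup$ (\lemref{lub-properties}).

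For $v \simat{p\sqcup p'} v'$ iff ($v\simat{p}v'$ and $v\simat{p'}v'$), I would induct on the derivation of $p \sqcup p'$, i.e., perform a case analysis following the clauses defining $\sqcup$ in \figreftwo{nrc-lub}{nrc-lub-extended}. When one of $p,p'$ is $\vhole$ we have $\vhole\sqcup p' = p'$, and since $v\simat{\vhole}v'$ holds for all $v,v'$ the claim is immediate. When one of them is $\vany$ but neither is $\vhole$, $\vany\sqcup p' = p'[\vany/\vhole]$; here $v\simat{\vany}v'$ holds exactly when $v=v'$, and \lemref{sub-sim} says $v\simat{p'[\vany/\vhole]}v'$ iff ($v\simat{p'}v'$ and $v=v'$), which is exactly what is needed. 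The constant case $\vc\sqcup\vc=\vc$ is trivial. For the compound cases (pairs, records with no remainder, complete set patterns) the matching rules for $\simat{}$ decompose a relation on structured values into the relations on the components, so the claim follows by applying the induction hypothesis componentwise. The partial-pattern cases (record remainders $;\vhole$ and $;\vany$, set patterns $\udot\vhole$ and $\udot\vany$) are the same in spirit but require combining the induction hypothesis on the explicitly listed components with a use of \lemref{sub-sim} on the remainder: whenever a $\vany$ remainder meets a strictly more informative pattern the clause inserts a $[\vany/\vhole]$, and \lemref{sub-sim} accounts for exactly the extra equality this imposes. For the right-to-left direction I would additionally note that $v\simat{p}v'$ and $v\simat{p'}v'$ force $p$ and $p'$ to have compatible top-level shape (each $\sqleq v$), so $p\sqcup p'$ is defined by \lemref{lub-properties} and the appropriate clause applies.

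For the second statement, the forward direction is immediate: $p\sqleq p'$ unfolds to $p\sqcup p' = p'$, so applying the first part to $p\sqcup p' = p'$ gives that $v\simat{p'}v'$ implies $v\simat{p}v'$. For the converse, assume $v\simat{p'}v'$ implies $v\simat{p}v'$ for all $v,v'$. Pick any value $w$ with $p'\sqleq w$ (fill the holes and remainders of $p'$); instantiating at $v=v'=w$ gives $w\simat{p'}w$, hence $w\simat{p}w$, hence $p\sqleq w$. So $p,p'\sqleq w$ and \lemref{lub-properties} makes $q:=p\sqcup p'$ defined, the least upper bound of $p,p'$, and in particular $p'\sqleq q$. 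By the first part and the assumption, $\simat{q}=\simat{p}\cap\simat{p'}=\simat{p'}$. It remains to deduce $q=p'$ from $p'\sqleq q$ and $\simat{q}=\simat{p'}$; this follows since a strict refinement $p'\sqlt q$ must relate strictly fewer pairs — take $w$ matching $q$ and perturb it at a position where $q$ is more informative than $p'$ to obtain a pair in $\simat{p'}\setminus\simat{q}$ — contradicting $\simat{q}=\simat{p'}$. Hence $p\sqcup p'=p'$, i.e., $p\sqleq p'$.

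I expect the main obstacle to be the volume of partial-pattern cases in the first part: there are many near-symmetric combinations ($\vhole$ versus $\vany$ remainders on either side, complete versus partial, records versus sets), each needing one to track where the clause introduces a $[\vany/\vhole]$ and discharge it via \lemref{sub-sim}. A secondary subtlety is the final step of the converse of the second part, which is essentially the injectivity of $p\mapsto{\simat{p}}$ on normalized patterns; this is a routine but fiddly induction (the only interesting points being $\vhole$ versus $\vany$ and the normalization conventions forbidding $\emptyset\udot\vhole$ and $\emptyset\udot\vany$), which I would relegate to the appendix.
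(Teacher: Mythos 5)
Your treatment of the first biconditional is essentially the paper's own proof: an induction on the two patterns with a case split following the clauses of $\sqcup$, the $\vany$ cases and the $[\vany/\vhole]$ substitutions introduced by partial remainders discharged via \lemref{sub-sim}, and the structured cases handled componentwise — this matches the paper's argument for \lemref{lub-sim} case for case. Where you genuinely diverge is the second claim. The paper dispatches it in one line (``follows immediately from the definition of $p \sqleq p'$ as $p \sqcup p' = p'$''), which really only accounts for the forward direction; you instead work out the converse: choose $w$ with $p' \sqleq w$, use the hypothesis to get $w \simat{p} w$ and hence $p \sqleq w$, invoke \lemref{lub-properties} for definedness of $p \sqcup p'$, conclude $\simat{p\sqcup p'} = \simat{p'}$ from the first part, and finish by injectivity of $q \mapsto \simat{q}$ on normalized patterns, argued by perturbing a value matching a strict refinement at a position where it is more informative. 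This is a sound and more careful route: it also explains why the right-to-left half of the first part is meaningful at all (definedness of $p \sqcup p'$ from $p, p' \sqleq v$), a point the paper passes over. The price is two auxiliary facts the paper never states and you do not fully prove: the converse of the paper's observation, namely that $v \simat{p} v'$ implies $p \sqleq v$ (true, by an easy induction using that values contain no holes, so $w[\vany/\vhole] = w$), and the injectivity/strict-refinement lemma itself, whose perturbation argument must cover both the case where $p'$ has $\vhole$ against something more informative and the case where $p'$ has $\vany$ against a shaped, hole-free pattern; your sketch handles these correctly, but if this part of the lemma is to be cited in its full ``if and only if'' strength, those inductions should be written out rather than deferred.
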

\begin{proof}
  For the first part, we proceed by induction on the total size of $p,p'$.  The cases where one of $p,p'$ is $\vhole$ or $\vany$ are straightforward; for $\vany$ we also need Lemma~\ref{lem:sub-sim}.  The cases involving constants, pairs, or complete set and record patterns are also straightforward.

There are several similar cases involving partial set or record patterns.  We illustrate two representative cases:
\begin{enumerate}
\item If $p = \vset{\overline{\lbl_i.p_i},\overline{\lbl'_i.q_i}}$ and $p' =
  \vset{\overline{\lbl_i.p_i'}}\udot \vhole$ then $p \sqcup p' =
  \vset{\overline{\lbl_i.p_i\sqcup p_i'},\overline{\lbl'_i.q_i}}$.  First,
  suppose $v \simat{p \sqcup p'} v'$.  This means $v =
  \vset{\overline{\lbl_i.v_i},\overline{\lbl'_i.w_i}}$ and $v' =
  \vset{\overline{\lbl_i.v_i'},\overline{\lbl'_i.w_i'}}$, where $v_i
  \simat{p_i \sqcup p_i'} v_i'$ and $w_i \simat{q_i} w_i'$.
  Therefore, by induction, $v_i \simat{p_i} v_i'$ and $v_i
  \simat{p_i'} v_i'$, so we can conclude that
  $\vset{\overline{\lbl_i.v_i},\overline{\lbl'_i.w_i}} \simat{p}
  \vset{\overline{\lbl_i.v_i'},\overline{\lbl'_i.w_i'}}$ and
  $\vset{\overline{\lbl_i.v_i},\overline{\lbl'_i.w_i}}
  \simat{p'}\vset{\overline{\lbl_i.v_i'},\overline{\lbl'_i.w_i'}}$, as
  required.

  Conversely, if we assume $v \simat{p} v'$ and $v \simat{p'} v'$,
  then we must have $v =
  \vset{\overline{\lbl_i.v_i},\overline{\lbl'_i.w_i}}$ and $v' =
  \vset{\overline{\lbl_i.v_i'},\overline{\lbl'_i.w_i'}}$, where $v_i
  \simat{p_i} v_i'$ and $v_i \simat{p_i'} v_i'$ and $w_i \simat{q_i}
  w_i'$.  Thus, by induction we have $v_i \simat{p_i \sqcup p_i'}
  v_i'$ so we can conclude
  $\vset{\overline{\lbl_i.v_i},\overline{\lbl'_i.w_i}}\simat{p\sqcup p'}
  \vset{\overline{\lbl_i.v_i'},\overline{\lbl'_i.w_i'}}$.
\item If $p = \vset{\overline{\lbl_i.p_i},\overline{\lbl'_i.q_i}}\udot
  \vhole$ and $p' =
  \vset{\overline{\lbl_i.p_i'},\overline{\lbl''_i.r_i}}\udot \vany$ then
  $p \sqcup p' = \vset{\overline{\lbl_i.p_i\sqcup
      p_i'},\overline{\lbl'_i.q_i[\vany/\vhole]},\overline{\lbl''_i.r_i}}
  \udot \vany$.  First, suppose $v \simat{p \sqcup p'} v'$.  This
  means $v =
  \vset{\overline{\lbl_i.v_i},\overline{\lbl'_i.w_i},\overline{\lbl_i''.u_i}} \semu w_0$
  and $v' =
  \vset{\overline{\lbl_i.v_i'},\overline{\lbl'_i.w_i'},\overline{\lbl_i''.u_i'}} \semu w_0$,
  where $v_i \simat{p_i \sqcup p_i'} v_i'$ and $w_i
  \simat{q_i[\vany/\vhole]} w_i'$ and $u_i \simat{r_i} u_i'$.
  Therefore, by induction, $v_i \simat{p_i} v_i'$ and $v_i
  \simat{p_i'} v_i'$, and we also have $w_i \simat{q_i} w_i'$ and $w_i
  = w_i'$, so we can conclude that
  $\vset{\overline{\lbl_i.v_i},\overline{\lbl'_i.w_i},\overline{\lbl_i''.u_i}} \semu w_0
  \simat{p} \vset{\overline{\lbl_i.v_i'},\overline{\lbl'_i.w_i'},\overline{\lbl_i''.u_i'}}\semu w_0$ and
  $\vset{\overline{\lbl_i.v_i'},\overline{\lbl'_i.w_i},\overline{\lbl''_i.u_i}}\semu w_0
  \simat{p'}\vset{\overline{\lbl_i.v_i'},\overline{\lbl'_i.w_i'},\overline{\lbl_i''.u_i'}}\semu w_0$,
  as required.

  Conversely, if we assume $v \simat{p} v'$ and $v \simat{p'} v'$,
  then we must have $v =
  \vset{\overline{\lbl_i.v_i},\overline{\lbl'_i.w_i},\overline{\lbl''_i.u_i}}\semu
  v_0$ and $v' =
  \vset{\overline{\lbl_i.v_i'},\overline{\lbl'_i.w_i'},\overline{\lbl''_i.u_i'}}
  \semu v_0'$ where $v_i \simat{p_i} v_i'$ and $v_i \simat{p_i'} v_i'$
  and $w_i \simat{q_i} w_i'$ and $u_i \simat{r_i} u_i'$.  In addition,
  we must have that $w_i = w_i'$ and $v_0 = v_0'$ since $v$ and $v'$
  must be equal at all labels not in
  $\overline{\lbl_i},\overline{\lbl_i''}$.  Thus, by induction we have
  $v_i \simat{p_i \sqcup p_i'} v_i'$ and (using Lemma~\ref{lem:sub-sim}) we can also easily show that
  $w_i \simat{q_i[\vany/\vhole]} w_i$, so we can conclude that $
  \vset{\overline{\lbl_i.v_i},\overline{\lbl'_i.w_i},\overline{\lbl_i''.u_i'}}\semu
  v_0\simat{p \sqcup p'}
  \vset{\overline{\lbl_i.v_i'},\overline{\lbl_i'.w_i},\overline{\lbl''_i.u_i}}
  \semu v_0$.
\end{enumerate}

The second part follows immediately from the definition of $p \sqleq p'$ as $p \sqcup p' = p'$.
\end{proof}

\begin{lemma}[Properties of union and restriction]\label{lem:union-restriction}
~
\begin{enumerate}
\item If $p_1 \sqleq v_1$ and $p_2 \sqleq v_2$ and $v_1 \simat{p_1}
    v_1'$ and $v_2 \simat{p_2} v_2'$ then $v_1 \semu v_2 \simat{p_1
      \semu p_2} v_1' \semu v_2'$, provided all of these disjoint unions
    are defined.
\item If $p \sqleq v_1 \semu v_2$ and $L_1 \leq \dom(v_1)$ and
    $L_2 \leq \dom(v_2)$ and $L_1,L_2$ are prefix-disjoint, then $p|_{L_1} \sqleq v_1$ and
    $p|_{L_2} \sqleq v_2$.
\end{enumerate}
\end{lemma}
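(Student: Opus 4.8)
The plan is to prove the two parts independently, each by a case analysis that unwinds the relevant definition, relying throughout on the prefix-code invariants established earlier.

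For Part 1, I would case on which clause of the disjoint-union definition for enriched patterns (\figref{nrc-pattern-union}) forms $p_1 \semu p_2$; since this union is assumed defined, exactly one clause applies, up to the symmetry $p_1\semu p_2 = p_2\semu p_1$. In each case I would first invert the hypotheses $v_1 \simat{p_1} v_1'$ and $v_2 \simat{p_2} v_2'$ using the equivalence rules of \figref{nrc-sim} and \figref{nrc-sim-extended} to expose the shapes of $v_1,v_1',v_2,v_2'$. The base clauses are immediate: for instance $\vany\semu\vany = \vany$ forces $v_1 = v_1'$ and $v_2 = v_2'$, so $v_1\semu v_2 = v_1'\semu v_2' \simat{\vany} v_1'\semu v_2'$; and $\spat\semu\emptyset$ forces the $\emptyset$-side value to be $\emptyset$. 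For the clauses combining an $\spat$-shaped pattern with $\vhole$, $\vany$, or another $\spat$-shaped pattern, inversion writes each $v_i$ as $\vset{\overline{\lbl_k.u_k}}\semu w_i$ with $u_k\simat{p_k}u_k'$ and, in the $\vany$ cases, with the residual $w_i$ occurring identically in $v_i$ and $v_i'$; then $v_1\semu v_2$ has explicitly labelled part assembled from both sides (with prefix-disjoint domains, by hypothesis) and residual $w_1\semu w_2$, which is exactly the shape demanded by $p_1\semu p_2$ (as a $\udot\vhole$, $\udot\vany$, or complete pattern) relating $v_1'\semu v_2'$ to itself. The $\sqleq$ hypotheses are needed only to guarantee that these residual decompositions exist and are prefix-disjoint where the side condition of $\semu$ on values demands it.

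For Part 2, I would case on $p$, the two conclusions ($L_1$ and $L_2$) being symmetric. If $p = \vhole$ then $p|_{L_1} = \vhole \sqleq v_1$; if $p = \vany$ then $p|_{L_1} = \vany$, and $\vany\sqleq v_1$ holds since $v_1$ is hole-free. If $p$ is a constant, pair, or record pattern, or $p = \emptyset$, then $p\sqleq v_1\semu v_2$ forces the set value $v_1\semu v_2$ into that shape, so either the case is vacuous or $v_1 = v_2 = \emptyset$ and the claim is trivial. The real work is when $p$ is a complete set pattern $\vset{\overline{\lbl_i.p_i}}$ or a partial one $\vset{\overline{\lbl_i.p_i}}\udot\vhole$ or $\vset{\overline{\lbl_i.p_i}}\udot\vany$; then inverting $p\sqleq v_1\semu v_2$ gives $v_1\semu v_2 = \vset{\overline{\lbl_i.v_i}}$ (resp.\ $\vset{\overline{\lbl_i.v_i}}\semu w$) with $p_i\sqleq v_i$, while $\dom(v_1)$ and $\dom(v_2)$ partition $\dom(v_1\semu v_2)$. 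The crux is a short combinatorial sublemma: for every $\lbl\in\dom(v_1)\cup\dom(v_2)$, $\lbl$ has a prefix in $L_1$ iff $\lbl\in\dom(v_1)$ (and symmetrically for $L_2$). The ``if'' direction is just $L_1\leq\dom(v_1)$; for ``only if'', a prefix of $\lbl$ in $L_1$ together with the prefix of $\lbl$ in $L_2$ guaranteed by $L_2\leq\dom(v_2)$ would be two comparable prefixes of $\lbl$, one in $L_1$ and one in $L_2$, contradicting prefix-disjointness. Hence restricting to $L_1$ retains exactly the $\lbl_i.p_i$ with $\lbl_i\in\dom(v_1)$, at each such $\lbl_i$ the value of $v_1$ is the corresponding $v_i$ with $p_i\sqleq v_i$, and any remaining labels of $v_1$ are absorbed by the $\vhole$ or $\vany$ tail (using $\vhole\sqleq u$ and $\vany\sqleq u$ for values $u$). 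Unwinding $\sqleq$ via $\sqcup$ then gives $p|_{L_1}\sqcup v_1 = v_1$, i.e.\ $p|_{L_1}\sqleq v_1$, and symmetrically $p|_{L_2}\sqleq v_2$.

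I expect the main obstacle in both parts to be the same: bookkeeping the prefix-code invariants through a large but shapeless case split — in Part 1 in particular, correctly threading the shared $\vany$-residuals of $v_1$ and $v_2$ through the two-sided decomposition. Beyond these definitional manipulations no genuinely new idea seems to be required.
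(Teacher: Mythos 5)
Your proposal is correct and follows essentially the same route as the paper's proof: Part~1 by case analysis on the shapes of $p_1,p_2$ (equivalently, on the applicable clause of pattern $\semu$), inverting the $\simat{p_i}$ hypotheses and reassembling the labelled parts and residuals, and Part~2 by case analysis on $p$ using the prefix-code structure. Your explicit sublemma that a label of $v_1\semu v_2$ has a prefix in $L_1$ iff it lies in $\dom(v_1)$ is exactly the point the paper states more tersely, and it is the correct place where prefix-disjointness of $L_1,L_2$ is used.
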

\begin{proof}
  For part 1, 
  assume $p_1 \sqleq v_1$, $p_2 \sqleq v_2$, $v_1 \simat{p_1}
    v_1'$ and $v_2 \simat{p_2} v_2'$, and assume that the domains of
    $p_1$ and $p_2$, $v_1$ and $v_2$, and $v_1'$ and $v_2'$ are
    prefix-disjoint respectively, so that the unions exist.  There are
    several cases.  If $p_1$ or $p_2$ is $\vhole$ then the conclusion
    is immediate.  If both are $\vany$ then $v_1 = v_1'$ and $v_2 =
    v_2'$ so $v_1 \semu v_2 = v_1' \semu v_2'$.  

Most of the remaining
    cases are straightforward; we illustrate with the case $p_1 =
    \vset{\overline{\lbl_i.p_i}} \udot \vhole$ and $p_2 =
    \vset{\overline{\lbl_i'.q_i}} \udot \vany$.  In this case, 
    \begin{eqnarray*}
     v_1 &=&    \vset{\overline{\lbl_i.v^1_i}} \semu v^1_0\\
     v_2 &=&    \vset{\overline{\lbl_i'.v^2_i}} \semu v^2_0\\
     v_1' &=&    \vset{\overline{\lbl_i.w^1_i}} \semu w^1_0\\
     v_2' &=&    \vset{\overline{\lbl_i'.w^2_i}} \semu w^2_0
    \end{eqnarray*}
    and we
    also know that $v^1_i\simat{p_i} w^1_i$ and
    $v^2_i \simat{q_i} w^2_i$ for each $i$.  Therefore, 
 
   \begin{eqnarray*}
\vset{\overline{\lbl_i.v^1_i}} \semu v^1_0 \semu \vset{\overline{\lbl_i'.v^2_i}} \semu v^2_0 
 &=&  \vset{\overline{\lbl_i.v^1_i},\overline{\lbl'_i.v^2_i}} \semu v^1_0  \semu v^2_0
\\
&\simat{\vset{\overline{\lbl_i.p_i},\overline{\lbl_i'.q_i}} \udot \vhole} &  \vset{\overline{\lbl_i.w^1_i},\overline{\lbl'_i.w^2_i}} \semu w^1_0  \semu w^2_0\\
&=&
\vset{\overline{\lbl_i.w^1_i}} \semu w^1_0 \semu \vset{\overline{\lbl_i'.w^1_i}} \semu w^2_0
    \end{eqnarray*}

    For part 2, assume $p \sqleq v_1 \semu v_2$ and $L_i \leq
    \dom(v_i)$ for $i \in \{1,2\}$.  We proceed by case analysis on
    $p$.  The cases $p = \vhole$ and $p = \vany$ are immediate since
    $\vhole|_L = \vhole$ and $\vany|_L = \vany$.  If $p =
    \vset{\overline{\lbl_i.p_i}}$ is a complete set pattern, then
    $p|_{L_1}$ selects just those elements of $p$ that have a prefix
    in $L_i$, and since every element of $v_q $ has its label's prefix
    in $L_1$ we must have that $v_1 = \vset{\overline{\lbl'_i.v'_i}} $ where
    $p_i \sqleq v_i'$ for each $i$ and $ p|_{L_1} =
    \vset{\overline{\lbl_i'.p_i'}}$, which is what we need to show.
    The cases for $p = \{\overline{\lbl_i.p_i}\} \udot \vhole$ and $p
    = \{\overline{\lbl_i.p_i}\} \udot \vany$ are similar.

A symmetric argument suffices to show $p|_{L_2} \sqleq v_2$.
\end{proof}

\begin{lemma}[Projection and $\sqleq$]\label{lem:projection-sub}
~
  \begin{enumerate}
  \item If $p \sqleq \set{\epsilon.v}$ then $p.\epsilon \sqleq v$.
  \item If $p \sqleq 1\cdot v_1 \semu 2 \cdot v_2$ then $p[1] \sqleq
    v_1$ and $p[2] \sqleq v_2$.
  \item If $p \sqleq \lbl\cdot v $ then $p[\lbl] \sqleq v$.
  \item If $p \sqleq \vrec{\overline{A_i:v_i}}$ then $p.A_i \sqleq
    v_i$.
 \end{enumerate}
\end{lemma}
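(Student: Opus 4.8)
The plan is to prove all four parts together by a routine case analysis on the shape of the pattern $p$, using the characterization of $\sqleq$ from Lemma~\ref{lem:lub-sim}: $p \sqleq p'$ holds iff $p \sqcup p' = p'$, and equivalently (taking $p' = v$) a value $v$ matches $p$ iff $p \sqcup v = v$. In each part, the hypothesis is that $p$ matches a value of a specific shape --- a singleton set $\set{\epsilon.v}$, a ``tagged union'' $1\cdot v_1 \semu 2\cdot v_2$, a prefixed set $\lbl\cdot v$, or a record $\vrec{\overline{A_i{:}v_i}}$ --- and we must show that the corresponding projection operation ($p.\epsilon$, $p[1]$ and $p[2]$, $p[\lbl]$, or $p.A_i$) yields a pattern matching the evident subcomponent. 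The uniform structure of the argument is: since $p$ matches a value of shape $C[\cdots]$, inspection of the definitions of $\sqcup$ and the pattern-matching relation $\simat{p}$ forces $p$ itself to have a compatible top-level shape (it is either $\vhole$, or $\vany$, or literally of the corresponding constructor form, possibly a partial set/record pattern), and in each such case the projection operation is defined precisely so as to ``peel off'' one layer of the constructor.

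I would organize the proof as four short blocks, one per part, each by case analysis on $p$. For part~1: if $p = \vhole$ or $p = \vany$ then $p.\epsilon = p$ and the result is immediate ($\vhole \sqleq v$ always, and $\vany \sqleq v$ since $\vany$ matches every value and $\vany.\epsilon = \vany$); otherwise, since $p$ matches a singleton $\set{\epsilon.v}$, $p$ must be of the form $\set{\epsilon.p'}$ or $\set{\epsilon.p'}\udot\vhole$ or $\set{\epsilon.p'}\udot\vany$, and in all three cases $p.\epsilon = p'$ with $p' \sqleq v$ following from the definition of matching for the respective set-pattern form (note $\set{\epsilon.p'}\udot\vany$ still forces the ``rest'' to be empty when the matched value is exactly the singleton $\set{\epsilon.v}$). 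Part~3 is the same argument with the prefix $\lbl$ in place of $\epsilon$, appealing to the definitions of $\lbl\cdot(-)$ and $(-)[\lbl]$ extended to enriched patterns; here one also uses that $\dom(p[\lbl])$ is a prefix code when $\dom(p)$ is, noted in the text. Part~2 is a minor variant: the value $1\cdot v_1 \semu 2\cdot v_2$ is a disjoint union of a set whose labels all start with $1$ and a set whose labels all start with $2$; since $p$ matches it, $p$ is $\vhole$, $\vany$, a complete set pattern, or a partial set pattern, and in each case $p[1]$ (resp.\ $p[2]$) extracts exactly the subset of $p$ with labels beginning with $1$ (resp.\ $2$), truncating that prefix --- so the result is a straightforward consequence of part~3 applied twice together with Lemma~\ref{lem:union-restriction}(2) to split the partial-pattern ``tail''. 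Part~4 is analogous to part~1/3 but for records: $p$ is $\vhole$, $\vany$, a complete record pattern $\vrec{\overline{A_j{:}p_j}}$, or a partial record pattern $\vrec{\overline{A_j{:}p_j};\vhole}$ or $\vrec{\overline{A_j{:}p_j};\vany}$; in the complete case matching forces $\set{A_j} = \set{A_i}$ and $p_j \sqleq v_j$, so $p.A_i = p_i \sqleq v_i$; in the partial cases $p.A_i$ is either the listed $p_j$ (when $A_i$ is among the named fields) or $\vhole$/$\vany$ (otherwise), and in each subcase $p.A_i \sqleq v_i$ holds directly.

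The main obstacle, such as it is, is purely bookkeeping: the enriched pattern language has many syntactic forms ($\vhole$, $\vany$, complete and two flavours of partial set/record patterns), and for each part one must check that the relevant projection operation is total on exactly the patterns that can match a value of the given shape, and that it commutes correctly with $\sqsubseteq$ in each of these forms --- including the slightly delicate normalization cases where, e.g., $\emptyset\udot\vhole$ is identified with $\vhole$. I expect no conceptual difficulty: the projection operations were defined in this section precisely to be the ``inverse'' of the corresponding value constructors, so once the case split on $p$ is made, each subcase is discharged by unfolding one definition. I would present the argument at the level of ``by cases on $p$; the cases $p\in\set{\vhole,\vany}$ are immediate, and otherwise $p$ has the corresponding constructor shape and the result follows by definition of the projection and of $\simat{p}$,'' spelling out one representative partial-pattern subcase in part~2 or part~4 to illustrate the use of Lemma~\ref{lem:union-restriction}(2) and Lemma~\ref{lem:lub-sim}.
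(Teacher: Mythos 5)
Your proposal is correct and follows essentially the same route as the paper's proof: a case analysis on the shape of $p$ (hole, $\vany$, complete, or partial set/record pattern), discharging each case by unfolding the definitions of the projection operations and of matching. The only minor deviation is that you derive part~2 from part~3 plus Lemma~\ref{lem:union-restriction}(2) (which works, since $(p|_{\{1\}})[1] = p[1]$), whereas the paper handles part~2 by the same direct case analysis; the substance is identical.
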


\begin{proof}
  For part (1), suppose $p \sqleq \vset{\epsilon.v}$.  We proceed by case analysis on $p$.  The cases for $\vhole$ and $\vany$ are trivial.  If $p = \{\epsilon.p'\}$ then $p.\epsilon = p'$ so the conclusion follows.  The cases for $p = \{\epsilon.p'\} \udot \vhole$ or $p = \{\epsilon.p'\} \udot \vany$ are similar.

For part (2), suppose $p \sqleq 1\cdot v_1 \semu 2\cdot v_2$.  Suppose  $v_1 = \{\overline{\lbl_i.v_i}\}$ and $v_2 = \{\overline{\lbl_i'.v_i'}\}$.  If $p$ is a complete set pattern, then it must be of the form $ \{\overline{1.\lbl_i.p_i},  \overline{2.\lbl_i'.q_i}\}$  , where $p_i \sqleq v_i$ and $q_i \sqleq v_i'$.  The desired conclusion follows since $p[1 ] = \{ \overline{\lbl_i.p_i}\}$, and a symmetric argument shows that $p[2] \sqleq v_2$.  The cases for partial set patterns are similar, since the $\vhole$ or $\vany$ is preserved by the projection operation.

For part (3), suppose $p \sqleq \lbl\cdot v$.  The proof is analogous
to the previous case.

For part (4), suppose $p \sqleq \vrec{\overline{A_i:v_i}}$.  If $p = \vany $ or $\vhole$, the result is immediate; otherwise, $p$ is a record pattern.  If it is a total record pattern $\vrec{\overline{A_i:p_i}}$ then clearly $p.A_i = p_i \sqleq v_i$.  Otherwise, it is a partial pattern, in which case either $p.A_i$ is a pattern $p_i$ mentioned in $p$, in which case we are done, or $p.A_i = \vany$ or $p.A_i = \vhole$, and the conclusion follows immediately.
\end{proof}

\begin{lemma}[Projection and $\simat{p}$]\label{lem:projection-sim}
~
 \begin{enumerate}
  \item If $p \sqleq \{\epsilon.v\}$ and $v \simat{p.\epsilon} v'$ then $\set{\epsilon.v} \simat{p}
    \set{\epsilon.v'}$.  
  \item If $p \sqleq 1 \cdot v_1 \semu 2 \cdot v_2$ and $v_1 \simat{p[1]}
    v_1'$ and $v_2 \simat{p[2]} v_2'$ then $1 \cdot v_1 \semu 2 \cdot
    v_2 \simat{p} 1 \cdot v_1' \semu 2 \cdot v_2'$.
  \item If $p \sqleq \lbl \cdot v$ and $v \simat{p[\lbl]} v'$ then
    $\lbl\cdot v \simat{p} \lbl \cdot v'$.
\item If $p \sqleq \vrec{\overline{A_i:v_i}}$ and $v_1\simat{p.A_1}
    v_1', \ldots, v_n \simat{p.A_n} v_n'$ then $\vrec{\overline{A_i:v_i}} \simat{p} \vrec{\overline{A_i:v_i'}}$.
  \end{enumerate}
\end{lemma}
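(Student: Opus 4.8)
The plan is to prove all four parts by case analysis on the shape of the pattern $p$, in each case unfolding the relevant projection operation ($p.\epsilon$, $p[1]$, $p[\lbl]$, or $p.A_i$) and then applying the matching introduction rule for $\simat{(-)}$ from Figures~\ref{fig:nrc-sim} and~\ref{fig:nrc-sim-extended}. The structure mirrors the proof of Lemma~\ref{lem:projection-sub} (Projection and $\sqleq$), which I would invoke whenever a hypothesis $p \sqleq \ldots$ is matched against a structured pattern, to conclude that the corresponding subpatterns are themselves below the corresponding subvalues — this is what licenses the application of the equivalence rules.

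In every part, the cases $p = \vhole$ and $p = \vany$ are immediate. If $p = \vhole$, then $\simat{\vhole}$ relates any two values. If $p = \vany$, then the projection of $\vany$ is again $\vany$ by the defining equations for $p.\epsilon$, $p[\lbl]$, and $p.A$, so the hypothesis forces $v = v'$ (and $v_i = v_i'$ in part (4)), whence the prepended or reassembled values coincide and $\simat{\vany}$ applies again. For the structured cases: in part (1), $p$ must be $\{\epsilon.p'\}$, $\{\epsilon.p'\}\udot\vhole$, or $\{\epsilon.p'\}\udot\vany$, and in each subcase $p.\epsilon = p'$, so $v \simat{p'} v'$ supplies exactly the premise of the relevant set rule on $\set{\epsilon.v}$ and $\set{\epsilon.v'}$ (in the $\udot\vany$ subcase the leftover parts of the two sets are both empty, hence trivially equal). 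Part (4) is analogous via the record rules: $p$ is either a total record pattern $\vrec{\overline{A_i:p_i}}$, where $p.A_i = p_i$ and the premises $v_i \simat{p_i} v_i'$ are exactly what the rule requires, or a partial record pattern $\vrec{\overline{A_j:p_j};\vhole}$ or $\vrec{\overline{A_j:p_j};\vany}$, where projection onto a named field yields the corresponding $p_j$ and projection onto any unnamed field yields $\vhole$ or $\vany$; since $p \sqleq \vrec{\overline{A_i:v_i}}$, the unnamed part does not constrain the listed fields, and for the $\vany$ variant one additionally uses that the unlisted record fields must be identical on both sides.

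Part (3) is the technical heart, and part (2) will reduce to it. For part (3): since $p \sqleq \lbl\cdot v$, every label occurring in $p$ (when $p$ is a set pattern) begins with $\lbl$, so $\lbl\cdot(p[\lbl]) = p$ up to the normalizations built into the definitions of $\lbl\cdot(-)$ and $(-)[\lbl]$ on enriched patterns (the $\emptyset\udot\vhole \rightsquigarrow \vhole$ and $\emptyset\udot\vany \rightsquigarrow \vany$ collapses). Writing $p[\lbl]$ as $\set{\overline{\lbl_i.p_i}}$, possibly with a $\udot\vhole$ or $\udot\vany$ tail, the hypothesis $v \simat{p[\lbl]} v'$ exhibits $v$ and $v'$ with matching elements at the $\lbl_i$ and, in the $\udot\vany$ case, a common tail $v_0$; prepending $\lbl$ to every label, using that $\lbl\cdot(-)$ distributes over $\semu$ and preserves the common tail, and reassembling yields $\lbl\cdot v \simat{p} \lbl\cdot v'$ by the same $\simat{}$ rule. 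For part (2): I would first apply Lemma~\ref{lem:union-restriction}(2) to obtain $p|_{\{1\}} \sqleq 1\cdot v_1$ and $p|_{\{2\}} \sqleq 2\cdot v_2$, note that $p|_{\{1\}}\semu p|_{\{2\}} = p$ and that $(p|_{\{i\}})[i] = p[i]$, apply part (3) to get $i\cdot v_i \simat{p|_{\{i\}}} i\cdot v_i'$ for $i \in \{1,2\}$, and finally combine these using Lemma~\ref{lem:union-restriction}(1).

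I expect the main obstacle to be the bookkeeping in part (3) (and hence in part (2)) around the partial-set patterns: checking that the enriched-pattern operations $\lbl\cdot(-)$, $(-)[\lbl]$, and $(-)|_L$ compose correctly — in particular that the $\udot\vhole$ and $\udot\vany$ tails survive the round trip and line up as genuinely equal tails where the $\vany$-rule demands it — rather than any conceptual difficulty. Everything else is routine inversion on $\simat{}$ followed by re-application of the same rule.
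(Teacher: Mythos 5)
Your proposal is correct and follows essentially the same argument as the paper: case analysis on the shape of $p$ (with $\vhole$ and $\vany$ immediate), unfolding the projection operation so that the hypothesis supplies exactly the premises of the corresponding $\simat{p}$ rule, and handling the $\udot\,\vhole$ / $\udot\,\vany$ tails by noting they are preserved by prepending and restriction. The only organizational difference is that you derive part (2) from part (3) together with Lemma~\ref{lem:union-restriction}, whereas the paper proves (2) directly (splitting a pattern below $1\cdot v_1 \semu 2\cdot v_2$ into its $1$- and $2$-prefixed parts and recombining) and declares (3) analogous; both routes rest on the same two facts---prepending a label preserves pattern equivalence, and disjoint unions are combined via Lemma~\ref{lem:union-restriction}(1)---so nothing substantive changes.
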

\begin{proof}
  For part (1), suppose $p \sqleq \vset{\epsilon.v}$ and $v \simat{p.\epsilon} v'$.  If $p = \vhole$ or $\vany$ then the result is immediate (this is the case for all three parts of the lemma).  If $p = \{\epsilon.p'\}$, $p = \{\epsilon.p'\}\udot \vhole$, or $p = \{\epsilon.p'\} \udot \vany$ then $p.\epsilon = p'$ so $v \simat{p'} v'$.  We can conclude $\vset{\epsilon.v} \simat{p} \vset{\epsilon.v'}$.

For part (2), suppose $p \sqleq 1\cdot v_1 \semu 2\cdot v_2$ and $v_1 \simat{p[1]} v_1'$ and $v_2 \simat{p[2]} v_2'$.  As usual, the cases $p = \vhole$ and $p= \vany$ are trivial.  Suppose  $v_1 = \{\overline{\lbl_i.v'_i}\}$ and $v_2 = \{\overline{\lbl_i'.v_i''}\}$.  If $p$ is a complete set pattern it must be of the form $\{\overline{1.\lbl_i.p_i},  \overline{2.\lbl_i'.q_i}\}$, and $v_1 \simat{\vset{\overline{\lbl_i.p_i}}} v_1'$ and $v_2 \simat{\vset{\overline{\lbl_i'.q_i}}} v_2'$.  It is straightforward to show that $1 \cdot v_1 \simat{\{\overline{1 \cdot\lbl_i.p_i}\}} 1 \cdot v_1'$ and $2 \cdot v_2 \simat{\{\overline{2 \cdot\lbl_i'.q_i}\}} 2 \cdot v_2'$, so by previous results we have $1 \cdot v_1 \semu 2 \cdot v_2 \simat{p} 1 \cdot v_1' \semu 2 \cdot v_2'$.

The cases for partial patterns follow the same reasoning, making use of the fact that projection preserves the partial pattern.

For part (3), suppose $p \sqleq \lbl\cdot v$.  The argument is similar
to the previous case.

For part (4) suppose $p \sqleq \vrec{\overline{A_i:v_i}}$ and $v_i \eqat{p.A_i} v_i'$ for each $i$.  If $p$ is $\vhole$, $\vany$, or a complete record pattern then the conclusion is immediate.  Otherwise, if $p = \vrec{\overline{B_i:q_i}} \udot \vhole$, the conclusion is immediate since each component of the records $\vrec{\overline{A_i:v_i}}$ and $\vrec{\overline{A_i:v_i'}}$ either match the appropriate $q_i$ or need not match because of the hole.  Finally, if $p = \vrec{\overline{B_i:q_i}} \udot \vany$, the conclusion follows since each pair of corresponding components of the records $\vrec{\overline{A_i:v_i}}$ and $\vrec{\overline{A_i:v_i'}}$ either match the appropriate $q_i$ or are equal because $p.A_i = \vany$ if $A_i$ is not among the $B_j$.
\end{proof}

\section{Proof of correctness of trace slicing}
\label{app:proof-trace-slicing}

\begin{lemma}
  \label{lem:subset-replay}
If $\gamma,x \in v_1,\Theta \trrun^* v_2$ and $v_1' \subseteq v_1$
then there exists $v_2' \subseteq v_2$ such that $\gamma,x \in v_1', \Theta \trrun^* v_2'$.
\end{lemma}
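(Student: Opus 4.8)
The plan is to prove this by induction on the derivation of $\gamma, x \in v_1,\Theta \trrun^* v_2$, proceeding by cases on the last rule applied (there are three rules for $\trrun^*$ in \figref{nrc-trace-run}). If the last rule is the empty rule, then $v_1 = \emptyset$, so $v_1' \subseteq \emptyset$ forces $v_1' = \emptyset$ and we take $v_2' = \emptyset = v_2$. If the last rule is the singleton rule, then $v_1 = \set{\lbl.v}$ with $\lbl \in \dom(\Theta)$ and $\gamma[x\mapsto v],\Theta(\lbl) \trrun v''$ and $v_2 = \lbl \cdot v''$; a subset of a singleton is either $\emptyset$ (take $v_2' = \emptyset$, using the empty rule) or the whole singleton (take $v_2' = v_2$ and reuse the given derivation).

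The substantive case is the union rule, where $v_1 = w_1 \semu w_2$ and $v_2 = w_1' \semu w_2'$ with subderivations $\gamma, x \in w_1, \Theta \trrun^* w_1'$ and $\gamma, x \in w_2, \Theta \trrun^* w_2'$. Given $v_1' \subseteq v_1$, I split it along the two domains: set $w_1'' = \set{(\lbl.v) \in v_1' \mid \lbl \in \dom(w_1)}$ and $w_2'' = \set{(\lbl.v) \in v_1' \mid \lbl \in \dom(w_2)}$. Since $\dom(w_1) \cap \dom(w_2) = \emptyset$ (well-formedness of the disjoint union $w_1 \semu w_2$) and $\dom(v_1') \subseteq \dom(v_1) = \dom(w_1) \cup \dom(w_2)$, this gives $v_1' = w_1'' \semu w_2''$ with $w_1'' \subseteq w_1$ and $w_2'' \subseteq w_2$. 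Applying the induction hypothesis to each side produces $\hat w_1 \subseteq w_1'$ and $\hat w_2 \subseteq w_2'$ with $\gamma, x \in w_1'', \Theta \trrun^* \hat w_1$ and $\gamma, x \in w_2'', \Theta \trrun^* \hat w_2$. Because $\dom(w_1') \cap \dom(w_2') = \emptyset$ (well-formedness of $w_1' \semu w_2'$), the union $\hat w_1 \semu \hat w_2$ is again a well-formed disjoint union, and it is contained in $w_1' \semu w_2' = v_2$; the union rule then yields $\gamma, x \in w_1'' \semu w_2'', \Theta \trrun^* \hat w_1 \semu \hat w_2$, i.e. $\gamma, x \in v_1', \Theta \trrun^* v_2'$ with $v_2' = \hat w_1 \semu \hat w_2 \subseteq v_2$, as required.

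The one step that needs care — and the closest thing to an obstacle — is the bookkeeping in the union case: checking that the split $v_1' = w_1'' \semu w_2''$ really is a disjoint union with each part contained in the corresponding $w_i$, and that reassembling the recursively obtained $\hat w_i$ stays a well-formed disjoint union contained in $v_2$. Both facts rest only on disjointness of $\dom(w_1)$ and $\dom(w_2)$ (hence of $\dom(w_1')$ and $\dom(w_2')$), which is part of the well-formedness of the disjoint unions already present in the given derivation; no appeal to the prefix-code invariant, nor to determinacy of $\trrun^*$, is needed, since we only require the existence of some adequate $v_2'$.
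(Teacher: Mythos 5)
Your proof is correct and follows essentially the same route as the paper's: induction on the $\trrun^*$ derivation, with the only substantive work in the union case, where the paper likewise splits $v_1'$ as $w_1 \cap v_1'$ and $w_2 \cap v_1'$ (the same decomposition as your domain-based split) and reassembles the recursively obtained results. The paper's write-up is just terser; your bookkeeping about (prefix-)disjointness of the domains makes explicit what the paper leaves implicit.
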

\begin{proof}
  The proof is straightforward by induction on derivations.  The cases
  for $v_1 = \emptyset$ and $v_1 = \{\lbl.v\}$ are immediate; if $v_1
  = w_1 \semu w_2$ then we proceed by induction using the subsets
  $w_1' = w_1 \cap v_1'$ and $w_2' = w_2 \cap v_1'$.
\end{proof}

We prove correctness of the full trace slicing algorithm, with
enriched patterns, since the
correctness for simple patterns follows as a special case.
  \begin{theorem}[Correctness of Slicing]~
    \label{thm:trace-slicing-full}
    \begin{enumerate}
    \item Suppose $\gamma,T \trrun v$ and $p \sqleq v$ and $p, T
      \trbslice \rho, S$.  Then for all $\gamma' \simat{\rho} \gamma$
      and $T' \sqgeq S$ such that $\gamma',T' \trrun v'$ we have $v'
      \simat{p} v$.

    \item Suppose $\gamma,x\in v_0, \Theta \trrun^* v$ and $p \sqleq
      v$ and $p, x.\Theta_0 \trbslice^* \rho, \Theta_0', p_0$, where
      $\Theta_0 \subseteq \Theta$.  Then for all $\gamma'
      \simat{\rho}\gamma$ and $v_0' \simat{p_0} v_0$ and
      $\Theta'\sqgeq \Theta_0'$ such that $\gamma',x\in v_0',\Theta'
      \trrun^* v'$ we have $v' \simat{p} v$.
    \end{enumerate}
  \end{theorem}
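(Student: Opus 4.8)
The plan is to prove both parts simultaneously by induction on the derivation of the backward-slicing judgment ($p,T\trbslice\rho,S$ in part~1 and $p,x.\Theta_0\trbslice^*\rho,\Theta_0',p_0$ in part~2), with one case per slicing rule. The ingredients I will lean on are: well-definedness of slicing for consistent traces, which---by essentially the same induction---additionally yields $\rho\sqleq\gamma$, $S\sqleq T$, and for the auxiliary judgment $\Theta_0'\sqleq\Theta_0$ and $p_0\sqleq v_0$; determinacy of replay; the observation (a consequence of Lemma~\ref{lem:lub-sim}) that $\gamma'\simat{\rho_1\sqcup\rho_2}\gamma$ holds iff $\gamma'\simat{\rho_1}\gamma$ and $\gamma'\simat{\rho_2}\gamma$; the projection lemmas~\ref{lem:union-restriction},~\ref{lem:projection-sub}, and~\ref{lem:projection-sim}; the prefix-code invariants of \thmref{prefix-labeled}; and Lemma~\ref{lem:subset-replay}. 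A recurring structural fact is that if $S$ is not $\vhole$ and $\gamma',T'\trrun v'$ with $S\sqleq T'$, then $T'$ has the same head constructor as $S$ (there is no replay rule for $\vhole$), so $T'$ decomposes into subtraces each dominating the matching subtrace of $S$.

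For the structural rules (\textsc{SHole}, \textsc{SConst}, \textsc{SPrim}, \textsc{SVar}, \textsc{SLet}, \textsc{SPair}, \textsc{SProj}$_i$, \textsc{SRec}, \textsc{SProj}$_A$, \textsc{SIf}, \textsc{SEmpty}, \textsc{SSng}, \textsc{SUnion}, \textsc{SSum}, \textsc{SEmptyP}, \textsc{SDiamond}) the argument is uniform: the projection-and-$\sqleq$ lemma shows that the sub-patterns handed to the recursive calls ($p.\epsilon$, $p[1]$, $p[2]$, $p.A_i$, the recorded branch value $b$, or $\vany$) are $\sqleq$ the corresponding sub-values of the original replay; one invokes the induction hypothesis on each immediate subderivation after splitting $\gamma'\simat{\rho}\gamma$ componentwise and decomposing $T'$; and the resulting $v_i'\simat{p_i}v_i$ facts are recombined into $v'\simat{p}v$ by the projection-and-$\simat{p}$ lemma. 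Two remarks: whenever a recursive call uses the identity pattern $\vany$ (\textsc{SPrim}, \textsc{SSum}, \textsc{SEmptyP}, \textsc{SDiamond}, and the test of \textsc{SIf}), the induction hypothesis actually forces the replayed subvalue to be \emph{equal} to the original, so the reconstructed value is literally unchanged---and for \textsc{SIf} this is exactly what guarantees the recorded branch is re-taken on replay; and in \textsc{SLet} one must first recover $p_1\sqleq v_1$ from well-definedness of the subderivation $p_2,T_2\trbslice\rho_2[x\mapsto p_1],S_2$ against $\gamma[x\mapsto v_1],T_2\trrun v_2$ before the induction hypothesis on $T_1$ becomes applicable.

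The real work is \textsc{SComp} and the auxiliary judgment. In \textsc{SComp}, from $\gamma,\trcomp{e}{x}{T}{\Theta}\trrun v$ we get $\gamma,T\trrun v_0$ and $\gamma,x\in v_0,\Theta\trrun^* v$; well-definedness of $p,x.\Theta\trbslice^*\rho',\Theta',p'$ gives $p'\sqleq v_0$, and then part~2's induction hypothesis (instantiated with $\Theta_0=\Theta$) is fed the match $v_0'\simat{p'}v_0$ obtained from part~1's induction hypothesis on $T$. For the auxiliary judgment, the base rules \textsc{SHole}$^*$, \textsc{SEmpty}$^*$, \textsc{SDiamond}$^*$ are immediate (in the latter two the relevant sets are empty; in the first the output pattern is $\vhole$, so the conclusion is vacuously about $\simat{\vhole}$), and \textsc{SSng}$^*$ reduces to a single part~1 call on $\Theta(\lbl)$ once one observes---from the shape $p_0=\{\lbl.p_0'\}$ of the returned pattern, which forces $v_0$ to be the singleton $\{\lbl.w\}$---that only the $\lbl$-indexed element is in play, using Lemma~\ref{lem:projection-sub} and Lemma~\ref{lem:projection-sim} to pass between $p$ and $p[\lbl]$.

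The crux---and the step I expect to be the main obstacle---is \textsc{SUnion}$^*$. From $\Theta_0=\Theta_1\semu\Theta_2$ I must split the iterated multiset along $\dom(\Theta_1)$ and $\dom(\Theta_2)$, split the result as $v=v_1\semu v_2$ accordingly, and keep everything aligned with the pattern split $p_0=p_1\semu p_2$ and with any $\Theta'\sqgeq\Theta_1'\semu\Theta_2'$. The delicate part is the prefix-code bookkeeping: I will carry along the invariant $\dom(v_0)\subseteq\dom(\Theta_0)$ (which holds at the \textsc{SComp} entry point and is preserved by every auxiliary rule), use \thmref{prefix-labeled} to know that $\dom(\Theta_1),\dom(\Theta_2)$ are prefix-disjoint and that $\dom(\Theta_i)\leq\dom(v_i)$, so Lemma~\ref{lem:union-restriction}(2) gives $p|_{\dom(\Theta_i)}\sqleq v_i$ (licensing the recursive induction-hypothesis calls), invoke Lemma~\ref{lem:subset-replay} together with determinacy of $\trrun^*$ to obtain the two matching decompositions of both replay derivations, and finally use the identity $p=p|_{\dom(\Theta_1)}\semu p|_{\dom(\Theta_2)}$ (valid because every label occurring in $p$ has a prefix in $\dom(\Theta_0)$, with $\vhole$ and $\vany$ as fixed points) with Lemma~\ref{lem:union-restriction}(1) to recombine $v_1\simat{p|_{\dom(\Theta_1)}}v_1'$ and $v_2\simat{p|_{\dom(\Theta_2)}}v_2'$ into $v\simat{p}v'$. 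All the non-comprehension cases are routine once the projection lemmas are in hand; essentially the whole difficulty is concentrated in keeping multisets, trace sets, and patterns decomposed consistently across \textsc{SUnion}$^*$ and \textsc{SComp}.
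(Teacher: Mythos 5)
Your proposal is correct and follows essentially the same route as the paper's proof: induction on the slicing derivation, the projection/restriction lemmas to push patterns through the structural cases, and, for \textsc{SComp} and \textsc{SUnion}$^*$, the same combination of Lemma~\ref{lem:subset-replay}, determinacy of replay, and the identity $p = p|_{\dom(\Theta_1)}\semu p|_{\dom(\Theta_2)}$ to decompose and recombine the multisets, trace sets, and patterns consistently. The minor differences (carrying the domain invariant explicitly, recovering $p_1\sqleq v_1$ via well-definedness in \textsc{SLet}) are just bookkeeping the paper leaves implicit.
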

\begin{proof}
  For part (1), the proof is by induction on the structure of slicing derivations, using inversion to extract information from other derivations.  The cases for variables, constants, primitive operations, and let-binding are exactly as in previous work~\cite{acar13jcs}.  The cases for conditionals are similar to the those for variant types and case constructs in previous work.

We show the cases for records and set operations, which are new to
this paper (records are handled similarly to pairs in our
previous work, so the cases for pairs are omitted).
  \begin{itemize}
\item Field projection.  If the last step in the slicing derivation is 
\[\inferrule*{
\vrec{A_i{:}p;\vhole},T \trbslice \rho,S
}{
p,\trfield{T}{A_i} \trbslice \rho,\trfield{S}{A_i}
}
\]
then the evaluation derivation must be of the form
\[
\inferrule*{
\gamma,T \trrun \kwrec{A_1{:}v_1,\ldots,A_n{:}v_n}
}{
\gamma,\kwfield{T}{A_i} \trrun v_i
}
\]
Let $\gamma' \simat{\rho} \gamma$ and $T' \sqgeq\trfield{S}{A_i}$ be
given, where $\gamma',T' \trrun v'$.  Then $T'$ must have the form
$\trfield{T''}{A_i}$ for some $T'' \sqgeq S$ so the replay derivation is of the form
\[
\inferrule*{
\gamma',T'' \trrun \kwrec{A_1{:}v_1',\ldots,A_n{:}v_n'}
}{
\gamma',\kwfield{T''}{A_i} \trrun v_i'
}
\]
The induction hypothesis applies since it is easy to show that
$\vrec{A_i:p;\vhole} \sqleq  \kwrec{A_1{:}v_1,\ldots,A_n{:}v_n}$.
Therefore 
\[
\kwrec{A_1{:}v_1',\ldots,A_n{:}v_n'}
\simat{\vrec{A_i:p;\vhole}}
\kwrec{A_1{:}v_1,\ldots,A_n{:}v_n}
\;.\]
From this it is obvious that $v_i' \simat{p} v_i$.
\item Record. If the last step in the slicing derivation is 
\[\inferrule*
{
p.A_1,T_1 \trbslice \rho_1,S_1\\
\cdots\\
p.A_n,T_n \trbslice \rho_n,S_n
}
{
p, \trrec{A_1{:}T_1,\ldots,A_n{:}T_n}
\trbslice \rho_1 \sqcup \cdots \sqcup \rho_n,
\kwrec{A_1{:}S_1,\ldots,A_n{:}S_n}
}\]
then the evaluation derivation must be of the form
\[
\inferrule*{
\gamma,T_1 \trrun v_1\\
\cdots\\
\gamma,T_n \trrun v_n
}{
\gamma,\kwrec{A_1{:}T_1,\ldots,A_n{:}T_n} \trrun \kwrec{A_1{:}v_1,\ldots,A_n{:}v_n}
}
\]
Let $\gamma' \simat{\rho_1 \sqcup \cdots \sqcup \rho_n} \gamma$ and
$T' \sqgeq\kwrec{A_1{:}S_1,\ldots,A_n{:}S_n}$ be given, where
$\gamma',T' \trrun v'$.  Then $T'$ must have the form $
\trrec{A_1{:}T_1',\ldots,A_n{:}T_n'}$, where $T_i' \sqgeq S_i$ for
each $i$, so the replay derivation is of the form
\[
\inferrule*{
\gamma',T_1' \trrun v_1'\\
\cdots\\
\gamma',T_n' \trrun v_n'
}{
\gamma',\kwrec{A_1{:}T_1',\ldots,A_n{:}T'_n} \trrun \kwrec{A_1{:}v_1',\ldots,A_n{:}v_n'}
}
\]
By Lemma~\ref{lem:projection-sub} we know $p.A_i \sqleq v_i$ for each $i$, and $\gamma'\simat{\rho_i} \gamma$ for each $i$, so by induction $v_i' \simat{p.A_i} v_i$ for each $i$.  Using Lemma~\ref{lem:projection-sim} we can conclude that $\kwrec{A_1{:}v_1',\ldots,A_n{:}v_n'} \simat{p} \kwrec{A_1{:}v_1,\ldots,A_n{:}v_n}$.
  \item Empty set.  This case is trivial, similar to the usual case for constants.
  \item Singleton.  This case follows immediately from the relevant properties of $p.\epsilon$, using similar reasoning to the record projection case.
  \item Union.  If the last step in the slicing derivation is
\[
\inferrule*
{
p[1], T_1 \trbslice \rho_1, S_1 \\
p[2], T_2 \trbslice \rho_2, S_2 
}
{
 p, \trsetu{T_1}{T_2} \trbslice \rho_1 \sqcup \rho_2 , \trsetu{S_1}{S_2}
}\]
then the evaluation derivation must be of the form
\[
\inferrule*
{
\gamma, T_1 \trrun v_1
\\
\gamma, T_2 \trrun v_2
\\
}
{
\gamma,\trsetu{T_1}{T_2} \trrun 1 \cdot v_1 \semu 2 \cdot v_2
}
\]
Let $\gamma' \simat{\rho_1 \sqcup \rho_2} \gamma$ and $T'
\sqgeq\trsetu{T_1}{T_2}$ be given, where $\gamma',T' \trrun v'$.  Then
$T'$ must have the form $\trsetu{T_1'}{T_2'} $ where $T_i' \sqgeq S_i$ so the replay
derivation is of the form
\[
\inferrule*
{
\gamma', T_1' \trrun v_1'
\\
\gamma', T_2' \trrun v_2'
\\
}
{
\gamma',\trsetu{T_1'}{T_2'} \trrun 1 \cdot v_1' \semu 2 \cdot v_2'
}
\]
By Lemma~\ref{lem:projection-sub} we know $p[1] \sqleq v_1$ and
$p[2]\sqleq v_2$, and $\gamma' \simat{\rho_i} \gamma$, so by induction
we have $v_1' \simat{p[1]} v_1$ and $v_2' \simat{p[2]} v_2$, and
therefore by Lemma~\ref{lem:projection-sim} we can conclude $1 \cdot
v_1' \semu 2\cdot v_2'\simat{p}1 \cdot v_1 \semu 2 \cdot v_2$.

\item Sum and emptiness.  In both cases, since the slice ensures that
  the whole argument to the sum or emptiness test is
  preserved, the argument is straightforward.  For example, for
  emptiness suppose the derivation is of the form:
\[
\inferrule*{
\vany, T \trbslice \rho,S
}{
p,\trsetise{T} \trbslice \rho,\trsetise{S}
}
\]
Then there are two cases. If replay derivation is of the form
\[
\inferrule*{
\gamma,T \trrun \emptyset
}{
\gamma,\trsetise{T} \trrun \kwtrue
}
\]
Suppose $\gamma' \simat{\rho} \gamma$ and $T' \sqgeq \trsetise{S}$
with $\gamma',T' \trrun v'$. Then $T'$ is of the form $\trsetise{T''}$
with $T'' \sqgeq S$, so the replay derivation must be of the form:
\[
\inferrule*{
\gamma',T'' \trrun v''
}{
\gamma',\trsetise{T''} \trrun v'
}
\]
By induction, $v'' \simat{\vany} \emptyset$, which implies $v'' =
\emptyset$ so $v' = \kwtrue \simat{p} \kwtrue$ (since $p \sqleq
\kwtrue$).  The cases where the argument to $\kwsetise{T}$ evaluates
to a nonempty set, and for $\kwsum{T}$, are similar.

\item Comprehension.  If the derivation is of the form
\[
\inferrule*{
 p, x.\Theta \trbslice^* \rho', \Theta_0, p_0\\
p_0, T \trbslice \rho,S
}{
p,\trcomp{e}{x}{T}{\Theta} \trbslice \rho \sqcup \rho', \trcomp{e}{x}{S}{\Theta_0}
}
\]
then the replay derivation must be of the form 
\[
\inferrule*
{
\gamma, T \trrun v_0 \\
 \gamma,x\in v_0 \Theta \trrun^* v
}
{
\gamma, \trcomp{e}{x}{T}{\Theta}
\trrun
v
}
\]
Suppose $\gamma' \simat{\rho\sqcup \rho'} \gamma$ and $T' \sqgeq
\trcomp{e}{x}{S}{\Theta_0}$ with $\gamma',T' \trrun v'$.  Then $T'$
must be of the form $\trcomp{e}{x}{T''}{\Theta'}$ for some $T'' \sqgeq
S$ and $\Theta' \sqgeq \Theta_0$ so the replay derivation must be of the form
\[
\inferrule*
{
\gamma', T'' \trrun v_0' \\
 \gamma',x\in v_0', \Theta' \trrun^* v'
}
{
\gamma', \trcomp{e}{x}{T''}{\Theta'}
\trrun
v'
}
\]
Since $p_0 \sqleq v_0$ and $\gamma' \simat{\rho}\gamma$ and $\gamma'
\simat{\rho'} \gamma$ we have $v_0' \simat{p_0} v_0$ by induction.
Thus, by the second induction hypothesis, since $p \sqleq v$ and
$\gamma' \simat{\rho'} \gamma$ and $v_0' \simat{p_0} v_0$ we have $v' \simat{p}
v$. 
  \end{itemize}

For part (2), the proof is again by induction on the structure of derivations.
\begin{itemize}
\item If the slicing derivation is of the form
\[
\inferrule*
{
\strut
}
{
\emptyset, x.\emptyset \trbslice^* [], \emptyset, \emptyset
}
\]
 then the conclusion is immediate, since rerunning an empty trace set always yields the empty set.
\item If the slicing derivation is of the form
\[
\inferrule*
{
\strut
}
{
\vany, x.\emptyset \trbslice^* [], \emptyset, \emptyset
}
\]
 then the conclusion is immediate as before, since rerunning an empty trace set always yields the empty set. 
\item If the slicing derivation is of the form
\[
\inferrule*
{
\strut
}
{
\vhole, x.\Theta \trbslice^* [], \emptyset, \vhole
}
\]
 then the conclusion is immediate, since any two values match according to $\vhole$.

\item If the slicing derivation is of the form
\[
\inferrule*
{
p[\lbl], T \trbslice \rho[x\mapsto p_0],S
}
{
p, x.\{\lbl.T\} \trbslice^* \rho, \{\lbl.S\}, \{\lbl.p_0\}
}
\]
then observe that $\Theta \supseteq \{\lbl.T\}$ by assumption, so
$\Theta(\lbl) = T$.
 So, the replay derivation must have the form 
\[\inferrule*
{
 \lbl \in \dom(\Theta) \\
 \gamma[x\mapsto v_0],T \trrun v 
}
{
 \gamma,x\in \set{\lbl.v_0}, \Theta \trrun^* \lbl \cdot v
}
\]
Now suppose that $\gamma' \simat{\rho} \gamma$ and $v_0'
\simat{\set{\lbl.p_0}} \set{\lbl.v_0}$ and $\Theta' \sqgeq \{\lbl.S\}$
are given where $\gamma',x\in v_0',\Theta' \trrun^* v'$.  It follows
that $v_0' = \{\lbl.v_0''\}$ and $v_0'' \simat{p_0} v_0$, so
$\gamma'[x\mapsto v_0''] \simat{\rho[x\mapsto p_0]} \gamma[x\mapsto
v_0]$.  Moreover, by inversion the derivation must have the form:
\[\inferrule*
{
 \lbl \in \dom(\Theta') \\
 \gamma'[x\mapsto v_0''],\Theta'(\lbl) \trrun v'
}
{
 \gamma',x\in \set{\lbl.v_0''}, \Theta' \trrun^* \lbl \cdot v'
}
\]
Since by Lemma~\ref{lem:projection-sub} $p[\lbl] \sqleq v$ and $\Theta'(\lbl) \sqgeq S$ (which holds because $\Theta' \sqgeq
\{\lbl.S\}$), we have by induction that $v' \simat{p[\lbl]} v$, and using Lemma~\ref{lem:projection-sim} we can conclude that $\lbl \cdot v'  \simat{p} \lbl \cdot v$, as desired.
\item Suppose the slicing derivation is of the form:
\[
\inferrule*
{
p|_{\dom(\Theta_1)}, x.\Theta_1 \trbslice^* \rho_1,\Theta_1',p_1\\
p|_{\dom(\Theta_2)}, x.\Theta_2 \trbslice^* \rho_2,\Theta_2',p_2
}
{
p, x.\Theta_1 \semu \Theta_2 \trbslice^* \rho_1 \sqcup \rho_2,
\Theta_1' \semu \Theta_2', p_1 \semu p_2
}
\]
and suppose that $\gamma, x \in v_0, \Theta \trrun^* v$ where $\Theta
\supseteq \Theta_1 \semu \Theta_2$.
Suppose that $\gamma' \simat{\rho_1 \sqcup \rho_2} \gamma$ and $v_0'
\simat{p_1 \semu p_2} v_0$ and $\Theta' \sqgeq \Theta_1' \semu
\Theta_2'$ are given, where $\gamma', x \in v_0', \Theta' \trrun^*
v'$.  We need to show that $v' \simat{p} v$.  

Since $v_0' \simat{p_1 \semu p_2} v_0$, it is straightforward to show
that there must exist $v_1, v_2, v_1', v_2'$ such that $v_1 \semu v_2 =
v_0$, $v_1' \semu v_2' = v_0'$, $v_1 \simat{p_1} v_1$ and $v_2'
\simat{p_2} v_2$.  Furthermore, by Lemma~\ref{lem:subset-replay} we
know that $\gamma,x \in v_i, \Theta \trrun^* w_i$ and $\gamma',x \in
v_i', \Theta' \trrun^* w_i'$ for some $w_1,w_2,w_1',w_2'$.  Therefore,
we can conclude that:

\[
\inferrule*
{
\gamma,x\in v_1,\Theta \trrun^* w_1\\
\gamma,x\in v_2,\Theta \trrun^* w_2\\
}
{
\gamma,x\in v_1\semu v_2,\Theta \trrun^* w_1\semu w_2\\
}
\qquad
\inferrule*
{
\gamma',x\in v_1',\Theta' \trrun^* w_1'\\
\gamma',x\in v_2',\Theta' \trrun^* w_2'\\
}
{
\gamma',x\in v_1'\semu v_2',\Theta' \trrun^* w_1'\semu w_2'\\
}
\]
Furthermore, since $v_1 \semu v_2 = v_0$, by determinacy we know that
$w_1 \semu w_2 = v$ and similarly $w_1' \semu w_2 = v'$.  By induction
since $p_i \sqleq v_i$ and $\Theta_i \subseteq \Theta$,
we know that $w_i' \simat{p|_{\dom(\Theta_i)}} w_i$, so we know that
\[v' = w_1' \semu w_2' \simat{p|_{\dom(\Theta_i)} \semu
  p|_{\dom(\Theta_2)}} w_1 \semu w_2 = v\;.\]

To conclude, since $p \sqleq w_1 \semu w_2$ and $\dom(\Theta_1) \leq
w_1$ and $\dom(\Theta_2) \leq w_2$, it follows that $p =
p|_{\dom(\Theta_1)}\semu p|_{\dom(\Theta_2)}$, so we can conclude $w_1
\semu w_2 \simat{p} w_1' \semu w_2'$ as desired.
\end{itemize}
This exhausts all cases and completes the proof.
\end{proof}

\section{Proof of correctness of query slicing}
\label{app:proof-query-slicing}

\begin{lemma}
  \label{lem:subset-eval}
If $\gamma,x \in v_1,e\red^* v_2$ and $v_1' \subseteq v_1$
then there exists $v_2' \subseteq v_2$ such that $\gamma,x \in v_1',e \red^* v_2'$.
\end{lemma}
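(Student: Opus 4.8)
The plan is to prove this by a straightforward induction on the derivation of $\gamma,x\in v_1,e\red^* v_2$, following exactly the pattern of the proof of Lemma~\ref{lem:subset-replay} for trace replay. The three rules defining the $\red^*$ judgment are indexed by the shape of the iteration set, so the induction naturally splits into three cases, and in each case we inspect the last rule applied and build the required derivation for $v_1'$.

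In the base cases, suppose first that $v_1 = \emptyset$, so by inversion $v_2 = \emptyset$; since $v_1' \subseteq \emptyset$ forces $v_1' = \emptyset$, we take $v_2' = \emptyset$ and apply the empty rule. Next suppose $v_1 = \{\lbl.v\}$; by inversion $v_2 = \lbl\cdot v'$ where $\gamma[x\mapsto v],e \red v',T$ for some $v',T$. A subset $v_1' \subseteq \{\lbl.v\}$ is either $\emptyset$ --- in which case $v_2' = \emptyset$ works --- or $\{\lbl.v\}$ itself, in which case we reuse the original derivation to obtain $v_2' = v_2$.

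For the inductive case, suppose $v_1 = w_1 \semu w_2$ with $\gamma,x\in w_1, e\red^* w_1'$ and $\gamma,x\in w_2,e\red^* w_2'$ and $v_2 = w_1' \semu w_2'$. Since $\dom(w_1)$ and $\dom(w_2)$ are disjoint and together exhaust $\dom(v_1)$, we may define $u_i = \{\lbl.v \in v_1' \mid \lbl \in \dom(w_i)\}$ for $i \in \{1,2\}$; then $u_i \subseteq w_i$, the domains of $u_1$ and $u_2$ are disjoint, and $u_1 \semu u_2 = v_1'$. Applying the induction hypothesis to each subderivation yields $u_i' \subseteq w_i'$ with $\gamma,x\in u_i, e\red^* u_i'$, and then the union rule gives $\gamma,x\in v_1', e \red^* u_1' \semu u_2'$ with $u_1' \semu u_2' \subseteq w_1' \semu w_2' = v_2$, as required.

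There is no real obstacle here: the argument is entirely routine and parallels Lemma~\ref{lem:subset-replay} line for line. The only mild care needed is in the union case, where we must check that intersecting $v_1'$ with the two label domains yields a genuine decomposition of $v_1'$ into subsets of $w_1$ and $w_2$ --- but this is immediate, since collections are identified with finite partial functions from labels to values and $v_1' \subseteq v_1$ means every labeled element of $v_1'$ occurs in exactly one of $w_1,w_2$.
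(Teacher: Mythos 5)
Your proof is correct and follows essentially the same route as the paper's: induction on the derivation, immediate base cases for the empty and singleton sets, and in the union case splitting $v_1'$ into $w_1 \cap v_1'$ and $w_2 \cap v_1'$ (your $u_1, u_2$) before applying the induction hypothesis and recombining with the union rule.
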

\begin{proof}
  The proof is straightforward by induction on derivations.  The cases
  for $v_1 = \emptyset$ and $v_1 = \{\lbl.v\}$ are immediate; if $v_1
  = w_1 \semu w_2$ then we proceed by induction using the subsets
  $w_1' = w_1 \cap v_1'$ and $w_2' = w_2 \cap v_1'$.
\end{proof}

We prove Theorem~\ref{thm:query-slicing} by strengthening the
induction hypothesis as follows:
\begin{theorem}[Correctness of Query Slicing]\label{thm:query-slicing-full}
  ~
\begin{enumerate}
\item 
    Suppose $\gamma,T \trrun v$ and $p \sqleq v$ and $p, T \uneval \rho, e$.  Then
    for all $\gamma' \simat{\rho} \gamma$ and $e'\sqgeq e$ such that
    $\gamma',e' \red v'$ we have $v' \simat{p} v$.

  \item Suppose $\gamma,x\in v_0, \Theta \trrun^* v$ and $p \sqleq v$ and $p,
    \Theta_0 \uneval^* \rho, e_0, p_0$, where
      $\Theta_0 \subseteq \Theta$.  Then for all $\gamma'
    \simat{\rho}\gamma$ and $v_0' \simat{p_0} v_0$ and $e_0'\sqgeq
    e_0$ such that $\gamma',x\in v_0', e_0' \red^* v'$ we have $v'
    \simat{p} v$.
  \end{enumerate}
\end{theorem}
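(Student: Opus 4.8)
The plan is to prove the strengthened statement (\thmref{query-slicing-full}) by a single simultaneous induction on the structure of the unevaluation derivations $p,T\uneval\rho,e$ and $p,\Theta_0\uneval^*\rho,e_0,p_0$, using inversion on the trace-replay derivation $\gamma,T\trrun v$ to recover the shape of $T$ and $v$, and a second inversion on the evaluation derivation $\gamma',e'\red v'$ once a sufficiently large $e'\sqgeq e$ and a compatible $\gamma'\simat{\rho}\gamma$ are fixed. The overall skeleton mirrors the proof of correctness for trace slicing (\thmref{trace-slicing-full}): the cases $\textsc{SHole}$, constants, variables, primitive operations, let-binding, records, field projection, pairs, empty set, singleton, union, sum, and emptiness test go through exactly as there, appealing to the pattern lemmas on projection and $\sqleq$, on projection and $\simat{p}$, and on union and restriction to push patterns through constructors and to reassemble the final value. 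The only genuinely new ingredient is that unevaluation does not produce a trace that is replayed but a partial \emph{expression} that is re-evaluated from scratch, so in the conditional and comprehension cases the sliced program no longer records the control-flow choice and must be shown to re-take the same choice on any $\gamma'\simat{\rho}\gamma$.

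For conditionals, take the then-rule, which produces $\kwif{e'}{e_1'}{\vhole}$ from $\trifthen{T}{e_1}{e_2}{T_1}$ with $\kwtrue,T\uneval\rho,e'$ and $p_1,T_1\uneval\rho_1,e_1'$. Fix $\gamma'\simat{\rho\sqcup\rho_1}\gamma$ and a larger expression $\kwif{e''}{e_1''}{e_2''}$ with $e''\sqgeq e'$ and $e_1''\sqgeq e_1'$; inverting its evaluation gives $\gamma',e''\red b$ for some boolean $b$, and by the induction hypothesis applied to the test subderivation $b\simat{\kwtrue}b$, which forces $b=\kwtrue$. Hence the taken branch is $e_1''$, and the induction hypothesis for $p_1,T_1\uneval\rho_1,e_1'$ yields a resulting value matching $p_1$; the discarded $\vhole$ in the else position is harmless because it is never evaluated. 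The else-rule is symmetric.

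The main obstacle is the comprehension case, and it carries the extra wrinkle that the per-element sliced bodies are merged by least upper bound. From $\trcomp{e}{x}{T}{\Theta}$ the auxiliary judgment $p,x.\Theta\uneval^*\rho',e',p_0$ yields a single body $e'=\bigsqcup_i e_i$, where $e_i$ is the unevaluation of $p[\lbl_i]$ against $\Theta(\lbl_i)$ and the $e_i$ may genuinely differ (different iterations took different branches), together with $p_0=\{\overline{\lbl_i.p_{0,i}}\}$; separately $p_0,T\uneval\rho,e_0'$. Given $\gamma'\simat{\rho\sqcup\rho'}\gamma$, a larger query $\kwcomp{e''}{x}{e_0''}$, and its evaluation, inversion gives $\gamma',e_0''\red v_0'$ and $\gamma',x\in v_0',e''\red^* v'$; from $p_0,T\uneval\rho,e_0'$ and the induction hypothesis we get $v_0'\simat{p_0}v_0$, hence $v_i'\simat{p_{0,i}}v_i$ componentwise and $\gamma'[x\mapsto v_i']\simat{\rho_i[x\mapsto p_{0,i}]}\gamma[x\mapsto v_i]$. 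Because $e''\sqgeq e'\sqgeq e_i$ for every $i$, the induction hypothesis for each $p[\lbl_i],\Theta(\lbl_i)\uneval\rho_i[x\mapsto p_{0,i}],e_i$ applies with this same $e''$, so each per-element result matches $p[\lbl_i]$; Lemma~\ref{lem:subset-eval} lets us decompose the iterated evaluation over unions of collections so that these per-element facts can be assembled, and the projection/disjoint-union pattern lemmas (the $\lbl\cdot v$ and $\semu$ cases) then give $v'\simat{p}v$. The point that makes the merge sound is exactly that the induction hypothesis of part (1) is universally quantified over \emph{all} $e''\sqgeq e_i$, so replacing each $e_i$ by their common upper bound $e'$ is legitimate; this, together with well-definedness of $\sqcup$ on partial expressions, is the only thing distinguishing query slicing from trace slicing. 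Part (2) is proved along the way by the parallel inductive cases: the $\vhole$, $\vany$, and empty base cases are immediate, the singleton case is precisely the per-element argument above, and the $\semu$ case splits $v_0'$ according to the two trace-set domains using Lemma~\ref{lem:subset-eval} and recombines with the union lemmas, mirroring $\textsc{SUnion}^*$ in the trace-slicing proof.
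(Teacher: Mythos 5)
Your proposal is correct and follows essentially the same route as the paper's proof: induction on the unevaluation derivations, with the standard cases handled as in trace slicing, the conditional case forcing the recorded branch by applying the induction hypothesis to the test (so the new test value must match the constant pattern $\kwtrue$ or $\kwfalse$), and the comprehension case discharged through the auxiliary part-(2) judgment using the subset-evaluation/replay lemmas together with the projection and disjoint-union pattern lemmas. Your explicit remark that merging per-element bodies by least upper bound is sound because the part-(1) induction hypothesis is quantified over \emph{all} expressions $\sqgeq$ the slice is exactly the point the paper uses implicitly when it applies the induction hypothesis with the same larger expression in both halves of the union case.
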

\begin{proof}
  The proof is by induction on the structure of query slicing
  derivations.  Many of the cases are essentially the same as for
  trace slicing.  The cases for conditionals are straightforward,
  since in either case the sliced trace and environment retain enough
  information to force the same branch to be taken on recomputation.
  We show the details of the cases involving conditionals and comprehensions.

For part (1), we consider a conditional and  comprehension rule:
\begin{itemize}
\item 
If the slicing derivation is of the form:
\[
\inferrule*
{
p_1,T_1 \uneval \rho_1,e_1' \\
\kwtrue,T \uneval \rho,e'
}
{
p_1,\trifthen{T}{e_1}{e_2}{T_1} \uneval \rho_1\sqcup \rho, \kwif{e'}{e_1'}{\vhole}
}
\]
then the replay derivation must be of the form:
\[
\inferrule*{
\gamma,T \trrun \kwtrue \\
\gamma,T_1 \trrun v_1
}{
\gamma,\trifthen{T}{e_1}{e_2}{T_1} \trrun v_1
}\]
Suppose $\gamma'\simat{\rho_1\sqcup \rho}\gamma$ and $e'' \sqgeq
\kwif{e'}{e_1'}{\vhole}$ are given where $\gamma',e' \red v'$.  Then
$e'' = \kwif{e_0''}{e_1''}{e_2''}$, where $e_0'' \sqgeq e'$ and $e_1''
\sqgeq e_1'$, so there are two cases for the evaluation derivation.  If it has the form
\[
\inferrule*{
\gamma',e_0' \red \kwfalse \\
\gamma',e_2'\red v_2'
}{
\gamma',\kwif{e_0'}{e_1'}{e_2'} \red v_2'
}
\]
then since $\gamma'\simat{\rho} \gamma$ and $\gamma' \simat{\rho_1} \gamma$, by induction we would have that $\kwfalse \simat{\kwtrue}\kwtrue$, which is absurd.  So this case cannot arise.

Otherwise, the derivation must have the form:
\[
\inferrule*{
\gamma',e_0' \red \kwtrue \\
\gamma',e_1' \red v_1'
}{
\gamma',\kwif{e_0'}{e_1'}{e_2'} \red v_1'
}
\]
Since $\gamma'\simat{\rho} \gamma$ and $\gamma' \simat{\rho_1} \gamma$, by induction we have that $v_1' \simat{p_1} v_1$ as desired.

\item Comprehension.  If the derivation is of the form
\[
\inferrule*{
 p, x.\Theta \uneval^* \rho', e_1', p_0\\
p_0, T \uneval \rho,e_0'
}{
p,\trcomp{e}{x}{T}{\Theta} \uneval \rho \sqcup \rho', \kwcomp{e_1'}{x}{e_0'}
}
\]
then the replay derivation must be of the form 
\[
\inferrule*
{
\gamma, T \trrun v_0 \\
 \gamma,x\in v_0, \Theta \trrun^* v
}
{
\gamma, \trcomp{e}{x}{T}{\Theta}
\trrun
v
}
\]
Suppose $\gamma' \simat{\rho\sqcup \rho'} \gamma$ and $e'' \sqgeq
\kwcomp{e_1'}{x}{e_0'}$ with $\gamma',e'' \red v'$.  Then $e''$ must be
of the form $\kwcomp{e_1''}{x}{e_0''}$ for some $e_1'' \sqgeq e_1'$ and
$e_0'' \sqgeq e_0'$, so the evaluation derivation must be of the form
\[
\inferrule*
{
\gamma', e_0'' \red v_0' \\
 \gamma',x\in v_0', e_1'' \red^* v'
}
{
\gamma', \kwcomp{e_1''}{x}{e_0''}
\red
v'
}
\]
Since $p_0 \sqleq v_0$ and $\gamma' \simat{\rho}\gamma$ and $\gamma'
\simat{\rho'} \gamma$ we have $v_0' \simat{p_0} v_0$ by induction.
Thus, by the second induction hypothesis, since $p \sqleq v$ and
$\gamma' \simat{\rho'} \gamma$ and $v_0' \simat{p_0} v_0$ we have $v \simat{p}v'$. 
\end{itemize}

For part (2), we consider the singleton and union rules:
\begin{itemize}
\item If the slicing derivation is of the form
\[
\inferrule*
{
p[\lbl], T \uneval \rho[x\mapsto p_0],e'
}
{
p, x.\{\lbl.T\} \uneval^* \rho, e', \{\lbl.p_0\}
}
\]
then recall that by assumption, $\Theta(\lbl) \supseteq \{\lbl.T\}$, so $\Theta(\lbl) = T$,
so the replay derivation must have the form 
\[\inferrule*
{
 \lbl \in \dom(\Theta) \\
 \gamma[x\mapsto v_0],T\trrun v 
}
{
 \gamma,x\in \set{\lbl.v_0}, \Theta \trrun^* \lbl \cdot v
}
\]

Now suppose that $\gamma' \simat{\rho} \gamma$ and $v_0'
\simat{\set{\lbl.p_0}} \set{\lbl.v_0}$ and $e'' \sqgeq e'$
are given where $\gamma',x\in v_0',e'' \red^* v'$.  It follows
that $v_0' = \{\lbl.v_0''\}$ and $v_0'' \simat{p_0} v_0$, so
$\gamma'[x\mapsto v_0''] \simat{\rho[x\mapsto p_0]} \gamma[x\mapsto
v_0]$.  Moreover, by inversion the derivation must have the form:
\[\inferrule*
{
 \gamma'[x\mapsto v_0''],e'' \red v'
}
{
 \gamma',x\in \set{\lbl.v_0''},e'' \red^* \lbl \cdot v'
}
\]
Since by Lemma~\ref{lem:projection-sub} $p[\lbl] \sqleq v$ we have by induction that $v' \simat{p[\lbl]} v$, and using Lemma~\ref{lem:projection-sim} we can conclude that $\lbl \cdot v'  \simat{p} \lbl \cdot v$, as desired.
\item Suppose the slicing derivation is of the form:
\[
\inferrule*
{
p|_{\dom(\Theta_1)}, x.\Theta_1 \uneval^* \rho_1,e_1',p_1\\
p|_{\dom(\Theta_2)}, x.\Theta_2 \uneval^* \rho_2,e_2',p_2
}
{
p, x.\Theta_1 \semu \Theta_2 \uneval^* \rho_1 \sqcup \rho_2,
e_0', p_1 \semu p_2
}
\]
and suppose that $\gamma, x \in v_0, \Theta \trrun^* v$ where $\Theta
\supseteq \Theta_1 \semu \Theta_2$.
Suppose that $\gamma' \simat{\rho_1 \sqcup \rho_2} \gamma$ and $v_0'
\simat{p_1 \semu p_2} v_0$ and $e_0'\sqgeq e_0$ are given, where $\gamma', x \in v_0',e_0' \red^*
v'$.  We need to show that $v' \simat{p} v$.  

Since $v_0' \simat{p_1 \semu p_2} v_0$, it is straightforward to show
that there must exist $v_1, v_2, v_1', v_2'$ such that $v_1 \semu v_2 =
v_0$, $v_1' \semu v_2' = v_0'$, $v_1 \simat{p_1} v_1$ and $v_2'
\simat{p_2} v_2$.  Furthermore, by Lemmas~\ref{lem:subset-replay} and~\ref{lem:subset-eval} we
know that $\gamma,x \in v_i, \Theta \trrun^* w_i$ and $\gamma',x \in
v_i', e_0' \red^* w_i'$ for some $w_1,w_2,w_1',w_2'$.  Therefore,
we can conclude that:

\[
\inferrule*
{
\gamma,x\in v_1,\Theta \trrun^* w_1\\
\gamma,x\in v_2,\Theta \trrun^* w_2\\
}
{
\gamma,x\in v_1\semu v_2,\Theta \trrun^* w_1\semu w_2\\
}
\qquad
\inferrule*
{
\gamma',x\in v_1',e_0' \red^* w_1'\\
\gamma',x\in v_2',e_0' \red^* w_2'\\
}
{
\gamma',x\in v_1'\semu v_2',e_0' \red^* w_1'\semu w_2'\\
}
\]
Furthermore, since $v_1 \semu v_2 = v_0$, by determinacy we know that
$w_1 \semu w_2 = v$ and similarly $w_1' \semu w_2 = v'$.  By induction
since $p_i \sqleq v_i$ and $\Theta_i \subseteq \Theta$,
we know that $w_i' \simat{p|_{\dom(\Theta_i)}} w_i$, so we know that
\[v' = w_1' \semu w_2' \simat{p|_{\dom(\Theta_i)} \semu
  p|_{\dom(\Theta_2)}} w_1 \semu w_2 = v\;.\]

To conclude, since $p \sqleq w_1 \semu w_2$ and $\dom(\Theta_1) \leq
w_1$ and $\dom(\Theta_2) \leq w_2$, it follows that $p =
p|_{\dom(\Theta_1)}\semu p|_{\dom(\Theta_2)}$, so we can conclude $w_1
\semu w_2 \simat{p} w_1' \semu w_2'$ as desired.
\end{itemize}
This exhausts all cases and completes the proof.
\end{proof}


\end{document}